\newcommand {\ignore} [1] {}
\def\Prob{\mathbb{P}\mathrm{r}}
\def\F{\mathbb{F}}
\def\G{\mathbb{G}}
\def\Exp{\mathbb{E}}
\def\reals{\mathbb{R}}
\def\nats{\mathbb{N}}
\newcommand{\cL}{\mathcal{L}}
\renewcommand{\Pr}[1]{\Prob\left[#1\right]}
\newcommand{\cH}{{\mathcal{H}}}
\providecommand{\myfloor}[1]{\lfloor {#1} \rfloor}
\newcommand{\polylog}{\mbox{polylog }}
\newenvironment{proof}{\noindent   {\bf Proof.}}{\hspace*{\fill}$\Box$\par\vspace{2mm}}
\newenvironment{proofof}[1]{\noindent {\bf Proof of #1.}}{\hspace*{\fill}$\Box$\par\vspace{2mm}}
\newtheorem{lemma}{Lemma}
\newtheorem{theorem}{Theorem}
\newtheorem{corollary}{Corollary}
\newtheorem{proposition}{Proposition}
\newtheorem{definition}{Definition}
\newcommand{\E}{\mathbb{E}}
\newcommand{\dk}[1]{\todo[inline,color=red!25!white]{Darek: #1}}
\newcommand{\pk}[1]{{\color{blue}Piotr: #1}}
\begin{document}

\title{
Online Sampling and Decision Making with Low Entropy
}

\author[1]{
Mohammad Taghi Hajiaghayi
}

\author[2]{Dariusz R. Kowalski}

\author[3]{Piotr Krysta}

\author[4]{Jan Olkowski}

\affil[1,4]{Department of Computer Science, University of Maryland, College Park, Maryland, USA. {\tt {hajiagha,olkowski}@umd.edu}.}
\affil[2,3]{School of Computer and Cyber Sciences, Augusta University, Augusta, Georgia, USA. {\tt {pkrysta,dkowalski}@augusta.edu}.}
\affil[3]{Department of Computer Science, University of Liverpool, Liverpool, UK.}

\date{}

\maketitle

\begin{abstract}
Consider the problem: we are given $n$ boxes, labeled $\{1,2,\ldots, n\}$ by an adversary, each containing a single number chosen from an unknown distribution; these $n$ distributions are not necessarily identical. We are also given an integer $k \leq n$. We have to choose an order in which we will sequentially open these boxes, and each time we open the next box in this order, we learn the number in the box. Once we reject a number in a box, the box cannot be recalled. Our goal is to accept $k$ of these numbers, without necessarily opening all boxes, such that the accepted numbers are the $k$ largest numbers in the boxes (thus their sum is maximized). This problem, a.k.a.~{\em free order multiple-choice secretary problem}, is one of the classic examples of online decision making problems.

A natural approach to solve such problems is to use randomness to sample randomly ordered elements, however, as indicated in several sources, e.g., Turan et al. NIST'15~\cite{turan2015random}, Bierhorst et al. Nature'18~\cite{bierhorst2018experimentally}, pure randomness is hard to get in reality. Thus, a pseudo-randomness has to be used, with a small entropy. 
  We show that with a very small
$O(\log\log n)$ entropy an almost-optimal approximation of the value of the set of $k$ largest numbers can be selected, with only a polynomially small additive error, for $k < \log n/\log \log n$. Our solution works for exponentially larger range of parameter $k$ comparing to previously known algorithms (STOC'15 \cite{KesselheimKN15}). 

We present an algorithm for this problem, which is provably and simultaneously near-optimal with respect to the achieved competitive ratio and the used amount of randomness. In particular, we construct a distribution on the orders with entropy $\Theta(\log\log n)$ such that a deterministic multiple-threshold algorithm gives a competitive ratio $1-O(\sqrt{\log k/k})$, for $k < \log n/\log \log n$. Our competitive ratio is simultaneously optimal and uses optimal entropy $\Theta(\log\log n)$, improving in three ways the previous best known algorithm, whose competitive ratio is $1 - O(1/k^{1/3}) - o(1)$. Our algorithm achieves a competitive ratio that is provably optimal for the problem, up to a $\sqrt{\log k}$ factor. Moreover, our solution works for exponentially larger range of parameter $k$. In addition, our algorithm is a simple \textit{deterministic} multiple-threshold algorithm, while the previous algorithms use additional randomness. We also prove a corresponding lower bound on the entropy of optimal and even close-to-optimal (with respect to competitive ratio) solutions for the multiple-choice secretary problem, matching the entropy of our algorithm. No previous lower bound on entropy was known for this problem if $k>1$.

The main takeaway idea is a new framework to derandomize threshold algorithms for secretary-like problems while preserving desirable features of these algorithms. The main building blocks of our framework include decompositions of the random success events of secretary algorithms into symmetric atomic events, derandomization of concentration bounds using special pessimistic estimators, dimensionality reductions, and success probability lifting techniques. We show the strength and wider applicability of our framework by significantly improving the previous results about constructing entropy-optimal distributions for the classic free order secretary problem. Our framework has a potential for applications to threshold algorithms for other problems.

\noindent
{\bf Keywords:}
free order, multiple-choice secretary, optimal stopping theory, online sampling and decision making algorithms
\end{abstract}



\newpage

\tableofcontents

\newpage

\section{Introduction}

Online decision making problems and their important part -- sampling, heavily depend on randomness. Efficient sampling has a long history, since 
Knuth, 
Vitter and others~\cite{knuth1981art, vitter1987efficient}, 
who studied how to generate a randomly ordered sample close to uniformly random. Since then, efficient and accurate sampling has become a fundamental problem in data science, see the recent survey by Mahmud et al.~\cite{mahmud2020survey}. However, as indicated in several prominent sources, such as National Institute of Standards and Technology (NIST)~\cite{turan2015random} or Nature~\cite{bierhorst2018experimentally},  pure randomness is hard to get, thus in practice pseudo-randomness with a small entropy has to be~used.
Recently, the problem has brought more awareness and detailed analysis also in the online 
community~\cite{buchbinder2023lossless, KesselheimKN15}.

Prominent examples of the online decision making problems are secretary and prophet-types of problems.
The most classical version of the secretary problems, introduced by statisticians in the 60s, asks for irrevocably hiring the best secretary among $n$ rankable applicants and was first analyzed in~\cite{Lindley61,dynkin1963optimum,ChowMRS64,GilbertM66, BHZ13, rubinstein2016beyond, RS-SODA17}. In the simplest problem's version, the goal is to find the best strategy when choosing from a sequence of randomly ordered applicants. A multiple-choice refers to the fact that the strategy is allowed to choose up to $k$ applicants.
The secretary problem has also played a fundamental role in advancing stopping theory, online algorithms, and various other fields~\cite{babaioff2007matroids, BabaioffIKK08, HajiaghayiKP04, EHLM17, hajiaghayi2007automated}. 
Only recently, the problem has attracted attention from the learning communities who considered modifications in which the algorithm is given a prediction about the best among applicants~\cite{antoniadis2020secretary}, a prediction interval for evaluation of each applicant~\cite{jiang2021online}, or the objective is to rank all applicants rather than choosing best subset of them~\cite{assadi2019secretary}.

\noindent
{\bf Free-order secretary.} 
We focus on a variant of the secretary problem called {\em the free-order secretary problem}. In this problem, we are given $n$ boxes labeled $\{1,2,\ldots, n\}$ by an adversary, each containing a single number chosen from an unknown distribution, where these  distributions are non-identical. We have to choose an order in which we will sequentially open these boxes. Each time we open the next box, we learn the number in the box. Then we can either stop opening the remaining boxes and accept the current number, in which case the game ends. Or we can continue opening the boxes until the last box, in which case we have to accept this last number. The goal is to accept the largest number in these $n$ boxes.

The free-order assumption reflects the fact that there is a third party that can decide the order of appearing applicants. This is in contrast to the original secretary problem where the order is uniformly random, but also generalizes the model as the third party can always randomly permute the applicants. On the other hand, the hardness of the problem remains in the fact that evaluations of candidates are still unknown to both the third party and an algorithm. For this model, different metrics for creating a fair order cheaper (i.e., an order for which there exist competitive algorithms) have been proposed. Kesselheim, Kleinberg, and Niazadeh~\cite{KesselheimKN15} were the first to ask this question for secretary problems, showing how to construct in polynomial-time, a probability distribution on orders (permutations of size $n$), with entropy $O(\log \log n)$ such that when the classic secretary algorithm ($k=1$) is executed on this distribution, it is successful with probability close to the optimal probability $\frac{1}{e}$. They also prove that if this distribution has entropy $o(\log \log n)$ then no $1$-secretary algorithm can achieve constant success probability. 
When the evaluations of applicants are not deterministic, but rather follow some probabilistic distributions, Arsenis, Drosis, and Kleinberg \cite{ArsenisDK21} prove that a small set of orders can be pre-computed, such that for any collection of $n$ distributions of evaluations, the classic secretary algorithm achieves a constant competitive ratio when sampled from the pre-computed distribution. 

Despite efforts, previous work does not provide exhaustive answers to the following important questions: 
\begin{quote}
{\em What is the minimum entropy of a random order distribution that allows to sample and choose $k$ elements in polynomial time with nearly-optimal competitive ratio?
How large $k$ could be?}
\end{quote}
We take a significant step toward answering both questions. We show that in the free-order model, one can sample from a polynomially constructible distribution of orders with entropy $O(\log\log{n})$, achieving a competitive ratio 
$1-\epsilon(k)$, where $\epsilon(k)$ part is polynomially smaller than that in \cite{KesselheimKN15} and only $\sqrt{\log{k}}$ factor from the absolute bound $\Theta(1/\sqrt{k})$ for any entropy \cite{kleinberg2005multiple}.
We also prove that no smaller entropy could yield good competitiveness, for {\em any} $k$.
Our construction allows for selecting up to $k = O(\log{n} / \log\log{n})$ candidates, a doubly exponential improvement over $k = O\left(\log \log \log 
 n\right)$~of~\cite{KesselheimKN15}. 

The main takeaway idea of our work is a new framework to derandomize threshold algorithms for secretary-like problems by constructing probability distributions on permutations, while preserving desirable features of these algorithms, e.g., the competitive ratio, algorithm’s entropy, or, the number of items to pick. The main building blocks of our framework include decompositions of the random success events of secretary algorithms into symmetric atomic events, derandomization of concentration bounds based on special pessimistic estimators, dimensionality reductions, and success probability lifting techniques. Our framework has a potential for applications to threshold algorithms for other problems.

We further show the strength of our framework by obtaining fine-grained results for optimal distributions of permutations for the free order secretary problem (equivalent to $1$-secretary), improving the results by Kesselheim, Kleinberg and Niazadeh~\cite{KesselheimKN15}. For entropy $\Theta(\log\log n)$, we precisely characterize the success probability of uniform distributions that is below, and close to, $1/e$, and construct such distributions in polynomial time. 

For an overview of our results see Table~\ref{tab:our results} and for their detailed description see Section~\ref{sec:our-results}. \ignore{The full version of all sections and thorough description of all our results can be found in the Appendix.
}

\ignore{
Last but not least, with maximum entropy, $\Theta(n \log(n))$, of the uniform distribution with support $n!$, we find the precise formula $OPT_n$ for the optimal success probability of any secretary algorithm. In addition, we prove that any secretary algorithm that uses any, not necessarily uniform distribution, has success probability at most $OPT_n$.  This improves the result of Samuels from 1981 \cite{Samuels81}.}

\begin{table}
    \centering
    \begin{tabular}{|c|c|c|c|c|c|}
         \hline
         &
         problem & comp.~ratio/succ.~prob. & entropy & 
         range of $k$ & ref. \\
         \hline
         \hline
         upper &
         $k$-secretary & $1 - 4 \sqrt{(\log{k})/k} $ & $O(\log\log{n})$ & 
         $\le \frac{\log{n}}{\log\log{n}}$ & Sec.~\ref{sec:application_k_secretary}\\
         bounds &
         $k$-secretary & $1-O(1/k^{1/3})$  & $O(\log\log{n})$ & 
         $\le (\log^{(3)}{n})^{\delta}$ &\cite{KesselheimKN15}\\
         &
         $1$-secretary & $\frac{1}{e} - \frac{6\log\log{n}}{e\log^{1/2}{n}} $  & $O(\log\log{n})$ & $1$ &Sec. 7.2 \\
         &
         $1$-secretary & $\frac{1}{e}-\omega((1/\log^{(3)}{n})^{c})$  & $O(\log\log{n})$ & $1$ &\cite{KesselheimKN15} \\
         \hline
         lower & 
         $k$-secretary & $\ge 1 - \varepsilon$ & $\ge \frac{1-\varepsilon}{9}\log\log{n}$ & 
         $\le \log^{a} n$ &Sec. 8\\
         bounds & 
         $k$-secretary & $\ge 1 - \varepsilon$ & $\min\{\log 1/\varepsilon,\log \frac{n}{2k}\}^*$ & 
         $<n/2$ 
         &Sec. 8\\
         &
         $1$-secretary & $\ge \frac{\cH(n)}{\log\log n}$ & $\ge \cH(n)^{**}$ & $1$ &\cite{KesselheimKN15} \\
         \hline
         \hline
         notes & \multicolumn{2}{l}{$a, \delta \in (0,1)$ - some constants} & \multicolumn{1}{l}{$c, \varepsilon \in (0,1)$ - any const.} 
         & \multicolumn{2}{l|}{$\log^{(3)}n = \log\log\log{n}$} \\
         \hline
    \end{tabular}
    \caption{Our main results compared to previous results. The upper bounds are with respect to the competitive ratio (success probability, resp.) for the $k$-secretary ($1$-secretary, resp.) problems and appropriate  range of $k$, for which the algorithm runs in polynomial time. Our lower bounds are the first lower bound results for the general $k$-secretary problem, for $k>1$.
    $^*$ is for wait-and-pick algorithms. \\ $^{**}$ $\cH(n)=o(\log\log n) \Rightarrow \mbox{success probability} \leq \frac{\cH(n)}{\log \log n} = o(1)$.
    }
\label{tab:our results}
\end{table}

\subsection{Preliminaries}
\label{sec:prel}

\noindent
{\bf Notation.} Let $[i] = \{1,2,\ldots,i\}$, and
$n$ be the number of arriving elements/items. Each of them has a unique index $i\in [n]$, and corresponding unique value $v(i)$ assigned to it by an adversary. The adversary knows the algorithm and the distribution of random arrival orders.

Let $\Pi_n$ denote the set of all $n !$ permutations of the sequence $(1,2,\ldots,n)$. A {\em probability distribution} $p$ over $\Pi_n$ is a function $p : \Pi_n \longrightarrow [0,1]$ such that $\sum_{\pi \in \Pi_n} p(\pi) = 1$. {\em Shannon entropy}, or simply, {\em entropy}, of the probability distribution $p$ is defined as $\cH(p) = - \sum_{\pi \in \Pi_n} p(\pi) \cdot \log(p(\pi))$, where $\log$ has base $2$\footnote{Consequently, all logarithms in the paper are of base $2$}, and if $p(\pi)=0$ for some $\pi \in \Pi_n$, then we assume that $0 \cdot \log(0) = 0$.
Given a distribution $\mathcal{D}$ on $\Pi_n$, $\pi \sim \mathcal{D}$ means that $\pi$ is sampled from $\mathcal{D}$.
A special case of a distribution, convenient to design efficiently, is when we are given a (multi-)set $\cL\subseteq \Pi_n$ of permutations, called a {\em support}, and random order is selected uniformly at random (u.a.r. for short) from this set; in this case we write $\pi \sim \mathcal{L}$. The entropy of this distribution is $\log |\cL|$. We call such an associated probabilistic distribution {\em uniform}, and otherwise {\em non-uniform}. We often abbreviate ``random variable" to r.v., and ``uniformly at random'' to u.a.r. We will routinely denote as $poly(x)$ a fixed polynomial in single variable $x$, where its form will always be clear from the context.

For a positive integer $k < n$, let $[n]_k$ be the set of all $k$-element subsets of $[n]$.
Given a sequence of (not necessarily sorted) {\em values} $v(1),v(2),\ldots,v(n) \in \reals$, we denote by $ind(k\rq{}) \in \{1,2,\ldots,n\}$ the index of the element with the $k\rq{}$th largest value, that is, the $k\rq{}$th largest value is $v(ind(k\rq{}))$.

\noindent
{\bf Problems.} In the {\em free order multiple-choice secretary} problem, we are given integers $k,n$, $1 \leq k \leq n$, and $n$ boxes labeled $[n] = \{1,2,,\ldots,n\}$ by an adversary, each $i \in [n]$ containing a single number $v(i)$, chosen by the adversary from an unknown distribution. The goal is to choose an order in which we will be sequentially opening these boxes. Each time we open the next box $i$ in the chosen order, we learn the number $v(i)$ and decide to accept $v(i)$ or not. This decision is irrevocable, and we cannot revisit any box. We have to accept $k$ of these numbers without necessarily opening all boxes, where the objective is to accept $k$ largest among them, i.e., to accept $k$ elements with maximum possible sum. Once we have accepted $k$ of the numbers, we stop opening the remaining boxes, if any. If we have accepted $j$ numbers so far and there are only $k-j$ remaining boxes, we have to open all these boxes and accept their numbers. This problem is also called the {\em free order $k$-secretary problem}, and when $k=1$, it is {\em free order secretary problem}.

\noindent
{\bf Competitive ratio.}
As is common, e.g., 
\cite{jaillet2013online}, we quantify the performance of an algorithm $A$ for the free order $k$-secretary problem by the {\em competitive ratio}, saying that $A$ is {\em $\alpha$-competitive} or has {\em competitive ratio} $\alpha \in (0,1)$ if it accepts $k$ numbers whose sum is at least $\alpha$ times the sum of the $k$ largest numbers in the $n$ boxes; the competitive ratio is usually in expectation with respect to randomization in the chosen random order.

\noindent
{\bf Wait-and-pick algorithms.}
An algorithm for the $k$-secretary problem is called {\em wait-and-pick} if it only observes the first $m$ values (position $m \in \{1,2,\ldots,n-1\}$ is a fixed observation {\em checkpoint}), selects one of the observed values $x$ ($x$ is a fixed {\em value threshold} or simply a {\em threshold}), and then selects every value of at least $x$ received after checkpoint position $m$; however, it cannot select more than $k$ values in this way, and it may also select the last $i$ values (even if they are smaller than $x$) provided it selected only $k-i$ values~before~that.

We also consider a sub-class of wait-and-pick algorithms, which as their value threshold $x$ choose the $\tau$-th largest value, for some $\tau \in \{1,2,\ldots,m\}$, among the first $m$ observed values. In this case we say that such wait-and-pick algorithm has a {\em statistic} $\tau$ and value $x$ is also called a statistic in this case.

The definition of the wait-and-pick algorithms applies also to the secretary problem, i.e., with $k=1$. It has been shown that some wait-and-pick algorithms are optimal in case of perfect randomness in selection of random arrival order, see \cite{GilbertM66}.

\noindent
{\bf Threshold algorithms.}
An extension of wait-and-pick algorithms, considered in the literature (c.f., the survey~\cite{GuptaSingla}), allows partition of the order into consecutive phases, and selecting potentially different threshold value for each phase (other than $1$) based on the values observed in the preceding phases. Thresholds are computed at checkpoints -- last positions of phases. These thresholds could also be set based on statistics. We call such algorithms {\em threshold algorithms}, or more specifically {\em multiple-threshold algorithms} if there are more than two phases.

\noindent
{\bf Related work.} The free order $k$-secretary problem is directly related to the {\em prophet inequality problem with order selection}, introduced by Hill \cite{Hill1983}.
Namely, the free order $1$-secretary problem can be modelled by Hill's problem where the numbers in the boxes are given by unknown single-point distributions, whereas the prophet inequality problem with order selection assumes arbitrary (un)known distributions. The free order $k$-secretary problem is also the free order matroid secretary problem, studied by Jaillet, Soto and Zenklusen \cite{jaillet2013online}, where the matroid is uniform. Recently, the {\em free order} (a.k.a., {\em best order}) variants have been studied extensively for the secretary and prophet problems, 
cf.,~\cite{abolhassani2017beating, ArsenisDK21,DBLP:journals/ior/BeyhaghiGLPS21,10.1145/1806689.1806733,DBLP:journals/mp/CorreaSZ21,jaillet2013online,LiuLPSS21,DBLP:conf/sigecom/LiuLPSS21,PT22}. The fundamental~question that this problem asks is: {\em What is the best order to choose to maximize the chance of accepting the largest number?}


For the classic secretary problem, asymptotically optimal algorithm with success probability $\frac{1}{e}$ was proposed, when the order is chosen uniformly at random from the set of all $n!$ permutations. Gilbert and Mosteller~\cite{GilbertM66} showed  with perfect randomness, no algorithm could achieve better probability of success than a simple {\em wait-and-pick} algorithm with checkpoint $m \in \{\lfloor n/e \rfloor,\lceil n/e \rceil\}$.

\section{Our results and techniques}
\label{sec:our-results}


\ignore{
\noindent
{\bf Algorithms.} A free order $k$-secretary algorithm can compute its order non-adaptively, or possibly recompute the order based on the history, i.e., adaptively. We focus in this paper on non-adaptively order algorithms, which we show is without loss of generality for the free order $1$-secretary problem. We leave this as an open question for the
free order $k$-secretary problem.

\begin{proposition}
 
\end{proposition}

\begin{proof}
 We prove in Proposition \ref{Thm:optimum_expansion}, Part 1 and 2, that the best way of computing non-adaptively the order of any (that is also the best) algorithm for the $1$-secretary problem is to use the uniform random order on $\Pi_n$. 
 
 Hill shows ...
\end{proof}
} 


\paragraph{Main contribution: algorithmic results.} 
Our main result is a tight result for optimal policy for the free order multiple-choice secretary problem under low entropy distributions. Below we assume that the adversarial values are such that $v(1) \geq v(2) \geq \cdots \geq v(n)$.

\begin{theorem}\label{thm:k_secretary_main_result}
For any $k < \log{n}/\log \log n$, there exists a multi-set of $n$-element permutations $\mathcal{L}_{n}$ such that a deterministic multiple-threshold algorithm for the free order multiple-choice secretary achieves an expected
$$1 - 4 \sqrt{\frac{\log{k}}{k}} $$
competitive ratio when it uses the order chosen uniformly at random from $\mathcal{L}_{n}$. The set $\mathcal{L}_{n}$ is computable in time $O(\text{poly } (n))$ and the uniform distribution over $\mathcal{L}_{n}$ has the optimal $O(\log\log{n})$ entropy.
\end{theorem}

\noindent
Detailed proof of this theorem can be found in Section~\ref{sec:application_k_secretary}: Theorem \ref{thm:k_secretary_main_result} is reformulated as Theorem~\ref{thm:k_secretary_main}. As a side result, by using the framework developed for the free-order $k$-secretary problem, we obtain the following fine-grained analysis results for the classical secretary problem.

\begin{theorem}\label{thm:1_secretary_results}
There exists a multi-set of $n$-element permutations $\mathcal{L}_{n}$ such that the wait-and-pick algorithm with checkpoint $\lfloor n/e \rfloor$ achieves
$$\frac{1}{e} - \frac{6\log\log{n}}{e\log^{1/2}{n}} $$
success probability for the free order $1$-secretary problem, when the adversarial elements are presented in the order chosen uniformly from $\mathcal{L}_{n}$. The set is computable in time $O(\text{poly } n)$ and the uniform distribution on it has $O(\log\log{n})$ entropy.
\end{theorem}

\noindent Proof of Theorem \ref{thm:1_secretary_results} can be found in Section \ref{sec:application_1_secretary} (as Theorem \ref{thm:1_secretary}).

\medskip

\noindent
{\bf Optimality of our results vs previous results.} 
The problem of finding optimal entropy distributions for secretarial problems has been introduced by \cite{KesselheimKN15}. In this paper, the authors, in the pursuit for the properties of a distribution that guarantee $s$-approximation ratio for $1$-secretary problem ($s$-admissible distributions for short), asked the following natural, yet important, questions.
\begin{quote}
\noindent \textbf{Question(s) ($1$ and $3$ in \cite{KesselheimKN15}):}
\it{"What natural properties of a distribution suffice to guarantee that it is $s$-admissible? What properties suffice to guarantee that it is $s$-optimal? [...] Is there an explicit construction that achieves the minimum entropy?"}
\end{quote}

\noindent While these questions have been already answered in~\cite{KesselheimKN15}, we found those answers unsatisfactory, and in our paper we improve these answers in several places. 
Starting from the most important result, our Theorem \ref{thm:k_secretary_main_result} shows a polynomial construction of a distribution that achieves a (almost)\footnote{The competitive ratio is optimal up to a factor of $\sqrt{\log k}$, see \cite{kleinberg2005multiple,GuptaSingla,AgrawalWY14} for a matching lower bound} optimal competitive ratio of $1-O(\sqrt{\log k/k})$, with minimal entropy $O(\log \log n)$, when $k < \log n/\log \log n$, 
for the $k$-secretary problem in the non-uniform arrival model. The non-uniform arrival model has been introduced in \cite{KesselheimKN15} and is subsumed by the free-order considered in the paper.
This improves over the previous best result of~\cite{KesselheimKN15}, who achieved competitive ratio $(1-O(1/k^{1/3})-o(1))$ using entropy $O(\log \log n)$ and constructing the distribution for $k=O((\log\log\log n)^{\epsilon})$ only. Note, that the construction in this paper improves the range of working values $k$ exponentially.
Optimality of the entropy follows by our new lower bounds in Theorem \ref{thm:lower-general} and \ref{thm:lower}, see the discussion after Theorem \ref{thm:lower} below. Such lower bounds were not known before for the $k$-secretary problem.
As for $1$-secretary problem, our Theorem~\ref{thm:1_secretary_results} improves, over doubly-exponentially, on the additive error to $\frac{1}{e}$ of 
$\omega(\frac{1}{(\log\log\log(n))^{c}})$ proposed in ~\cite{KesselheimKN15,KesselheimKN15-arxiv}, which holds for any positive constant $c < 1$, by achieving a polynomial-time construction with the additive error $\Theta(\frac{\log\log{n}}{\log^{1/2}{n}})$. Interestingly, we achieve our new results by defining and studying some natural properties of distributions on the orders, that are stronger than those in \cite{KesselheimKN15}.

\noindent Finally, we make a step toward answering another important question of \cite{KesselheimKN15}:

\begin{quote}
\noindent \textbf{Question ($4$ in \cite{KesselheimKN15}):}
\it{"Are the performance guarantees of other online algorithms in the uniform-random order model (approximately) preserved when one relaxes the assumption about the input order to the $(k, \delta)$-UIOP or the $(p, q, \delta)$-BIP?[...]"}
\end{quote}

\noindent On the above examples, it becomes clear that the properties of $(k, \delta)$-UIOP and $(p, q, \delta)$-BIP\footnote{See \cite{KesselheimKN15} for precise definitions.}, originally proposed in \cite{KesselheimKN15},  are too weak even to provide the optimal bounds in the cases of $1$-secretary and $k$-secretary problem. Partially it is caused by the fact that different combinatorial structures of different problems are usually hard to optimally capture by a generic set of constraints. To address this problem we propose a simple, but more problem-specific oriented framework for constructing low-entropy distributions that achieves approximation factors of uniform distributions in the non-uniform arrival model for a natural class of threshold algorithms. We expand on this framework below.

\medskip

\noindent
{\bf Technical contributions: algorithms/upper bounds.} We will explain here how we obtain our main result in Theorem \ref{thm:k_secretary_main_result} for the free-order $k$-secretary problem. The same explanations apply to obtain Theorem \ref{thm:1_secretary_results} for the $1$-secretary problem. However, problem specific differences of our framework when applied to the $k$-secretary ($1$-secretary, resp.) are described in Section \ref{sec:application_k_secretary} (\ref{sec:application_1_secretary}, resp.).

Suppose that $\mathcal{P}$ is a secretary-like problem with $n$ items, such as $k$-secretary or  $1$-secretary problem; we will focus here on $\mathcal{P}$ being the $k$-secretary problem. Our goal is to compute a small set $\mathcal{L} \subseteq \Pi_n$, such that there is an algorithm $\mathcal{A}$ for problem $\mathcal{P}$ that chooses $\pi \in \mathcal{L}$ u.a.r., $\pi \sim \mathcal{L}$, as its random order and achieves good expected competitive ratio (success probability, resp.) for the $k$-secretary ($1$-secretary, resp.) problem.
  Let a random
permutation $\pi$, $\pi \sim \Pi_n$, be a random order for the problem $\mathcal{P}$.
Algorithm $\mathcal{A}$ faces adversarial values $v(1) \geq v(2) \geq \cdots \geq v(n)$ with indices $ind(k') \in [n]$ for value $v(k')$, $k' \in [n]$, where $(ind(1),ind(2),\ldots,ind(n))$ is the adversarial permutation. Algorithm $\mathcal{A}$ considers these values sequentially in order $(\pi(ind(1)),\pi(ind(2)),\ldots,\pi(ind(n)))$.

The starting point of our framework is to define a very general family of events, which we call {\em atomic events}, see Definition \ref{def:atomic_event}, in the uniform probability space $(\pi \sim \Pi_n)$. Suppose that we are given a partition of the positions in $\pi$ into $t$ consecutive blocks (buckets) of positions and a mapping $f$ of the $k$ adversarial indices $\{ind(1),\ldots,ind(k)\}$ to the $t$ buckets. Let $\sigma$ be any ordering of the indices $\{ind(1),\ldots,ind(k)\}$. Given $f$ and $\sigma$, we define an atomic event as the event that contains all $\pi \in \Pi_n$ that obey the mapping $f$ and preserve the ordering $\sigma$.

Atomic event has some flavor of ``$k$-wise independent" and ``block independent" random permutation; in fact, our atomic event means that $\pi$ has both block-independence property (BIP) and uniform-induced-ordering property (UIOP) combined, introduced by Kesselheim, Kleinberg, and Niazadeh~\cite{KesselheimKN15}. The crucial property of atomic events 
are that they are very general and very symmetric. Kesselheim, Kleinberg, and Niazadeh~\cite{KesselheimKN15} show how to construct probabilistic distributions on $\Pi_n$ that are UIOP, have small entropy and lead to high competitive ratios for secretary problems. 
  We observe, however, that it 
might be difficult to construct such general distributions obeying atomic events, and thus also UIOP (BIP), with small entropy. The number of all atomic events is $t^k \cdot k!$, where $t$ was defined above. This implies that if we, e.g., would like to preserve a constant fraction of all atomic events, then the resulting entropy would be at least $\Omega(\log (t^k \cdot k!)) = \Omega(k \log k)$. This value of entropy is sufficiently small, i.e., $O(\log\log n)$, only for very small values of $k$, i.e., $k=O(\frac{\log\log n}{\log\log\log n})$.

Atomic events are appealing due to their symmetry, which could lead to simple algorithms and analysis. But, it might be impossible to preserve atomic events directly while keeping the entropy small. Our new approach is to balance these two tensions by introducing
{\em grouping} of atomic events. Atomic events are grouped into {\em positive events}, see Definition \ref{def:positive_event}, that model success probability of threshold algorithms.~%
The next step is a derandomization using Algorithm \ref{algo:Find_perm_2_111} and \ref{algo:Cond_prob_2_111}, that operate on atomic events to handle conditional probabilities, but explicitly preserve only positive events. 
However, the time of this construction is at least $n^k$.~To reduce the running time to polynomial and obtain Theorem~\ref{thm:k_secretary_main_result} (and Theorem
\ref{thm:1_secretary_results}, resp.),
in the final construction described in Section~\ref{sec:application_k_secretary} (Section \ref{sec:application_1_secretary}, resp.), we first use a dimension-reduction of the problem from $n$ to $\text{polylog}(n)$ via a single refined Reed-Solomon code,\footnote{A similar idea appeared in [35], however in our case we managed to significantly simplify and strengthen the idea.} and only after that apply the above  framework.
  We believe this new approach 
enables a general analysis, which should relatively easy generalize to derandomization of threshold algorithms for other problems.

\smallskip

We will now specify the technical details behind each step of our framework\footnote{We would like to note that while in Sections~\ref{section:prob_analysis_k-secr}-\ref{section:dim_reduction_2} we present our techniques on the highest level of generality, below, to carry the necessary intuition, we explain the use of the framework on the example of the multiple-choice secretary problem. The problem-specific applications to the multiple-choice secretary and $1$-secretary problems can be found in Section~\ref{sec:Applications}.}: 

\smallskip

\noindent
{\bf 1. Probabilistic analysis and defining positive events.} (Section \ref{section:prob_analysis_k-secr})\\
\noindent
{\bf 2. Decomposing positive event into atomic events.} (Section \ref{section:prob_analysis_k-secr})\\
\noindent
{\bf 3. Abstract derandomization of positive events via concentration bounds.} (Section \ref{sec:abstract_derand})\\
\noindent
{\bf 4. Dimension reduction and lifting positive events.} (Section \ref{section:dim_reduction_2})

\ignore{
We will choose for $\mathcal{A}$ a threshold algorithm and will first undertake its probabilistic analysis (step {\bf 1.}) assuming that $\mathcal{A}$ uses a random order $\pi \sim \Pi_n$, defining positive events which usually mean that $\mathcal{A}$ chooses some of these $n$ items (one with maximum value in case when $k=1$). The probabilistic analysis will use concentration bounds (such as Chernoff bound). Each positive event will be decomposed into a set of atomic events (step {\bf 2.}). Section \ref{section:prob_analysis_k-secr} defines positive and atomic events abstractly. Unlike positive events, atomic events are highly symmetric and thus help facilitate the derandomization of the concentration bounds (step {\bf 3.} in Section \ref{sec:abstract_derand}), helping in particular to compute the conditional probabilities, see Section \ref{sec:Cond_Prob_Thm_4_2_111}. This derandomization gives an algorithm to compute the set $\mathcal{L}$ in time roughly $O(n^k)$. Then, the dimension reduction from $n$ to $poly \log(n)$ reduces this running time to polynomial in $n$ (step {\bf 4.} in Section \ref{section:dim_reduction_2}). Problem specific descriptions of steps {\bf 1., 2.,} and {\bf 4.} are in Section \ref{sec:application_k_secretary}.
} 

\medskip

\noindent
{\bf 1.~Probabilistic analysis and defining positive events.} The first step is a probabilistic analysis of an algorithm $\mathcal{A}$ for the problem $\mathcal{P}$, assuming that $\mathcal{A}$ uses a random order $\pi \sim \Pi_n$. We need an algorithm whose success probability (expected competitive ratio) can be analyzed by probabilistic events modelled by atomic events. We chose as $\mathcal{A}$, a modified version of multiple-threshold algorithm presented in the survey by Gupta and Singla \cite{GuptaSingla}, see Algorithm \ref{algo:k_secr_algo_1}. We use the probabilistic analysis from \cite{GuptaSingla}, where they apply Chernoff bound to a collection of indicator random variables, which indicate if indices fall in an interval in a random permutation. These random variables are not independent, but they are {\em negatively associated}. We show an additional fact (Lemma \ref{lemma:Chernoff-per}) to justify the application of the Chernoff bound to these negatively-associated random variables. This analysis 
allows us to define positive events for Algorithm \ref{algo:k_secr_algo_1}, which mean that this algorithm picks the item with $i$th largest adversarial value, for $i \in [k]$. 

\noindent
{\bf 2.~Decomposing positive event into atomic events.} Here we show how to define any positive event from the previous step as union of appropriate atomic events. Once we prove that such a decomposition exists, it is easy to find it by complete enumeration over the full space of atomic events, because the size of this space, $t^k \cdot k!$, essentially only depends on $k$. Section \ref{section:prob_analysis_k-secr} defines positive and atomic events abstractly. The decomposition of a positive event for Algorithm \ref{algo:k_secr_algo_1} for the $k$-secretary problem into atomic events can be
found in Section \ref{sec:application_k_secretary}. Section \ref{sec:application_1_secretary} contains an analogous decomposition for the $1$-secretary problem.

\noindent
{\bf 3.~Abstract derandomization of positive events via concentration bounds.} Suppose that each positive event $P_{\gamma}$, defined in terms of atomic events $A \in Atomic(P_{\gamma})$, holds with probability which we denote $p_{\gamma}$. We prove by Chernoff bound (Theorem \ref{theorem:Chernoff_Positive_Events}, Section \ref{sec:abstract_derand}) that there exists a small multi-set $\mathcal{L}$, $|\mathcal{L}| = \ell$, of permutations that nearly preserves probabilities $p_{\gamma}$ of all positive events, when Algorithm \ref{algo:k_secr_algo_1} uses $\pi \sim \mathcal{L}$ as the random order. We show how to algorithmically derandomize this theorem 
(c.f., Theorem \ref{Thm:Derandomization_2_111}, Section \ref{sec:abstract_derand}) by the method of conditional expectations with a special pessimistic estimator for the failure probability, see Section \ref{sec:abstract_derand}. This estimator is derived from the proof of Chernoff bound and inspired by Young's \cite{Young95} oblivious rounding. Our abstract derandomization algorithm in Algorithm \ref{algo:Find_perm_2_111} uses as an oracle an algorithm (from Lemma \ref{lem:mult-computation-time}) that computes a decomposition of any positive event $P$ into atomic events 
$P = \, \, \stackrel{\cdot}{\cup} Atomic(P)$.~%
Algorithm \ref{algo:Find_perm_2_111} calls Algorithm \ref{algo:Cond_prob_2_111} for computing conditional probabilities. This crucially uses a symmetric nature of atomic events (positive events are usually not symmetric).  Algorithm \ref{algo:Cond_prob_2_111} for computing conditional probabilities for atomic events and its analysis are much simpler compared to directly computing conditional probabilities for positive events, and this also gives a promise for further applications.


\noindent
{\bf 4.~Dimension reduction and lifting positive events.} The above abstract derandomization algorithm has time complexity of order $n^k$, which is polynomial only for non-constant $k$. To make it polynomial, we design {\em dimension reductions} from $n$ to $poly \log(n)$. We build on the idea of using Reed-Solomon codes from Kesselheim, Kleinberg and Niazadeh~\cite{KesselheimKN15}, with two changes. First, we only use a single Reed-Solomon code in our dimension reduction, whereas they use a product of 2 or 3 such codes. Second, we replace their second step based on complete enumeration by our derandomization algorithm from  step {\bf 3.} To define the dimension reduction, we propose a new technical ingredient: an algebraic construction of a family of functions that have bounded number of collisions and their preimages are of almost same sizes up to additive $1$ (Lemma \ref{lem:Reed_Solomon_Construction}). We prove this lemma by carefully using algebraic properties of polynomials.
Our construction significantly improves and simplifies the constructions in \cite{KesselheimKN15} by adding the constraint on sizes of preimages and using only one Reed-Solomon code. The constraint on preimages, precisely tailored for the $k$-secretary problem, is crucial for proving the competitive ratios, and allows us to apply more direct techniques of finding permutations distributions over a set with reduced dimension. Our construction is computable in polynomial time and we believe that it is of independent interest.  The last step of our technique is to lift the lower-dimensional permutations back to the original dimension. That is, to prove that we lose only slightly on the probability of positive events when going from the low-dimensional permutations back to the original dimension $n$, see Section \ref{section:dim_reduction_2}.

\vspace*{-2mm}

\paragraph{Algorithms for free-order secretary problems.} Our algorithms can be seen as randomness-efficient algo\-rithms for the $k$-secretary and $1$-secretary problems in the free-order model. \ignore{They are in principle different from the free-order $k$-secretary and free-order prophet inequality algorithms. However,} Our algorithms can be used to solve the free-order $k$-secretary (and $1$-secretary) problem by assuming that their first stage is to choose the random order in which they will open the boxes from the constructed set $\mathcal{L} \subseteq \Pi_n$ u.a.r. Thus, they can be seen as randomized algorithms for the free-order $k$-secretary (and $1$-secretary) problem which use small amount of randomness. We have also mentioned and discussed the prophet inequality problems as only relevant to our problem. The prophet inequality problems rely on values being drawn from known distributions, which is a crucial aspect in prophet inequalities. On the other hand, unknown distributions, which usually are defined as just having $n$ unknown numbers rather than distributions, is crucial for the secretary problems.

\paragraph{Lower bounds.}
We are the first to prove two lower bounds on entropy of $k$-secretary algorithms achieving expected competitive ratio $1-\epsilon$.
These proofs can be found in Section~\ref{sec:lower-bounds}.
The first one is for any algorithm, but works only for $k\le \log^a n$ for some constant $a\in (0,1)$.

\begin{theorem}
\label{thm:lower-general}
Assume $k\le \log^a n$ for some constant $a\in (0,1)$.
Let $\epsilon\in (0,1)$ be a given parameter.
Then, any algorithm (even fully randomized) solving $k$-secretary problem while drawing permutations from some distribution on $\Pi_n$ with an entropy $H\le \frac{1-\epsilon}{9} \log\log n$, cannot achieve the expected competitive ratio of at least $1-\epsilon$ for sufficiently large $n$. 
\end{theorem}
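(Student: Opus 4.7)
The plan is to combine a Yao-type reduction over the algorithm's internal randomness with a Markov-style concentration of $\mathcal{D}$ on a small ``heavy'' support, and then adapt the covering/counting strategy behind the Kesselheim--Kleinberg--Niazadeh $1$-secretary lower bound \cite{KesselheimKN15} to the $k$-secretary setting. The thrust is: a low-entropy $\mathcal{D}$ behaves, on almost all of its mass, like a uniform distribution over a polylogarithmic ``effective support'' $S$; across $\pi \in S$ a deterministic algorithm has at most $k\cdot |S|$ ever-selectable indices; an adversary can place the top-$k$ indices outside this tiny set and force the algorithm to miss them with constant probability.

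First, I would reduce to deterministic algorithms whose only source of randomness is $\pi \sim \mathcal{D}$: by Yao's minimax principle applied to the internal randomness (not to $\mathcal{D}$), it suffices to produce an input distribution $D_v$ on which every such deterministic algorithm achieves expected ratio strictly less than $1-\epsilon$. Next, applying Markov's inequality to the random variable $-\log p(\pi)$ (whose mean is $H$) shows that for any $\delta \in (0,1)$, with probability at least $1-\delta$ the sampled $\pi$ lies in the heavy set $S_\delta = \{\pi : p(\pi) \ge 2^{-H/\delta}\}$, which satisfies $|S_\delta|\le 2^{H/\delta}$. Choosing $\delta = \Theta(1-\epsilon)$ and plugging in $H \le \frac{1-\epsilon}{9}\log\log n$ yields $|S_\delta| \le (\log n)^{c}$ for some small constant $c = c(\epsilon) < 1$, i.e.\ a polylogarithmic effective support.

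Second, I would construct an input distribution $D_v$ on which the deterministic algorithm, restricted to $\pi \in S_\delta$, cannot achieve ratio $1-\epsilon$. The key structural observation is that for each $\pi \in S_\delta$ the algorithm's output is a deterministic subset $S_A(v,\pi) \subseteq [n]$ of size at most $k$, so the entire set of indices ever selected across the heavy support, $T(v) = \bigcup_{\pi\in S_\delta} S_A(v,\pi)$, has size at most $k\cdot|S_\delta|\le k\cdot (\log n)^c$. Under $k \le \log^a n$ with $a\in(0,1)$, this is $o(n)$, leaving ample freedom in $[n]$ to place the top-$k$ indices outside $T(v)$. A counting/swapping argument in the spirit of KKN then locates, within the support of $D_v$, inputs $v$ whose top-$k$ indices lie almost entirely outside $T(v)$; for every heavy $\pi$ the algorithm then misses most of the top-$k$, contributing at most $\delta+o(1)$ to the expected ratio, which contradicts the $1-\epsilon$ target for appropriate $\delta$ and large enough $n$.

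The main obstacle is the ``top-$k$ outside $T(v)$'' step, because $T(v)$ itself depends on $v$ through the algorithm's online observations, so one cannot simply fix $T$ and then place top-$k$ indices outside it. The standard resolution, to be adapted to the $k$-secretary setting, is that the algorithm's decisions depend only on the \emph{relative rank pattern} of observed prefixes, so relabelings of the underlying values that preserve the rank patterns on prefixes but move top-$k$ labels leave $T(v)$ essentially unchanged; one then runs a probabilistic/pigeonhole argument over such relabelings to find a realization of $D_v$ for which the top-$k$ avoid $T(v)$. Making the constants land at exactly $\frac{1-\epsilon}{9}\log\log n$ requires balancing three sources of slack --- the $\delta$ lost from Markov, the $o(1)$ lost in moving from $n$ to large $n$, and the combinatorial slack from the swapping step --- and the $\tfrac{1}{9}$ suggests an allocation like $\delta \approx \tfrac{1-\epsilon}{3}$ with two further factors of $3$ absorbed by the rank-pattern and counting components.
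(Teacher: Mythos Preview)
Your first reduction (Yao over internal randomness, then Markov on $-\log p(\pi)$ to get a heavy support of size $2^{H/\delta}$) is fine and is essentially the paper's Lemma~\ref{lem:entropy-support}. The gap is entirely in your second step.

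The ``$T(v)=\bigcup_{\pi\in S_\delta}S_A(v,\pi)$ has size $\le k|S_\delta|$, now place the top-$k$ outside $T(v)$'' argument does not work, and your proposed patch via rank-pattern-preserving relabelings is self-defeating. The algorithm's picks $S_A(v,\pi)$ depend on $v$ precisely through the relative-rank pattern of prefixes; if a relabeling preserves all prefix rank patterns under every $\pi\in S_\delta$, then in particular it preserves which index carries the maximum (and every other rank), so the top-$k$ indices do not move at all. Conversely, any relabeling that moves the top-$k$ indices changes some prefix rank pattern and hence changes $T(v)$. So the circularity you flagged is real and your suggested resolution does not break it. A bare counting argument of this kind cannot control an adaptive $T(\cdot)$.

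The paper takes a completely different route for this step. It restricts the adversary to a \emph{semitone sequence} $(x_1,\dots,x_s)$ of length $s=\frac{\log n}{\ell+1}$ with respect to the heavy support $\Pi$ (such a sequence exists for any $\ell$ permutations), assigns geometrically separated values from $V^*=\{1,\frac{k}{1-\epsilon},(\frac{k}{1-\epsilon})^2,\dots\}$ to the $x_i$'s and negligible value $\frac{1-\epsilon}{k}$ elsewhere, and then --- this is the technical core, Lemma~\ref{lem:lower-random-adv} --- designs a \emph{specific random} assignment on the semitone sequence (recursively: $v(x_t)$ is the median of the current value pool $V$; with probability $1/t$ it is the max of the remaining, with probability $(t-1)/t$ it is not) and proves by a two-case induction on $t$ and the number of remaining picks $i$ that $\Pr[B_t\mid A_t^i]\le i/t$, hence the algorithm selects the maximum with probability at most $k/s$. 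Yao then gives a worst-case hard assignment. Two further points you missed: (i) the geometric values reduce ``competitive ratio $\ge 1-\epsilon$'' to ``selects the single maximum'', not to ``selects all top-$k$''; (ii) the final parameter balance sets $\ell=\sqrt{\log n/k}-1$ so that $k/s + 8H/\log(\ell-3)<1-\epsilon$ when $H\le\frac{1-\epsilon}{9}\log\log n$ and $k\le\log^a n$.
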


The second lower bound on entropy is for the wait-and-pick algorithms for any $k<n/2$.

\begin{theorem}
\label{thm:lower}
Any wait-and-pick algorithm solving $k$-secretary problem, for $k<n/2$, with expected competitive ratio of at least $(1-\epsilon)$ requires entropy $\Omega(\min\{\log 1/\epsilon,\log \frac{n}{2k}\})$.
\end{theorem}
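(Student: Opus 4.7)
I would prove the bound by exhibiting a single-spike adversarial value assignment that forces every atom of $\D$ to have mass at most $\epsilon$; the entropy bound then follows from the elementary inequality $H(\D) \geq \log(1/\max_\pi p(\pi)) \geq \log(1/\epsilon) \geq \min\{\log(1/\epsilon), \log(n/(2k))\}$.

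\textbf{Single-spike adversary.} Fix a wait-and-pick algorithm with time threshold $m \in \{1,\ldots,n-1\}$ and statistic $\tau$, drawing permutations from $\D$. Define $B_\pi := \pi([m])$ and the marginal $q_i := \Prob_{\pi \sim \D}[i \in B_\pi]$. For an arbitrary item $i^* \in [n]$ I would use the assignment $v(i^*) = 1$, $v(j) = \delta$ for $j \neq i^*$, and send $\delta \to 0^+$. The $m$ observed values are either all $\delta$ (when $i^* \notin B_\pi$) or consist of a single $1$ together with $m-1$ copies of $\delta$ (when $i^* \in B_\pi$), so the $\tau$-th largest threshold $x$ lies in $\{\delta, 1\}$. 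Under the standard strict-inequality picking convention, item $i^*$ is selected exactly when $\pi^{-1}(i^*) > m$ --- in this case $x = \delta$ and $v(i^*) = 1 > \delta$ --- and otherwise no item qualifies. With $\mathrm{OPT} = 1 + (k-1)\delta$, the expected competitive ratio tends to $1 - q_{i^*}$ as $\delta \to 0$, so the $(1-\epsilon)$-competitiveness hypothesis forces $q_{i^*} \leq \epsilon$. Since $i^*$ was arbitrary, $\max_i q_i \leq \epsilon$.

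\textbf{From marginals to entropy, and the main obstacle.} Let $\pi^* \in \arg\max_\pi p(\pi)$. Because $|B_{\pi^*}| = m \geq 1$, some item $i \in B_{\pi^*}$ exists, and $q_i \geq p(\pi^*)$ directly from the definition of $q_i$. Hence $\max_\pi p(\pi) \leq \max_i q_i \leq \epsilon$, and the Shannon entropy satisfies
\begin{equation*}
H(\D) \;=\; \sum_\pi p(\pi)\log\frac{1}{p(\pi)} \;\geq\; \log\frac{1}{\max_\pi p(\pi)} \;\geq\; \log\frac{1}{\epsilon} \;\geq\; \min\!\left\{\log\frac{1}{\epsilon},\; \log\frac{n}{2k}\right\}.
\end{equation*}
The main subtlety lives in Step 1: read literally on the purely binary instance $v(j) = 0$ for $j \neq i^*$, the rule ``select every value of at least $x$'' would let the algorithm waste its $k$-budget on worthless items, and the argument would collapse to the trivial $\epsilon \gtrsim 1 - k/n$. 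The continuous perturbation $v(j) = \delta > 0$ (paired with $\delta \to 0$) together with the standard strict-inequality wait-and-pick convention used throughout the $k$-secretary literature avoids this cleanly; if one prefers to keep the rule literal, the same analysis goes through with distinct tiny perturbations $\delta_j = j\delta/n$, which break all ties a priori and make the picking convention irrelevant.
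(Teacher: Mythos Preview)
Your argument is correct and takes a genuinely different route from the paper. The paper proceeds by a case split on the observation threshold $m$. When $m\ge k$ it builds, by greedily picking $\epsilon k$ fresh elements from the left block $B_\pi=\pi([m])$ of each permutation in turn, a bad $k$-set $K$ such that every permutation hides at least $\epsilon k$ elements of $K$ before position $m$; filling $K$ therefore requires at least $1/\epsilon$ permutations, hence $\ell\ge 1/\epsilon$. When $m<k$ it instead argues that $\ell(m+k)\ge n$ (otherwise some item avoids positions $1,\ldots,m+k$ in every permutation, and spiking that item defeats the algorithm), giving $\ell\ge n/(2k)$. The $\min$ in the stated bound is what survives after combining the two cases.

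Your single-spike argument sidesteps the case split entirely. The only structural fact you use is that an item sitting in the observation block $B_\pi$ is never selected, which already gives $\Prob[i^*\text{ selected}]\le 1-q_{i^*}$; together with the $(1-\epsilon)$-competitiveness hypothesis on the spike instance (letting $\delta\to 0$) this forces $q_{i^*}\le\epsilon$ for every $i^*$, hence $\max_\pi p(\pi)\le\epsilon$ and entropy at least $\log(1/\epsilon)$ --- in fact a bound without the $\min$, so slightly sharper than what the theorem states. One small remark: your phrase ``selected exactly when $\pi^{-1}(i^*)>m$'' overclaims (under the paper's $\ge$-threshold convention the algorithm may exhaust its $k$-budget on the first items after position $m$ and miss $i^*$), but only the one-sided inclusion $\{i^*\text{ selected}\}\subseteq\{i^*\notin B_\pi\}$ is actually needed, and that inclusion is immediate from the definition of the observation phase. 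The paper's approach, by contrast, works directly with the support size and surfaces the separate $n/(2k)$ regime explicitly.
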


By Theorem~\ref{thm:lower-general}, entropy $\Omega(\log\log n)$ is necessary for any algorithm to achieve even a constant competitive ratio $1-\epsilon$, for $k=O(\log^a n)$, where $a<1$. It implies that our upper bound in Theorem~\ref{thm:k_secretary_main_result} is tight.
Theorem~\ref{thm:lower} 
implies that entropy $\Omega(\log\log n)$ is necessary for any wait-and-pick algorithm to achieve a close-to-optimal competitive ratio $1-\Omega(\frac{1}{k^a})$, for {\em any} $k<n/2$, where constant $a\le 1/2$.  
Even more, in such case entropy $\Omega(\log k)$ is necessary, which could be $\Omega(\log n)$
for $k = poly(n)$.

\medskip

\noindent
{\bf Technical contributions: lower bounds.} 
The lower bound for all algorithms builds on the concept of semitone sequences with respect to the set of permutation used by the algorithm. It was proposed in \cite{KesselheimKN15} in the context of $1$-secretary problem. Intuitively, in each permutation of the set, the semitone sequence always positions next element before or after the previous elements of the sequence (in some permutations, it could be before, in others -- after).
Such sequences proved useful in cheating $1$-secretary algorithms by assigning different orders of values, but seemed hard to extend to the general $k$-secretary problem.
The reason is that, in the latter,
there are 
two challenges requiring new concepts. First, there are $k$ picks of values by the algorithm, instead of one --
this creates additional dependencies in probabilistic part of the proof (c.f., Lemma~\ref{lem:lower-random-adv}), which we overcome by introducing more complex parametrization of events and inductive proof.
Second, the algorithm does not always have to choose maximum value to guarantee competitive ratio $1-\epsilon$, or can still choose the maximum value despite of the order of values assigned to the semitone sequence -- to address these challenges, we not only consider different orders the values in the proof (as was done in case of $1$-secretary in~\cite{KesselheimKN15}), but also expand them in a way the algorithm has to pick the largest value but it cannot pick it without considering the order (which is hard for the algorithm working on semitone sequences). It leads to so called hard assignments of values and their specific distribution in Lemma~\ref{lem:lower-random-adv} resembling biased binary search, see details in Section~\ref{sec:lower-general}.

The lower bound for wait-and-pick algorithms, presented in Section~\ref{sec:lower-wait-and-pick}, is obtained by constructing a virtual bipartite graph with neighborhoods defined based on elements occurring on left-had sides of the permutation 
checkpoint, and later by analyzing relations between sets of elements on one side of the graph and sets of permutations represented by nodes on the other side of the graph.


\subsection{Complementary results}
\noindent
We prove in Proposition \ref{Thm:optimum_expansion} (Section \ref{section:lb_classic_secr}), a characterization of the optimal success probability $OPT_n$ of secretary algorithms. When the entropy is maximum, $\Theta(n \log(n))$, of the uniform distribution on the set of $n!$ permutations, we find the precise formula for the optimal success probability of the best secretary algorithm, $OPT_n = 1/e + c_0/n + \Theta((1/n)^{3/2})$, $c_0 = 1/2 - 1/(2e)$ (Part 1, Proposition \ref{Thm:optimum_expansion}). We prove that any secretary algorithm that uses any, not necessarily uniform distribution, has success probability at most $OPT_n$ (Part 2, Proposition \ref{Thm:optimum_expansion}). This improves the result of Samuels \cite{Samuels81}, who proved that under uniform distribution no secretary algorithm can achieve success probability of $1/e + \varepsilon$, for any constant $\varepsilon > 0$. Interestingly, no uniform probability distribution with small support and entropy $< \log (n)$ can have success probability above $1/e$ (Part 3, Proposition~\ref{Thm:optimum_expansion}). 

\vspace*{1ex}
\noindent
\ignore{
\noindent
{\bf Our results vs previous results.} Proof of Theorem \ref{thm:1_secretary_results} can be found in Section \ref{sec:application_1_secretary} (as Theorem \ref{thm:1_secretary}). The original analysis in \cite{Lindley61,dynkin1963optimum} shows that this algorithm's success probability with 
full $\Theta(n \log n)$ entropy is at least $1/e - 1/n$. Theorem \ref{thm:1_secretary_results} uses
optimal $O(\log \log (n))$ entropy by the lower bound in \cite{KesselheimKN15}. It also improves,
over {\em doubly-exponentially}, on the additive error to $OPT_n$ of 
$\omega(\frac{1}{(\log\log\log(n))^{c}})$ due to Kesselheim, Kleinberg and Niazadeh~\cite{KesselheimKN15,KesselheimKN15-arxiv}, which holds for any positive constant $c < 1$. Our results for $1$-secretary problem in Theorem \ref{thm:1_secretary_results} and described above this theorem, are related to the results of Arsenis, Drosis and Kleinberg \cite{ArsenisDK21}. 
They present a fine-grained analysis of the threshold prophet ratio (analog of our competitive ratio) for the prophet inequality problem where the algorithm chooses the order to open the $n$ boxes. They prove how the (constant) prophet ratio depends on the size of support of orders among which the algorithm can choose its order, which is analog to how the success probability depends on the entropy in the free order $1$-secretary problem.

\vspace*{1ex}
\noindent
{\bf Technical contributions.} We obtain Theorem \ref{thm:1_secretary_results} by the same techniques {\bf 1.}-{\bf 4.} used for the $k$-secretary problem. To apply these techniques, we develop problem-specific parts for the $1$-secretary problem: probabilistic analysis, leading to the definition of positive events and decomposition of positive events into atomic events.
For the probabilistic analysis, which estimates the additive error of the success probability, we use a parameter $k \in \{2,3, \ldots,n\}$ similar to $k$-secretary, where $k$ corresponds to $k$ largest adversarial values. We characterize precise probability of success of any wait-and-pick algorithm with single checkpoint $m$ by analyzing how the set of $k$ largest adversarial values is located relative to $m$. The positive event means that this algorithm picks the largest adversarial value item, depending on how the items with other values are located relative to $m$. To model this event by atomic events, interestingly, unlike the $k$-secretary problem the injection $\sigma$ not only has to obey elements' order in different buckets, but also in the same bucket. The dimension reduction and lifting steps are similar to the $k$-secretary. 
}

\section{Further related work}
In this section, we present recent related literature on important online stopping theory concepts such as secretary, prophet inequality, and  prophet secretary. 
\vspace{-0.11in}
\paragraph{Secretary Problem.}
In this problem, we receive a sequence of randomly permuted numbers
in an online fashion. Every time we observe a new number, we have
the option to stop the sequence and select the most recent number.
The goal is to maximize the probability of selecting the maximum of
all numbers. The pioneering work of Lindley~\cite{Lindley61} and Dynkin~\cite{dynkin1963optimum}
present a simple but elegant algorithm that succeeds with
probability $1/e$. In particular, they show that the best strategy, a.k.a.~wait-and-pick, is
to skip the first $1/e$ fraction of the numbers and then take the
first number that exceeds all its predecessors. Although simple,
this algorithm specifies the essence of best strategies for many
generalizations of secretary problem. Interestingly, Gilbert and Mosteller~\cite{GilbertM66} show that when the values are drawn i.i.d.~from a known
distribution, there is a wait-and-pick algorithm that selects the best value with probability approximately 0.5801 (see~\cite{DBLP:conf/aistats/EsfandiariHLM20} for generalization to non-identical distributions). 

The connection between secretary problem and online auction
mechanisms has been explored by the 
work of Hajiaghayi, Kleinberg and
Parkes~\cite{HajiaghayiKP04}. In particular, they introduce the
{\em multiple-choice value version} of the problem, the $k$-secretary problem, in which the goal
is to maximize the expected sum of the selected numbers, with applications to limited-supply online auctions.
Kleinberg~\cite{kleinberg2005multiple} later presents a tight
$(1-O(\sqrt{1/k}))$-competitive algorithm for $k$-secretary resolving an open problem of~\cite{HajiaghayiKP04}. The
bipartite matching variant is studied by Kesselheim et
al.~\cite{kesselheim2013optimal} for which they give a
$1/e$-competitive solution. 

Babaioff et al.~\cite{babaioff2007matroids} consider the
{\em matroid} version and give an $\Omega(1/\log k)$-competitive
algorithm when the set of selected items have to be an independent
set of a rank $k$ matroid. After the introduction of the problem, many papers have improved the competitive ratio which led eventually to $\Omega(1/\log\log{k})$-competitive algorithms obtained independently in \cite{DBLP:conf/soda/FeldmanSZ15, DBLP:conf/focs/Lachish14}.
Other generalizations of secretary problem such as the submodular variant has been initially studied by the Bateni, Hajiaghayi, and ZadiMoghaddam~\cite{BHZ13} and  Gupta, Roth, Schoenebeck, and
Talwar~\cite{DBLP:conf/wine/GuptaRST10}.

\vspace{-0.11in}
\paragraph{Prophet Inequality.} In prophet inequality, we are initially given $n$ distributions for each of the numbers in the sequence.
Then, similar to the secretary problem setting, we observe the
numbers one by one, and can stop the sequence at any point and
select
 the most recent observation. The goal is to maximize the ratio between the expected value of the selected number
 and the expected value of the maximum of the sequence.
  This problem was first introduced by Krengel-Sucheston~\cite{krengel1977semiamarts,krengel1978semiamarts},
  for which they gave a tight $1/2$-competitive algorithm. Later on,
the research investigating the relation between prophet inequalities
and online auctions was initiated by Hajiaghayi, Kleinberg, and Sandholm~\cite{hajiaghayi2007automated}. 
In
particular this work considers the multiple-choice variant
of the problem in which a selection of $k$ numbers is allowed and
the goal is to maximize the ratio between the sum of the selected
numbers and the sum of the $k$ maximum numbers. The best result on
this topic is due to Alaei~\cite{alaei2014bayesian} who gives a
$(1-{1}/{\sqrt{k+3}})$-competitive algorithm. This factor almost
matches the lower bound of $1-\Omega(\sqrt{1/k})$ already known from
the prior work of Hajiaghayi et al.~\cite{hajiaghayi2007automated}. Motivated
by applications in online ad-allocation, Alaei, Hajiaghayi and Liaghat~\cite{AHL13} study the bipartite matching variant
of prophet inequality and achieve the tight factor of $1/2$.
Feldman et al.~\cite{feldman2015combinatorial} study the
generalizations of the problem to combinatorial auctions in which
there are multiple buyers and items and every buyer, upon her
arrival, can select a bundle of available items. Using a posted
pricing scheme they achieve the same tight bound of $1/2$.
Furthermore, Kleinberg and Weinberg~\cite{KW-STOC12} study the problem
when a selection of multiple items is allowed under a given set of
matroid feasibility constraints and present a $1/2$-competitive
algorithm. Yan \cite{yan2011mechanism} improves this bound to
$1-1/e\approx 0.63$ when the arrival order can be determined by the
algorithm. More recently Liu, Paes Leme, P{\'{a}}l, Schneider, and
Sivan~\cite{DBLP:conf/sigecom/LiuLPSS21} obtain the first Efficient PTAS (i.e., a $1+\epsilon$ approximation for any constant $\epsilon> 0$) for the free order (best order) case when the arrival order can be determined by the algorithm (the task of selecting
the optimal order is NP-hard~\cite{DBLP:conf/sigecom/0001SZ20}).
In terms of competitive ratio for the free order  case, very recently, Peng and Tang~\cite{PT22} (FOCS'22),
obtain a 0.725-competitive algorithm, that substantially improves the state-of-the-art 0.669
ratio by Correa, Saona and Ziliotto~\cite{DBLP:journals/mp/CorreaSZ21}.

 Prophet inequality (and the secretary problem) has also been studied beyond a matroid or a matching. For the intersection of $p$ matroids, Kleinberg and Weinberg~\cite{KW-STOC12} gave an $O(1/p)$-competitive prophet inequality. Later, D\"{u}tting and Kleinberg~\cite{dutting2015polymatroid} extended this result to polymatroids. Rubinstein~\cite{rubinstein2016beyond} and Rubinstein and Singla~\cite{RS-SODA17} consider prophet inequalities and secretary problem for arbitrary downward-closed set systems. Babaioff et al.~\cite{babaioff2007matroids} show a lower bound of $\Omega(\log n \log\log n)$ for this problem. 
Prophet inequalities have also been studied for many  combinatorial
optimization problems, e.g.,
\cite{DEHLS17,garg2008stochastic,gobel2014online,Meyerson-FOCS01}.
\paragraph{Prophet Secretary.}~The prophet inequality setting assumes either the buyer values or the buyer arrival order is chosen by an adversary. In practice, however, it is
often conceivable that there is no adversary acting against you. Can
we design better strategies in such settings? The  {\em prophet
secretary} model introduced by the Esfandiari, Hajiaghayi, Liaghat, and Monemizadeh~\cite{EHLM17}  is a natural way to consider
such a process assuming both {\em stochastic knowledge} about
buyer values and uniformly random buyers' arrival order.
The goal is to design a strategy that maximizes expected accepted
value, where the expectation is over the random arrival order, the
stochastic buyer values, and internal randomness of the
strategy.
 
This work introduced a natural combination of the fundamental
problems of prophet and secretary. More formally, in the \textit{prophet secretary} problem we are given $n$ distributions
$\mathcal{D}_1,\ldots,\mathcal{D}_n$ from which $X_1,\ldots,X_n$ are drawn. After applying a random permutation $\pi(1),\ldots,\pi(n)$ the values of the items are presented in an online fashion:
in step $i$ both $\pi(i)$ and $X_{\pi(i)}$ are revealed. The goal is to stop the sequence to maximize the expected
value\footnote{Over all random permutations and draws from distributions} of the most recent item. Esfandiari, Hajiaghayi, Liaghat, and Monemizadeh~\cite{EHLM17}
provide an algorithm that achieves a competitive factor of $1-1/e$ when $n$ tends to infinity. Beating the factor of 
$1-\frac{1}{e}\approx 0.63$ substantially for the prophet secretary problems, however, has been challenging. A recent result by Azar et al.~\cite{ACK18} and Correa et
al.~\cite{CorreaSZ19} improves this bound to
$1-\frac{1}{e}+\frac{1}{30}\approx 0.665$. For the special case of {\em single item
i.i.d.}, Hill and Kertz~\cite{hill1982comparisons}~give a
characterization of the hardest distribution, and Abolhasani et
al.~\cite{abolhassani2017beating} show that one can get a
$0.73$-competitive ratio. Recently, this factor has been improved to the
tight bound of $0.745$ by Correa et al.~\cite{correa2017posted}. However finding
the tight bound for the general prophet secretary problem still remains the main~open~problem.

\section{Probabilistic atomic and positive events for threshold algorithms}\label{section:prob_analysis_k-secr}

Let $\Omega = (\Pi_n,\mu)$ denote the probabilistic space of all $n!$ permutations of $n$ elements with uniform probabilities, i.e., $\Prob[\pi \in \Pi_n] = \mu(\pi) = 1/n!$. In the next definition we will define atomic events in space $\Omega$. %
\ignore{
\begin{definition}[Atomic events]
Given any integer $t \in \{2,3,\ldots, n\}$, let $1 = \tau_0 < \tau_{1} < \tau_{2} < \cdots < \tau_{t-1} < \tau_t = n$, $\forall j \in [t-1] : \tau_j \in [n]$, be a set of fixed thresholds, $\mathcal{T} = \{\tau_{1}, \tau_{2}, \cdots,\tau_{t-1}\}$, $\forall j \in [t-1] : \tau_j \in [n]$. Let $K \subseteq [n]$ be any subset of $|K| = k \in [n]$ indices. Let also $\psi : K \longrightarrow [t]$ be any mapping of indices from set $K$ to $t$ "time intervals" $\{\tau_0,\ldots,\tau_1\}$, $\{\tau_1+1,\dots,\tau_2\}$, $\{\tau_2+1,\dots,\tau_3\}$, $\cdots$, $\{\tau_t+1,\dots,\tau_t\}$, such that
$$
  \forall j \in [t] : |\psi^{-1}(\{j\})| \leq \tau_j - \tau_{j-1} + 1_{j > 1},
$$ where $1_{j > 1} = 1$ if $j>1$ and $1_{j > 1} = 0$ otherwise.

The following event, called an atomic event, is important for the threshold algorithms:
$$
 A_{\mathcal{T},K,\phi} = \{ \pi \in \Omega \,\, | \,\, \forall i \in K : \tau_{\psi(j)-1} < \pi^{-1}(i) \leq \tau_{\psi(i)}\} \, .
$$

We consider a family of all atomic events for a fixed $n$ and for all possible thresholds, sets $K$ and mappings $\psi : K \longrightarrow [t]$.
\end{definition}
} %
Given any integer $t \in \{2,3,\ldots, n\}$, let $\mathcal{B} := B_{1}, B_{2}, \ldots, B_{t}$, be a \textit{bucketing} of the sequence $(1, \ldots, n)$, i.e., partition of the sequence $(1,\ldots, n)$ into $t$ disjoint subsets (buckets) of consecutive numbers whose union is the whole sequence. Formally, there are indices $\tau_{1} < \tau_{2} < \cdots < \tau_{t-1} < \tau_t = n$, $\forall j \in [t-1] : \tau_j \in [n]$, such that $B_1 = \{1,\ldots,\tau_{1}\}$, and $B_j = \{\tau_{j-1}+1,\ldots,\tau_{j}\}$, for $j \in \{2,3,\ldots, t\}$.

\ignore{
\begin{definition}[Atomic events with respect to a bucketing]
Consider any k-tuple $\hat{S} = (a_{1}, \ldots, a_{k})$ of the set $[n]$. Let $f : [k] \rightarrow [t]$ be a non-decreasing mapping of elements from the sequence into $t$ buckets of the bucketing $\mathcal{B}$. Then an atomic event in probability space $\Omega$ for chosen $K$ and $f$ is defined as:
$$A_{K, f} = \{\pi \in \Omega : \forall_{i \in [k]} \pi^{-1}(a_{i}) \in B_{f(i)} \text{ and } \pi^{-1}(a_{1}) < \pi^{-1}(a_{2}) < \ldots < \pi^{-1}(a_{k}) \}.$$
The family $\mathcal{A}_{k, \mathcal{B}}$, parameterized by the number $k$ and the bucketing $\mathcal{B}$, of all atomic events is defined as:
$$\mathcal{A}_{k, \mathcal{B}} = \bigcup_{K, f} A_{K, f}.$$
\end{definition}
} 

\begin{definition}[Atomic events with respect to a bucketing]\label{def:atomic_event}
Consider any k-tuple $\sigma = (\sigma_{1}, \ldots, \sigma_{k}) = (\sigma(1), \ldots, \sigma(k))$ of the set $[n]$, i.e., $\sigma : [k] \longrightarrow [n]$ and $\sigma$ is injective. Let $f : [k] \rightarrow [t]$ be a non-decreasing mapping of elements from the sequence into $t$ buckets of the bucketing $\mathcal{B}$. Then an atomic event in probability space $\Omega$ for the chosen $\sigma$ and $f$ is defined as:
$$A_{\sigma, f} = \{\pi \in \Omega : \forall_{i \in [k]} \pi^{-1}(\sigma_{i}) \in B_{f(i)} \text{ and } \pi^{-1}(\sigma_{1}) < \pi^{-1}(\sigma_{2}) < \ldots < \pi^{-1}(\sigma_{k}) \}.$$
The family $\mathcal{A}_{k, \mathcal{B}}$, parameterized by the number $k$ and the bucketing $\mathcal{B}$, of all atomic events is defined as:
$$\mathcal{A}_{k, \mathcal{B}} = \bigcup_{\sigma, f} A_{\sigma, f}.$$
\end{definition}

While the above definition captures with the most details, the structure of 
multiple-threshold algorithms, it is often impractical for derandomization purposes. The measure of an atomic event $A_{\sigma, f}$ is proportional to the inverse of $k! \cdot t^{k}$. Even for small parameters $k$ and $t$, this measure can be too small to allow constructing low-entropy permutation distributions reflecting measures of atomic events. We will show that it is often the case that threshold algorithms are interested in preserving measures of some super-sets of disjoint atomic events whose measure is some constant depending on the competitive ratio of these algorithms. Motivated by this observation, for a family of atomic events $\mathcal{A}_{k, \mathcal{B}}$, we define an abstract notion of a family of positive events $\mathcal{P} \subseteq 2^{\mathcal{A}_{k, \mathcal{B}}}$ subject to some structural properties.

\begin{definition}[Positive events based on atomic family $\mathcal{A}_{k, \mathcal{B}}$]\label{def:positive_event}
A positive event $P$ is any subset of the atomic family $\mathcal{A}_{k, \mathcal{B}}$, denoted $Atomic(P) \subseteq \mathcal{A}_{k, \mathcal{B}}$, such that every two atomic events belonging to $P$ are disjoint, $P = \, \, \stackrel{\cdot}{\bigcup}_{A \in Atomic(P)} A$. Any set of positive events based on the atomic family $\mathcal{A}_{k, \mathcal{B}}$ is called a positive family of events. 
\end{definition}

We will propose in Section \ref{sec:Applications} two different positive families that capture behaviors of two optimal algorithms for respectively the multiple-choice secretary problem and the classic secretary problem. We will also show how these positive events can be expressed by atomic events.

\section{Abstract derandomization of positive events via concentration bounds: Theorem \ref{Thm:Derandomization_2_111}}\label{sec:abstract_derand}

Let $\Omega = (\Pi_n,\mu)$ denote the probabilistic space of all $n!$ permutations of $n$ elements with uniform probabilities. Given any integer $t \in \{2,3,\ldots, n\}$, let $\mathcal{B} := B_{1}, B_{2}, \ldots, B_{t}$, be a \textit{bucketing} of the sequence $(1, \ldots, n)$. We will derandomize, in Theorem \ref{Thm:Derandomization_2_111}, the following generic theorem (Theorem \ref{theorem:Chernoff_Positive_Events}), where we only need that any positive event can 
be expressed as union of (any set of) atomic events. 

\begin{theorem}\label{theorem:Chernoff_Positive_Events}
  Let $k \in [n], k > 2$ and $\mathcal{A}_{k,\mathcal{B}}$ be the
family of atomic events in the space $\Omega$. Let $\mathcal{P} =\{P_1,\ldots,P_q\}$ be a family of positive events based on family $\mathcal{A}_{k,\mathcal{B}}$, for some integer $q > 1$, such that for any $P_{\gamma} \in  \mathcal{P}$, $\gamma \in [q]$, we have $\Prob_{\pi \sim \Pi_n}[P_{\gamma}] \geq p_{\gamma} > 0$ for some $p_{\gamma} \in (0,1)$. Let $p_0 = \min \{p_1,\ldots,p_q\}$. Then, for any $\delta \in (0,1)$, there exists a multi-set $\mathcal{L}$ of permutations of size at most $\ell = \frac{2\log{q}}{\delta^2 p_0}$ such that 
\[
\Prob_{\pi \sim \mathcal{L}}[P_{\gamma}] \ge (1-\delta) \cdot p_{\gamma} \, , 
 \mbox{ for each } P_{\gamma} \in \mathcal{P} \, .
\] 
\end{theorem}

\begin{proof}
Let us fix any $P_{\gamma} \in \mathcal{P}$. We choose independently $\ell$ permutations $\pi_1, \ldots, \pi_{\ell}$ from $\Pi_n$ u.a.r., and define the multi-set $\mathcal{L} = \{\pi_1, \ldots, \pi_{\ell}\}$.
Let $X_1(P_{\gamma}), \ldots, X_{\ell}(P_{\gamma})$ be random variables such that $X_s(P_{\gamma}) = 1$ if event $P_{\gamma}$ holds for
random permutation $\pi_s$, and $X_s(P_{\gamma}) = 0$ otherwise, for $s \in [\ell]$.
 Then for $X(P_{\gamma}) =
X_1(P_{\gamma}) + \cdots + X_{\ell}(P_{\gamma})$ we have that $\Exp[X(P_{\gamma})] \geq p_{\gamma} \ell$ and by Chernoff bound, we have  
\begin{eqnarray}
 \Prob[X(P_{\gamma}) < (1-\delta) \cdot p_{\gamma}  \ell]  <  \exp(-\delta^2 p_{\gamma} \ell/2) \, , \,\, \mbox{ for any } \, \, 0 < \delta < 1 \, , \label{eqn:Chernoff_Hoeffding_111}
\end{eqnarray}

The probability that there is an event $P_{\gamma} \in \mathcal{P}$ for which there does not exists a $(1-\delta) p_{\gamma}$ fraction of permutations among these $\ell$ random permutations which make this event false, by the union bound, is: 
\vspace*{-1ex}
\[
\Prob[\exists P_{\gamma} \in \mathcal{P} : X(P_{\gamma}) <
(1-\delta) \cdot p_{\gamma}  \ell] \,\, < \,\, 
\sum_{i=1}^q \exp(-\delta^2 p_{\gamma} \ell/2) \, .
\] This probability is strictly smaller than $1$ if $\sum_{{\gamma}=1}^q \exp(-\delta^2 p_{\gamma} \ell/2) \leq 1$, which holds if $\ell \geq \frac{2\log{q}}{\delta^2 p_0}$. Therefore, each positive event $P_{\gamma}$ has at least a $(1-\delta) \cdot p_{\gamma}$ fraction of permutations in $\mathcal{L}$ on which it is true. This means that there exist $\frac{2\log{q}}{\delta^2 p_0}$ permutations such that if we choose one of them u.a.r., then for any positive event $P_{\gamma} \in \mathcal{P}$, this permutation will make $P_{\gamma}$ true with probability at least $(1-\delta) p_{\gamma}$.
\end{proof}

\begin{theorem}\label{Thm:Derandomization_2_111}
  Let $k \in [n], k > 2$ and $\mathcal{A}_{k,\mathcal{B}}$ be the family of atomic events in the space $\Omega$, where bucketing $\mathcal{B}$ has $t$ buckets. Let $\mathcal{P} =\{P_1,\ldots,P_q\}$ be a family of positive events based on the atomic family $\mathcal{A}_{k,\mathcal{B}}$, for some integer $q > 1$, such that for any $P_{\gamma} \in  \mathcal{P}$, $\gamma \in \{1,\ldots,q\}$, we have $\Prob_{\pi \sim \Pi_n}[P_{\gamma}] \geq p_{\gamma} > 0$ for some $p_{\gamma} \in (0,1)$. Let $p_0 = \min \{p_1,p_2\ldots,p_q\}$, and for any $P_{\gamma} \in \mathcal{P}$, $Atomic(P_{\gamma})$ be the set of atomic events that define $P_{\gamma}$. Then, for any $\delta \in (0,1)$, a multi-set $\mathcal{L}$ of permutations with $|\mathcal{L}| \leq \ell = \frac{2\log{q}}{\delta^2 p_0}$, such that 
\[
\Prob_{\pi \sim \mathcal{L}}[P_{\gamma}] \ge (1-\delta) \cdot p_{\gamma} \, , 
 \mbox{ for each } P_{\gamma} \in \mathcal{P} \, ,
\] can be constructed by using Algorithm \ref{algo:Find_perm_2_111} in deterministic time
$$
O\left( \ell n^3 q \cdot t^{2k} \cdot (k!)^2 \cdot \left(n + k k! + k \log^2 n\right) \right) \, . 
$$
\end{theorem}

\vspace*{-1ex}
\noindent
We present here the proof of Theorem \ref{Thm:Derandomization_2_111}, whose missing details can be found in Section \ref{sec:derandomization-proofs_2_111}.

\smallskip

\noindent
{\bf Preliminaries.} To derandomize the Chernoff argument of Theorem \ref{theorem:Chernoff_Positive_Events}, we will derive a special conditional expectations method with a pessimistic estimator. We will model an experiment to choose u.a.r.~a permutation $\pi_j \in \Pi_n$ by independent \lq\lq{}index\rq\rq{} r.v.'s $X^i_j$: $\Prob[X^i_j \in \{1,2,\ldots, n-i+1\}] = 1/(n-i+1)$, for $i \in [n]$, to define $\pi = \pi_j \in \Pi_n$ ``sequentially": $\pi(1) = X^1_j$, $\pi(2)$ is the $X^2_j$-th element in $I_1 = \{1,2,\ldots,n\} \setminus \{\pi(1)\}$, $\pi(3)$ is the $X^3_j$-th element in $I_2 = \{1,2,\ldots,n\} \setminus \{\pi(1), \pi(2)\}$, etc, where elements are increasingly ordered.

Since the probability of choosing the index $\pi(i)$ for $i = 1,2,\ldots, n$ is $1/(n-i+1)$, and these random choices are independent, the final probability of choosing a specific random permutation is 
  $
     \frac{1}{n} \cdot \frac{1}{n-1} \cdot \ldots \cdot \frac{1}{n-n+1} = \frac{1}{n!} \ ,
  $ thus, this probability distribution is uniform on the set $\Pi_n$ as we wanted.
  
 Suppose random permutations
 $\mathcal{L} = \{\pi_1, \ldots, \pi_\ell\}$ are generated using $X^1_j, X^2_j, \ldots, X^n_j$ for $j\in[\ell]$.  Given a positive event $P_{\gamma} \in \mathcal{P}$, ${\gamma} \in [q]$, recall the definition of r.v.~$X_j(P_{\gamma})$ for $j \in [\ell]$ given above.
  For $X(P_{\gamma}) =
X_1(P_{\gamma}) + \cdots + X_{\ell}(P_{\gamma})$ and $\delta \in (0,1)$, we have that $\Exp[X(P_{\gamma})] \geq  p_{\gamma} \ell$ and by (\ref{eqn:Chernoff_Hoeffding_111}) we have 
$
 \Prob[X(P_{\gamma}) < (1-\delta) \cdot p_{\gamma}  \ell]  <  \exp(-\delta^2 p_{\gamma} \ell/2)
$, and so $$\Prob[\exists P_{\gamma} \in \mathcal{P} : X(P_{\gamma}) <
(1-\delta) \cdot p_{\gamma}  \ell] \, < \, 1 \,\, \mbox{ for } \, \ell \geq \frac{2\log{q}}{\delta^2 p_0} \, .
$$ We call the positive event $P_{\gamma} \in \mathcal{P}$ {\em not well-covered} if $X(P_{\gamma}) < (1-\delta) \cdot p_{\gamma} \ell$ (then a new r.v.~$Y(P_{\gamma}) = 1$), and {\em well-covered} otherwise (then $Y(P_{\gamma}) = 0$). Let $Y = \sum_{P \in \mathcal{P}} Y(P)$. By the above argument $\Exp[Y] = \sum_{P \in \mathcal{P}} \Exp[Y(P)] < 1$ if $\ell \geq \frac{2\log{q}}{\delta^2 p_0}$. We will keep the expectation $\Exp[Y]$ below $1$ in each step of the derandomization, and these steps will sequentially define the permutations in $\mathcal{L}$.

\smallskip

\noindent
{\bf Outline of derandomization.} We will choose permutations $\{\pi_1,\pi_2,\ldots,\pi_{\ell}\}$ sequentially, where $\pi_1 = (1,2,\ldots,n)$ is the identity permutation. For some $s \in [\ell-1]$ let permutations $\pi_1,\ldots,\pi_s$ have already been chosen ({\em ``fixed"}). We will chose a {\em ``semi-random"} permutation $\pi_{s+1}$ position by position using $X^i_{s+1}$. Suppose that $\pi_{s+1}(1),$ $ \pi_{s+1}(2),..., \pi_{s+1}(r)$ are already chosen for some $r \in [n-1]$, where all $\pi_{s+1}(i)$ ($i \in [r-1]$) are fixed and final, except $\pi_{s+1}(r)$ which is fixed but not final yet. We will vary $\pi_{s+1}(r) \in [n] \setminus \{\pi_{s+1}(1), \pi_{s+1}(2),..., \pi_{s+1}(r-1)\}$ to choose the best value for $\pi_{s+1}(r)$, assuming that
$\pi_{s+1}(r+1), \pi_{s+1}(r+2),..., \pi_{s+1}(n)$ are random. Permutations $\pi_{s+2},\ldots,\pi_n$ are {\em ``fully-random"}.

\smallskip

\noindent
{\bf Conditional probabilities.}~Given $P_{\gamma} \in \mathcal{P}$, $r \in [n-1]$, note that $X_{s+1}(P_{\gamma})$ depends only on $\pi_{s+1}(1),$ $\pi_{s+1}(2),$ $\ldots,$ $ \pi_{s+1}(r)$. We will show how to compute the conditional probabilities (Algorithm \ref{algo:Cond_prob_2_111}, Section \ref{sec:Cond_Prob_Thm_4_2_111}) $\Prob[X_{s+1}(P_{\gamma}) = 1 \, | \, \pi_{s+1}(1),\pi_{s+1}(2), \ldots, \pi_{s+1}(r)]$, where randomness is over random positions $\pi_{s+1}(r+1),\pi_{s+1}(r+2), \ldots, \pi_{s+1}(n)$. We define $\Prob[X_{s+1}(P_{\gamma}) = 1 \, | \, \pi_{s+1}(1),\pi_{s+1}(2), \ldots, \pi_{s+1}(r)] = \Prob[X_{s+1}(P_{\gamma}) = 1]$ when $r=0$. 

In Section \ref{sec:Cond_Prob_Thm_4_2_111} we will show how to compute conditional probabilities of the atomic events and prove the following Theorem~\ref{theorem:semi-random-conditional_2_111} -- we use this result at the end of the proof of Theorem~\ref{Thm:Derandomization_2_111}:

\begin{theorem}\label{theorem:semi-random-conditional_2_111}
 Suppose values $\pi_{s+1}(1),\pi_{s+1}(2), \ldots, \pi_{s+1}(r)$ have already been fixed for some $r \in \{0\} \cup [n]$. There is a deterministic algorithm to compute $\Prob[X_{s+1}(P_{\gamma}) = 1 \, | \, \pi_{s+1}(1),\pi_{s+1}(2), \ldots, \pi_{s+1}(r)]$, for any positive event $P_{\gamma}$, where the random event is the random choice of the semi-random permutation $\pi_{s+1}$ conditioned on its first $r$ elements already being fixed. Its running time is 
 $$
 O(|Atomic(P_{\gamma})| \cdot (n^2 + nk (k! + \log^2 n))) \, .
 $$
\end{theorem}

\smallskip 

\noindent
{\bf Pessimistic estimator.} Let $P_{\gamma} \in \mathcal{P}$. Denote $\Exp[X_j(P_{\gamma})] = \Prob[X_j(P_{\gamma}) = 1] = \mu_{\gamma j}$ for each $j \in [\ell]$, and $\Exp[X(P_{\gamma})] = \sum_{j=1}^{\ell} \mu_{\gamma j} = \mu_{\gamma}$. By the assumption in Theorem \ref{Thm:Derandomization_2_111}, $\mu_{\gamma j} \geq p_{\gamma}$, for each $j \in [\ell]$. We will now use
Raghavan's proof of the Hoeffding bound, see \cite{Young95}, for any $\delta > 0$, using that $\mu_{\gamma j} \geq p_{\gamma}$ (see more details in Section \ref{sec:derandomization-proofs_2_111}):
\begin{eqnarray*}
  \Prob\left[X(P_{\gamma}) < (1-\delta) \cdot \ell \cdot p_{\gamma} \right]
  &\leq&
  \prod_{j=1}^{\ell} \frac{1-\delta \cdot  \Exp[X_j(P_{\gamma})]}{(1-\delta)^{(1-\delta)p_{\gamma}}}
  <
  \prod_{j=1}^{\ell} \frac{\exp(- \delta \mu_{\gamma j})}{(1-\delta)^{(1-\delta)p_{\gamma}}}
  \leq
  \prod_{j=1}^{\ell} \frac{\exp(- \delta p_{\gamma})}{(1-\delta)^{(1-\delta)p_{\gamma}}} \nonumber \\
  &=&
  \frac{1}{\exp(b(-\delta) \ell p_{\gamma})}
  \,\, < \,\, \frac{1}{\exp(\delta^2 \ell p_{\gamma}/2)} \, ,
\end{eqnarray*} where $b(x) = (1+x) \ln(1+x) - x$, and the last inequality follows by $b(-x) > x^2/2$, see, e.g., \cite{Young95}. Thus, the union bound implies:
\begin{eqnarray}
\Prob\left[\exists P_{\gamma} \in \mathcal{P} : X(P_{\gamma}) < (1-\delta) \cdot \ell \cdot p_{\gamma} \right] \,\, \leq \,\, 
\sum_{\gamma=1}^q \prod_{j=1}^{\ell} \frac{1-\delta \cdot  \Exp[X_j(P_{\gamma})]}{(1-\delta)^{(1-\delta)p_{\gamma}}} \label{Eq:Union_Bound_1_2_111} \, .
\end{eqnarray} 

\noindent
We will derive a pessimistic estimator of this failure probability in (\ref{Eq:Union_Bound_1_2_111}).
Let $\phi_j(P_{\gamma}) = 1$ if $\pi_j$ makes event $P_{\gamma}$ true, and $\phi_j(P_{\gamma}) = 0$ otherwise, and the failure probability (\ref{Eq:Union_Bound_1_2_111}) is at most:
\begin{eqnarray}
  & & \sum_{\gamma=1}^q\prod_{j=1}^{\ell} \frac{1-\delta \cdot \Exp[\phi_j(P_{\gamma})]}{(1-\delta)^{(1-\delta)p_{\gamma}}} \label{eq:first_term_2_111} \\
  &= & \sum_{\gamma=1}^q\left(\prod_{j=1}^{s} \frac{1-\delta \cdot \phi_j(P_{\gamma})}{(1-\delta)^{(1-\delta)p_{\gamma}}}\right) \cdot \left(\frac{1-\delta \cdot \Exp[\phi_{s+1}(P_{\gamma})]}{(1-\delta)^{(1-\delta)p_{\gamma}}}\right) \cdot \left(\frac{1-\delta \cdot \Exp[\phi_j(P_{\gamma})]}{(1-\delta)^{(1-\delta)p_{\gamma}}}\right)^{\ell - s - 1} \label{eq:second_term_2_111} \\
  &\leq& 
  \sum_{\gamma=1}^q\left(\prod_{j=1}^{s} \frac{1-\delta \cdot \phi_j(P_{\gamma})}{(1-\delta)^{(1-\delta)p_{\gamma}}}\right) \cdot \left(\frac{1-\delta \cdot \Exp[\phi_{s+1}(P_{\gamma})]}{(1-\delta)^{(1-\delta)p_{\gamma}}}\right) \cdot \left(\frac{1-\delta \cdot p_{\gamma}}{(1-\delta)^{(1-\delta)p_{\gamma}}}\right)^{\ell - s - 1} \nonumber \\ 
  &=& \, \Phi(\pi_{s+1}(1),\pi_{s+1}(2), \ldots, \pi_{s+1}(r)) \label{Eq:Pessimistic_Est_2_111} \, ,
\end{eqnarray} where equality (\ref{eq:second_term_2_111}) is conditional expectation under: (fixed) permutations $\pi_1,\ldots,\pi_s$ for some $s \in [\ell-1]$, the (semi-random)
permutation $\pi_{s+1}$ currently being chosen, and (fully random) permutations $\pi_{s+2},\ldots,\pi_{\ell}$. The first term (\ref{eq:first_term_2_111}) is less than $\sum_{\gamma=1}^q \exp(-\delta^2 \ell p_{\gamma}/2)$, which is strictly smaller than $1$ for large $\ell$.
  Let us denote
$\Exp[\phi_{s+1}(P_{\gamma})] =
\Exp[\phi_{s+1}(P_{\gamma}) \, | \, \pi_{s+1}(r) = \tau]
= \Prob[X_{s+1}(P_{\gamma}) = 1 \, | \, \pi_{s+1}(1),\pi_{s+1}(2), \ldots, \pi_{s+1}(r-1), \pi_{s+1}(r) = \tau]$, where positions $\pi_{s+1}(1),\pi_{s+1}(2), \ldots, \pi_{s+1}(r)$ were fixed in the semi-random permutation $\pi_{s+1}$, $\pi_{s+1}(r)$ was fixed in particular to $\tau \in [n] \setminus \{\pi_{s+1}(1),\pi_{s+1}(2), \ldots, \pi_{s+1}(r-1)\}$, and it can be computed by using the algorithm from Theorem~\ref{theorem:semi-random-conditional_2_111}.
This gives our pessimistic estimator $\Phi$, when the semi-random permutation $\pi_{s+1}$ is being decided. Observe that $\Phi$ can be rewritten as
\vspace*{-1ex}
\begin{eqnarray}
  \Phi = \sum_{\gamma=1}^q \omega(\ell,s,\gamma) \cdot  \left(\prod_{j=1}^{s} (1-\delta \cdot \phi_j(P_{\gamma}))\right) \cdot (1-\delta \cdot \Exp[\phi_{s+1}(P_{\gamma})]), \, 
  \mbox{ where } \label{Eq:Pessimistic_Est_Obj_111} \\ \,  \omega(\ell,s,\gamma) = 
  \frac{\left(1-\delta p_{\gamma}\right)^{\ell-s+1}}{(1-\delta)^{(1-\delta)p_{\gamma}\ell}} \, . \nonumber
\end{eqnarray}

\noindent
Recall $\pi_{s+1}(r)$ in semi-random permutation was fixed but not final. To make it final, we choose $\pi_{s+1}(r) \in [n] \setminus \{\pi_{s+1}(1),\pi_{s+1}(2), \ldots, \pi_{s+1}(r-1)\}$ that minimizes $\Phi$ in (\ref{Eq:Pessimistic_Est_Obj_111}). Proof of Lemma  \ref{lem:potential_correct_2_111}
can be found in Section \ref{sec:derandomization-proofs_2_111}.

\begin{lemma}\label{lem:potential_correct_2_111}
$\Phi(\pi_{s+1}(1),\pi_{s+1}(2), \ldots, \pi_{s+1}(r))$ is a pessimistic estimator of  failure probability (\ref{Eq:Union_Bound_1_2_111}), if
$\ell \geq \frac{2\log{q}}{\delta^2 p_0}$.
\end{lemma}

\begin{proofof}{Theorem \ref{Thm:Derandomization_2_111}} 
See the precise details of this proof in Section \ref{sec:derandomization-proofs_2_111}. 
\end{proofof}

\begin{algorithm}[ht!]
\SetAlgoVlined

\DontPrintSemicolon
 \KwIn{Positive integers $n$, $k \leq n$, $\ell \geq 2$, such that $\log k \geq 8$.}
 \KwOut{A multi-set $\mathcal{L} \subseteq \Pi_n$ of $\ell$ permutations.}
 
 /* This algorithm uses Function ${\sf Prob}(A)$ defined in Algorithm \ref{algo:Cond_prob_2_111}. */ 
 
 $\pi_1 := (1,2,\ldots,n)$  /* Identity permutation */
 
 $\mathcal{L} := \{\pi_1\}$
 
 Let $\mathcal{P} = \{P_1,\ldots,P_q\}$ be the set of all positive events.  
 
 \For{$\gamma \in \{1,\ldots,q\}$}{$w(P_{\gamma}) := 1-\delta \cdot \phi_1(P_{\gamma})$ \label{Alg:Weight_Init_2}}
 
    \For{$s = 1 \ldots \ell - 1$}{
      \For{$r = 1 \ldots n$}{
         \For{$\gamma \in \{1,\ldots,q\}$}{
           \For{$\tau \in [n] \setminus \{\pi_{s+1}(1),\pi_{s+1}(2), \ldots, \pi_{s+1}(r-1)\}$}{
            \For{$A \in Atomic(P_{\gamma})$}{
             $\Prob[A \, | \, \pi_{s+1}(1), \ldots, \pi_{s+1}(r-1), \pi_{s+1}(r) = \tau] := {\sf Prob}(A)$}
           
             $\Exp[\phi_{s+1}(P_{\gamma}) \, | \, \pi_{s+1}(r) = \tau] := \sum_{A \in Atomic(P_{\gamma})} \Prob[A \, | \, \pi_{s+1}(1), \ldots, \pi_{s+1}(r-1), \pi_{s+1}(r) = \tau]$
            }
           }
           Choose $\pi_{s+1}(r) = \tau$ for $\tau \in [n] \setminus \{\pi_{s+1}(1),\pi_{s+1}(2), \ldots, \pi_{s+1}(r-1)\}$ to minimize
           $\sum_{\gamma=1}^q \omega(\ell,s,\gamma) \cdot w(P_{\gamma}) \cdot (1 - \delta \cdot \Exp[\phi_{s+1}(P_{\gamma}) \, | \, \pi_{s+1}(r) = \tau])$.
       }
      $\mathcal{L} := \mathcal{L} \cup \{\pi_{s+1}\}$
      
      \For{$\gamma \in \{1,\ldots,q\}$}{
        $w(P_{\gamma}) :=  w(P_{\gamma}) \cdot (1-\delta \cdot \phi_{s+1}(P_{\gamma}))$ \label{Alg:Weight_Update_2}
       }
     }
 \Return $\mathcal{L}$
 \caption{Find permutations distribution (for positive events)}
 \label{algo:Find_perm_2_111}
\end{algorithm}

\subsection{Conditional probabilities and proof of Theorem \ref{theorem:semi-random-conditional_2_111}}\label{sec:Cond_Prob_Thm_4_2_111}

We will show how to compute conditional probabilities of the atomic events. Let us first recall a definition of an atomic event. Given $t \in \{2,3,\ldots, 
n\}$, let $\mathcal{B} := B_{1}, B_{2}, \ldots, B_{t}$, be a bucketing of the sequence $(1, \ldots, n)$. Let there be indices $1 = b_{0} < b_{1} < b_{2} < \cdots < b_{t-1} < b_t = n$, $\forall j \in [t-1] : b_j \in [n]$, such that $B_1 = \{1,\ldots,b_{1}\}$, and $B_j = \{b_{j-1}+1,\ldots,b_{j}\}$, for $j \in \{2,3,\ldots, t\}$.

Consider an ordered subset $\sigma = (\sigma_{1}, \ldots, \sigma_{k}) = (\sigma(1), \ldots, \sigma(k))$ of the set $[n]$. If $f : [k] \longrightarrow [t]$ be a non-decreasing mapping, an atomic event for $\sigma$ and $f$ in the probabilistic space $\Omega$ is:
$$A_{\sigma, f} = \{\pi \in \Omega : \forall_{i \in [k]} \pi^{-1}(\sigma_{i}) \in B_{f(i)} \text{ and } \pi^{-1}(\sigma_{1}) < \pi^{-1}(\sigma_{2}) < \ldots < \pi^{-1}(\sigma_{k}) \}.$$

Recall how we generate a random permutation $\pi_j$ by the index random variables $X^1_j, X^2_j, \ldots,$ $ X^n_j$, which generate its elements $\pi_j(1), \pi_j(2), \ldots, \pi_j(n)$ sequentially in this order.
 We naturally extend the 
definition of the random variable $X^{P}_j \in  \{0,1\}$ and function $\phi_j(P) \in \{0,1\}$ to $X^{A_{\sigma, f}}_j$ and $\phi_j(A_{\sigma, f})$ for the atomic event $A_{\sigma, f}$, and random permutation $\pi_j$, $j \in [\ell]$.
  We will define an algorithm to 
compute $\Prob[X^{A_{\sigma, f}}_{s+1} = 1 \, | \, \pi_{s+1}(1),\pi_{s+1}(2), \ldots, \pi_{s+1}(r)]$ for the semi-random permutation $\pi_{s+1}$. Slightly abusing notation we let for $r=0$ to have that $\Prob[X^{A_{\sigma, f}}_{s+1} = 1 \, | \, \pi_{s+1}(1),\pi_{s+1}(2), \ldots, \pi_{s+1}(r)] = \Prob[X^{A_{\sigma, f}}_{s+1} = 1]$. In this case, we will also show below how to compute $\Prob[X^{A_{\sigma, f}}_{s+1} = 1]$ when $\pi_{s+1}$ is fully random.
\\

\noindent
{\bf Proof of Theorem \ref{theorem:semi-random-conditional_2_111}.} We will now present the proof of Theorem \ref{theorem:semi-random-conditional_2_111}. If $r=0$ and $\pi_{s+1}$ is fully random. Note that $B_j' = f^{-1}(\{j\}) \subseteq [k]$ is the set of indices of elements from $\sigma$ that $f$ maps to bucket $j \in [t]$, and  
$b_j' = |B_j'|$ is their number; $b_j'$ can also be zero. Let us also denote $\chi(j',j) = \sum_{i=j'}^{j-1} b_i'$ for $j', j \in \{1,2,\ldots,t\}$ assuming $j' < j$, and $\chi(j',j) = 0$ when $j'=j$. Using Bayes’ formula on conditional probabilities and the definition of event $A_{\sigma, f}$ we have that
\begin{eqnarray}
\Prob[A_{\sigma, f}] = \Prob[X^{A_{\sigma, f}}_{s+1} = 1] = \prod_{j=1}^{t}
\left(\left(\prod_{i=1}^{b_j'} \frac{|B_j| - (i-1)}{n - \chi(1,j) - (i-1)}\right) \cdot \frac{perm(B_j',\sigma)}{(b_j')!}\right) \, , \label{eq:atomic_prob_1}
\end{eqnarray} where $perm(B_j',\sigma)$ denotes the number of all permutations of elements in the set $B_j'$ that agree with their order in permutation $\sigma$. Note that $perm(B_j',\sigma)$ can be computed by enumeration in time $(b_j')! \leq k!$.
 
Assume from now on that $r \geq 1$. Suppose that values $\pi_{s+1}(1),\pi_{s+1}(2), \ldots, \pi_{s+1}(r)$ have already been chosen for some $r \in [n]$, i.e., they all are fixed and final, except that $\pi_{s+1}(r)$ is fixed but not final. The algorithm will be based on an observation that the random process of generating the remaining values $\pi_{s+1}(r+1),\pi_{s+1}(r+2), \ldots, \pi_{s+1}(n)$ can be viewed as choosing u.a.r.~a random permutation of values in set $[n] \setminus \{\pi_{s+1}(1),\pi_{s+1}(2), \ldots, \pi_{s+1}(r)\}$; so this random permutation
has length $n-r$.


\begin{algorithm}[!h]
\SetAlgoVlined
\DontPrintSemicolon
\SetKwFunction{FMain}{${\sf Prob}$}
\SetKwProg{Fn}{Function}{:}{}
  \Fn{\FMain{$A_{\sigma, f}$}}{
   Let $j \in [t]$ be such that $r \in B_j = \{b_{j-1} + 1,\ldots, b_j\}$. \;
   Let $I_j = \{\sigma(i) : i \in f^{-1}(\{j\})\}$ for $j \in [t]$. /* $I_j$ $=$ items from $[n]$ mapped by $f$ to $B_j$ */\;
   \If{$j > 1$ or $r = b_j$}{
       \lIf{$r = b_j$}{$j'' := j$}\lElse{$j'' := j-1$}
       \If{$\exists j' \in [j''] : I_{j'} \not \subseteq
           \{\pi_{s+1}(b_{j'-1}+1),\pi_{s+1}(b_{j'-1}+2),\ldots,\pi_{s+1}(b_{j'})\}$ \label{algo:Cond_prob_2_111_step_1}}{
             $q := \Prob[A_{\sigma, f}] = 0$; \Return $q$}
        \If{$\exists i_1, i_2 \in \bigcup_{j' \in [j'']}       
             I_{j'} \, : \, 
             \pi^{-1}_{s+1}(i_1) < \pi^{-1}_{s+1}(i_2)$
             $\mbox{ and } \sigma^{-1}(i_1) > \sigma^{-1}(i_2)$ \label{algo:Cond_prob_2_111_step_2}}{
             $q := \Prob[A_{\sigma, f}] = 0$; \Return $q$}
       }
      \If{$r = b_j$ \label{algo:Cond_prob_2_111_step_3}}{
        $q:= \Prob[A_{\sigma, f}] = \prod_{\kappa=j+1}^{t}
\left(\left(\prod_{i=1}^{b'_{\kappa}} \frac{|B_{\kappa}| - (i-1)}{n - r - \chi(j+1,\kappa) - (i-1)}\right) \cdot \frac{perm(B'_{\kappa},\sigma)}{(b'_{\kappa})!}\right)$ \label{algo:Cond_prob_2_111_step_4}
       }
      \If{$r < b_j$}{
          Let $J=\{\pi_{s+1}(b_{j-1}+1),\pi_{s+1}(b_{j-1}+2),\ldots,\pi_{s+1}(r)\}$. \;
         \If{$|I_{j} \cap J| + b_j - r < |I_{j}|$ \label{algo:Cond_prob_2_111_cond_a}}{
         $q := \Prob[A_{\sigma, f}] = 0$; \Return $q$ \label{algo:Cond_prob_2_111_step_5}}
         \If{$\exists i_1, i_2 \in I_{j} \cap J \, : \, 
             \pi^{-1}_{s+1}(i_1) < \pi^{-1}_{s+1}(i_2)$
             $\mbox{ and } \sigma^{-1}(i_1) > \sigma^{-1}(i_2) $ \label{algo:Cond_prob_2_111_cond_b}}{
         $q := \Prob[A_{\sigma, f}] = 0$; \Return $q$}
         Let $I = I_{j} \setminus J$, and let $I' = \{\sigma^{-1}(i) : i \in I\}$.\; \label{algo:Cond_prob_2_111_step_6}
          $p_0 := \left(\prod_{i=1}^{|I|} \frac{b_j - r - (i-1)}{n - r - (i-1)}\right) \cdot \frac{perm(I',\sigma)}{|I'|!}$\;
          $q:= \Prob[A_{\sigma, f}] = p_0 \cdot \prod_{\kappa=j+1}^{t}
\left(\left(\prod_{i=1}^{b'_{\kappa}} \frac{|B_{\kappa}| - (i-1)}{n - r - |I'| - \chi(j+1,\kappa) - (i-1)}\right) \cdot \frac{perm(B'_{\kappa},\sigma)}{(b'_{\kappa})!}\right)$ \label{algo:Cond_prob_2_111_step_7}
       }
 \KwRet $\Prob[A_{\sigma, f}] = q$; \;
} 
\caption{Conditional probabilities (Atomic events)}
 \label{algo:Cond_prob_2_111}
\end{algorithm}


\begin{lemma}\label{l:Prob_Algo_2_111} Algorithm \ref{algo:Cond_prob_2_111}, called ${\sf Prob}(A_{\sigma, f})$, correctly computes
$$\Prob[A_{\sigma, f} \, | \, \pi_{s+1}(1),\pi_{s+1}(2), \ldots, \pi_{s+1}(r)]$$ in time $O(n^2 + nk (k! + \log^2 n))$.
\end{lemma}

\begin{proof} 
We will show first the correctness. For simplicity, we will write below $\Prob[A_{\sigma, f}]$ instead of $\Prob[A_{\sigma, f} \, | \, \pi_{s+1}(1),\pi_{s+1}(2), \ldots, \pi_{s+1}(r)]$.
  When computing the conditional 
probability $\Prob[A_{\sigma, f}] = \Prob[A_{\sigma, f} \, | \, \pi_{s+1}(1),\pi_{s+1}(2), \ldots, \pi_{s+1}(r)]$, we will use the formula (\ref{eq:atomic_prob_1}).
  Algorithm 
\ref{algo:Cond_prob_2_111} first finds the bucket $B_j$ which contains the last position $r$ of the semi-random permutation $\pi_{s+1}$. Then, the algorithm finds all buckets $B_{j'}$, $j' \in [j'']$ whose all items in their positions are fixed in $\pi_{s+1}$.

Then, in line \ref{algo:Cond_prob_2_111_step_1} we check if there is any previous bucket $B_{j'}$, $j' \in [j'']$ among buckets with all fixed 
positions in $\pi_{s+1}$, which does not contain the set of items $I_{j'} \subseteq [n]$ that mapping $f$ maps to that bucket. If this is the case then clearly $\pi_{s+1}$ does not fulfil event $A_{\sigma, f}$, so $\Prob[A_{\sigma, f}] = 0$.

Then, in line \ref{algo:Cond_prob_2_111_step_2} we check if there are any two items $i_1, i_2 \in [n]$ that are mapped by $f$ to the union of the fully fixed buckets $\bigcup_{j' \in [j'']} B_{j'}$, whose order in permutation $\pi_{s+1}$ disagrees with their order in permutation $\sigma$. If this is the case then again $\pi_{s+1}$ disagrees with event $A_{\sigma, f}$, so $\Prob[A_{\sigma, f}] = 0$.

If none of the conditions in lines \ref{algo:Cond_prob_2_111_step_1} and  \ref{algo:Cond_prob_2_111_step_2} hold, then the part of 
the permutation $\pi_{s+1}$ with fully fixed buckets fulfills event $A_{\sigma, f}$ for all buckets $B_{j'}, j' \in [j'']$. Thus the probability of event $A_{\sigma, f}$ depends only on the random positions $\pi_{s+1}(r+1), \ldots, \pi_{s+1}(n)$ in $\pi_{s+1}$, and the algorithm continues in line \ref{algo:Cond_prob_2_111_step_3}.

If $r=b_j$ holds in line \ref{algo:Cond_prob_2_111_step_3}, then we know that the event holds for all buckets $B_{j'}, j' \in [j]$ with fixed positions, and in this case buckets $B_{j'}, j' \in \{j+1,j+2\ldots,n\}$ are are all fully random. Therefore, this situation is the same as the fully random permutation $\pi_{s+1}$ where formula (\ref{eq:atomic_prob_1}) applies to all buckets $B_{j'}, j' \in \{j+1,j+2\ldots,n\}$, where we start with bucket $B_{j+1}$ instead of bucket $B_1$ -- see line \ref{algo:Cond_prob_2_111_step_4}.

If $r < b_j$ holds, then bucket $B_j$ is semi-random and we know by the check in the previous lines that event $A_{\sigma, f}$ holds for all buckets $B_{j'}, j' \in [j-1]$ with fully fixed positions. Then the set $J$ contains all items on fixed positions in bucket $B_j$. If $|I_j \cap J| + b_j-r < |I_j|$, then there is no enough room in bucket $B_j$ to accommodate for all items from set $I_j$ that are mapped by $f$ to bucket $B_j$. Therefore in this case event $A_{\sigma, f}$ does not hold for permutation $\pi_{s+1}$ and so $\Prob[A_{\sigma, f}] = 0$ in line \ref{algo:Cond_prob_2_111_step_5}.   
Now, if we are in line \ref{algo:Cond_prob_2_111_step_6}, then we know that condition in line \ref{algo:Cond_prob_2_111_cond_a} does not hold, implying that there is enough space in bucket $B_j$ in its random positions to accommodate for the the items $I_j \setminus J$ mapped by $f$ to bucket $B_j$. We also know that condition in line \ref{algo:Cond_prob_2_111_cond_b} does not hold, which means that event $A_{\sigma, f}$ holds in the fixed positions of bucket $B_j$, implying that its probability depends only on the still random positions $\pi_{s+1}(r+1),\pi_{s+1}(r+2), \ldots, \pi_{s+1}(n)$ in permutation $\pi_{s+1}$.

Therefore, we use now similar ideas to those from in formula (\ref{eq:atomic_prob_1}) and line \ref{algo:Cond_prob_2_111_step_4}, and calculate the probability $p_0$ of the part of event $A_{\sigma, f}$ in the random positions $\pi_{s+1}(r+1),\pi_{s+1}(r+2), \ldots, \pi_{s+1}(b_j)$ in bucket $B_j$; observe that this is the probability that the items from set $I_j \setminus J$ are mapped by $f$ to these random positions in bucket $B_j$, multiplied by the probability $\frac{perm(I',\sigma)}{|I'|!}$ that their order in permutation $\pi_{s+1}$ agrees with the order of their indices (in the set $I' \subseteq [k]$) in permutation $\sigma$.

The final probability of event $A_{\sigma, f}$ computed in line \ref{algo:Cond_prob_2_111_step_7} is by the Bayes' formula on conditional probabilities, equal to the product of $p_0$ and the probability of that event on the remaining buckets $B_{j'}, j' \in \{j+1,j+2\ldots,n\}$ with fully random positions (we use here the same formula as in line \ref{algo:Cond_prob_2_111_step_4}).
 
We now show how to implement the algorithm. One kind of operations are operations on subsets of set $[n]$, set membership and intersection, which can easily be done in time $O(n)$. The other kind of operations in computing probabilities are divisions of numbers from set $[n]$ and multiplications of resulting rational expressions. Each of these arithmetic operations can be performed in time $O(\log^2(n))$.

The conditions in lines \ref{algo:Cond_prob_2_111_step_1}, \ref{algo:Cond_prob_2_111_step_2}, \ref{algo:Cond_prob_2_111_cond_a} and \ref{algo:Cond_prob_2_111_cond_b} can each be easily checked in time $O(n^2)$. The formula in line \ref{algo:Cond_prob_2_111_step_4} and in line \ref{algo:Cond_prob_2_111_step_7} can be computed in time $O(nk k! \log^2(n))$ as follows. The outer product has at most $n$ terms. The inner product has at most $b'_{\kappa} \leq k$ products of fractions; each of these fractions is a fraction of two numbers from $[n]$, thus each number having $\log n$ bits. This means that each of these fractions can be computed in time $O(\log^2 n)$ and their products in time $O( b'_{\kappa} \cdot \log^2 n) = O(k \cdot \log^2 n)$. The at most $n$ fractions $\frac{perm(I',\sigma)}{|I'|!}$ each can be computed in time $O(k \cdot k!) + O(\log^2 (k!)) = O(k \cdot k!) + O(k^2 \log^2 k)) = O(k \cdot k!)$, because we compute $perm(I',\sigma)$ by complete enumeration of all $k!$ permutations, and division $\frac{perm(I',\sigma)}{|I'|!}$ is computed in time $O(\log^2(k!))$. Therefore the total time is $O(n^2 + nk (k! + \log^2 n))$.
\end{proof}

\noindent
To finish the proof of  Theorem \ref{theorem:semi-random-conditional_2_111}, we use Algorithm \ref{algo:Cond_prob_2_111} from Lemma \ref{l:Prob_Algo_2_111} for each event $A \in Atomic(P)$.

\subsection{Missing details in proof of Theorem \ref{Thm:Derandomization_2_111} }\label{sec:derandomization-proofs_2_111}

To be precise, the adversary assigned value $v(u)$ (the $u$-th largest adversarial value, $u \geq 1$) to the position $j_u$ in his/her permutation and the random permutation $\pi \in \Pi_n$ places this value at the position $\pi^{-1}(j_u)$, for each $u \in \{1,2,\ldots,k\}$.\\

\noindent
{\bf Pessimistic estimator.} Let $P_{\gamma} \in \mathcal{P} = \{P_1,\ldots,P_q\}$, $\gamma \in [q]$. Recall that $X(P_{\gamma}) =
X_1(P_{\gamma}) + \cdots + X_{\ell}(P_{\gamma})$ and denote $\Exp[X_j(P_{\gamma})] = \Prob[X_j(P_{\gamma}) = 1] = \mu_{\gamma j}$ for each $j \in [\ell]$, and $\Exp[X(P_{\gamma})] = \sum_{j=1}^{\ell} \mu_{\gamma j} = \mu_{\gamma}$. By the assumption in Theorem \ref{Thm:Derandomization_2_111}, $\mu_j \geq p$, for each $j \in [\ell]$. We will now use
Raghavan's proof of the Hoeffding bound, see \cite{Young95}, for any $\delta > 0$, using that $\mu_j \geq p$:
\begin{eqnarray*}
	\Prob\left[X(P_{\gamma}) < (1-\delta) \cdot \mu_{\gamma}\right]
	&=&
    \Prob\left[\prod_{j=1}^{\ell} \frac{(1-\delta)^{X_j(P_{\gamma})}}{(1-\delta)^{(1-\delta)\mu_{\gamma j}}} \geq 1\right] \\
    &\leq& 
    \Exp\left[\prod_{j=1}^{\ell} \frac{1-\delta \cdot  X_j(P_{\gamma})}{(1-\delta)^{(1-\delta)\mu_{\gamma j}}}\right] 
    =
    \prod_{j=1}^{\ell} \frac{1-\delta \cdot  \Exp[X_j(P_{\gamma})]}{(1-\delta)^{(1-\delta)\mu_{\gamma j}}} \\
    &<&
    \prod_{j=1}^{\ell} \frac{\exp(- \delta \mu_{\gamma j})}{(1-\delta)^{(1-\delta)\mu_{\gamma j}}} 
    =
    \frac{1}{\exp(b(-\delta) \mu_{\gamma})}\, ,
\end{eqnarray*} where $b(x) = (1+x) \ln(1+x) - x$, and the second step uses Bernoulli's inequality $(1+x)^r \leq 1 + rx$, that holds for $0 \leq r \leq 1$ and $x \geq -1$, and
Markov's inequality, and the last inequality uses $1-x \leq \exp(-x)$, which holds for $x \geq 0$ and is strict if $x \not = 0$. 

By $\mu_{\gamma j} \geq p_{\gamma}$, for each $j \in [\ell]$, we can further upper bound the last line of Raghavan's proof to obtain $\frac{1}{\exp(b(-\delta) \mu_{\gamma})} \leq \frac{1}{\exp(b(-\delta) \ell p_{\gamma})}$. Theorem \ref{theorem:Chernoff_Positive_Events} guarantees existence of the multi set $\mathcal{L}$ of permutations by bounding
$\Prob[X(P_{\gamma}) < (1-\delta) \cdot p_{\gamma}  \ell]  \leq  \exp(- \delta^2 p_{\gamma} \ell/2)$, see (\ref{eqn:Chernoff_Hoeffding_111}). Now, repeating the Raghavan's proof of the Chernoff-Hoeffding bound, cf. \cite{Young95}, with each $\mu_{\gamma j}$ replaced by $p_{\gamma}$ implies that
\begin{eqnarray}
  \Prob\left[X(P_{\gamma}) < (1-\delta) \cdot p_{\gamma} \ell \right]
  &\leq&
  \prod_{j=1}^{\ell} \frac{1-\delta \cdot  \Exp[X_j(P_{\gamma})]}{(1-\delta)^{(1-\delta)p_{\gamma}}} \label{Eq:Raghavan-1_2} \\
  &<&
  \prod_{j=1}^{\ell} \frac{\exp(- \delta \mu_{\gamma j})}{(1-\delta)^{(1-\delta)p_{\gamma}}} \nonumber 
  \leq
  \prod_{j=1}^{\ell} \frac{\exp(- \delta p_{\gamma})}{(1-\delta)^{(1-\delta)p_{\gamma}}} \nonumber \\
  &=&
  \frac{1}{\exp(b(-\delta) \ell p_{\gamma})}
  \,\, < \,\, \frac{1}{\exp(\delta^2 \ell p_{\gamma}/2)} \label{Eq:Raghavan-2_2} \, ,
\end{eqnarray} where the last inequality follows by a well known fact that $b(-x) > x^2/2$, see, e.g., \cite{Young95}. By this argument and by the union bound we obtain that (note that $\mathcal{P} =\{P_1,\ldots,P_q\}$):
\begin{eqnarray}
\Prob\left[\exists P_{\gamma} \in \mathcal{P} : X(P_{\gamma}) < (1-\delta) \cdot p_{\gamma} \ell \right] \,\, \leq \,\, 
\sum_{\gamma = 1}^q \prod_{j=1}^{\ell} \frac{1-\delta \cdot  \Exp[X_j(P_{\gamma})]}{(1-\delta)^{(1-\delta)p_{\gamma}}} \label{Eq:Union_Bound_2} \, .
\end{eqnarray} 

Let us define a function $\phi_j(P_{\gamma})$ as $\phi_j(P_{\gamma})=1$ if $\pi_j$ makes event $P_{\gamma}$ true, and $\phi_j(P_{\gamma}) = 0$ otherwise. The above proof upper bounds the probability of failure by the expected value of
$$
  \sum_{\gamma=1}^q \prod_{j=1}^{\ell} \frac{1-\delta \cdot \phi_j(P_{\gamma})}{(1-\delta)^{(1-\delta)p_{\gamma}}} \, ,
$$ the expectation of which is less than $\sum_{\gamma=1}^q \exp(-\delta^2 \ell p_{\gamma}/2)$, which is strictly smaller than $1$ for appropriately chosen large $\ell$.

 Suppose that we have so far chosen the (fixed) permutations $\pi_1,\ldots,\pi_s$ for some $s \in \{1,2,\ldots,\ell-1\}$, the (semi-random)
permutation $\pi_{s+1}$ is currently being chosen, and the remaining (fully random) permutations, if any, are $\pi_{s+2},\ldots,\pi_{\ell}$. The conditional expectation is then
\begin{eqnarray}
 & & \sum_{\gamma=1}^q\left(\prod_{j=1}^{s} \frac{1-\delta \cdot \phi_j(P_{\gamma})}{(1-\delta)^{(1-\delta)p_{\gamma}}}\right) \cdot \left(\frac{1-\delta \cdot \Exp[\phi_{s+1}(P_{\gamma})]}{(1-\delta)^{(1-\delta)p_{\gamma}}}\right) \cdot \left(\frac{1-\delta \cdot \Exp[\phi_j(P_{\gamma})]}{(1-\delta)^{(1-\delta)p_{\gamma}}}\right)^{\ell - s - 1} \nonumber \\
  &\leq& 
  \sum_{\gamma=1}^q\left(\prod_{j=1}^{s} \frac{1-\delta \cdot \phi_j(P_{\gamma})}{(1-\delta)^{(1-\delta)p_{\gamma}}}\right) \cdot \left(\frac{1-\delta \cdot \Exp[\phi_{s+1}(P_{\gamma})]}{(1-\delta)^{(1-\delta)p_{\gamma}}}\right) \cdot \left(\frac{1-\delta \cdot p_{\gamma}}{(1-\delta)^{(1-\delta)p_{\gamma}}}\right)^{\ell - s - 1} \nonumber \\ 
  &=& \, \Phi(\pi_{s+1}(1),\pi_{s+1}(2), \ldots, \pi_{s+1}(r)) \label{Eq:Pessimistic_Est-2_2} \, ,
\end{eqnarray} where in the inequality, we used that $\Exp[\phi_j(P_{\gamma})] \geq p_{\gamma}$. Note, that 
\begin{eqnarray*}
\Exp[\phi_{s+1}(P_{\gamma})] &=&
\Exp[\phi_{s+1}(P_{\gamma}) \, | \, \pi_{s+1}(r) = \tau] \\
&=& \Prob[X_{s+1}(P_{\gamma}) = 1 \, | \, \pi_{s+1}(1),\pi_{s+1}(2), \ldots, \pi_{s+1}(r-1), \pi_{s+1}(r) = \tau] \, ,
\end{eqnarray*} 
where positions $\pi_{s+1}(1),\pi_{s+1}(2), \ldots, \pi_{s+1}(r)$ have already been fixed in the semi-random permutation $\pi_{s+1}$, $\pi_{s+1}(r)$ has been fixed in particular to $\tau \in [n] \setminus \{\pi_{s+1}(1),\pi_{s+1}(2), \ldots, \pi_{s+1}(r-1)\}$, and this value can be computed by using the algorithm from Theorem \ref{theorem:semi-random-conditional_2_111}.
This gives the pessimistic estimator $\Phi$ of the failure probability in (\ref{Eq:Union_Bound_2}) for our derandomization.


We can rewrite our pessimistic estimator $\Phi$ as follows
$$
  \sum_{\gamma=1}^q \omega(\ell,s,\gamma) \cdot  \left(\prod_{j=1}^{s} \left(1-\delta \cdot \phi_j(P_{\gamma})\right)\right) \cdot \left(1-\delta \cdot \Exp[\phi_{s+1}(P_{\gamma})]\right) \, , \mbox{ where } \,
  \omega(\ell,s,\gamma) = 
  \frac{\left(1-\delta p_{\gamma}\right)^{\ell-s+1}}{(1-\delta)^{(1-\delta)p_{\gamma}\ell}} \, .
$$
Recall that the value of $\pi_{s+1}(r)$ in the semi-random permutation was fixed but not final. To make it fixed and final, we simply choose the value $\pi_{s+1}(r) \in [n] \setminus \{\pi_{s+1}(1),\pi_{s+1}(2), \ldots, \pi_{s+1}(r-1)\}$ that minimizes this last expression
\begin{eqnarray}
  \sum_{\gamma=1}^q \omega(\ell,s,\gamma) \cdot  \left(\prod_{j=1}^{s} \left(1-\delta \cdot \phi_j(P_{\gamma})\right)\right) \cdot \left(1-\delta \cdot \Exp[\phi_{s+1}(P_{\gamma})]\right) \, . \label{Eq:Pessimistic_Est_Obj-2_2}
\end{eqnarray}


\begin{proofof}{Lemma \ref{lem:potential_correct_2_111}}
  This follows from the following three properties: (a) it is an upper bound on the conditional probability of failure;  
  (b) it is initially strictly less than $1$; (c) some new value of the next index variable in the partially fixed semi-random permutation 
  $\pi_{s+1}$ can always be chosen without increasing it.
  
  Property (a) follows from (\ref{Eq:Raghavan-1_2}) and (\ref{Eq:Union_Bound_2}). To prove (b) we see by (\ref{Eq:Raghavan-2_2}) and (\ref{Eq:Union_Bound_2}) that
  $$
  \Prob\left[\exists P_{\gamma} \in \mathcal{P} : X(P_{\gamma}) < (1-\delta) \cdot p_{\gamma} \ell \right] < \sum_{\gamma=1}^q \exp(- \delta^2 \ell p_{\gamma}){\gamma}/2) \, .
  $$ Note that $\ell \geq \frac{2\log{q}}{\delta^2 p_0}$ implies that  $\sum_{\gamma=1}^q \exp(- \delta^2 \ell p_{\gamma}){\gamma}/2) \leq 1$, see the proof of Theorem \ref{theorem:Chernoff_Positive_Events}. (a) and (b) follow by the above arguments and by the assumption about $\ell$.
  
  Part (c) follows because $\Phi$ is an expected value conditioned on the choices made so far. For the precise argument let us observe that
\begin{eqnarray*}
  & & \Prob[X_{s+1}(P_{\gamma}) = 1 \, | \, \pi_{s+1}(1),\pi_{s+1}(2), \ldots, \pi_{s+1}(r-1)] \\
  &=& \sum_{\tau \in T}  \frac{1}{n-r+1} \cdot \Prob[X_{s+1}(P_{\gamma}) = 1 \, | \, \pi_{s+1}(1),\pi_{s+1}(2), \ldots, \pi_{s+1}(r-1), \pi_{s+1}(r) = \tau] \, ,
\end{eqnarray*} where $T = [n] \setminus \{\pi_{s+1}(1),\pi_{s+1}(2), \ldots, \pi_{s+1}(r-1)\}$. Then by (\ref{Eq:Pessimistic_Est-2_2}) we obtain
\begin{eqnarray*}
& &
\Phi(\pi_{s+1}(1),\pi_{s+1}(2), \ldots, \pi_{s+1}(r-1)) \\
&=&
\sum_{\tau \in T}  \frac{1}{n-r+1} \cdot \Phi(\pi_{s+1}(1),\pi_{s+1}(2), \ldots, \pi_{s+1}(r-1), \pi(r) = \tau) \\
&\geq& 
\min \{ \Phi(\pi_{s+1}(1),\pi_{s+1}(2), \ldots, \pi_{s+1}(r-1), \pi(r) = \tau) \,\, : \,\, \tau \in T \} \, ,
\end{eqnarray*} which implies part (c). This finishes the proof of Lemma \ref{lem:potential_correct_2_111}.
\end{proofof}

\begin{proofof}{Theorem \ref{Thm:Derandomization_2_111}}
The computation of the conditional probabilities ${\sf Prob}(A)$, for any atomic event $A$, by Algorithm \ref{algo:Cond_prob_2_111} is correct by Theorem \ref{theorem:semi-random-conditional_2_111}. Algorithm \ref{algo:Find_perm_2_111} is a direct translation of the optimization of the pessimistic estimator $\Phi$. In particular, observe that the correctness of the weight initialization in Line \ref{Alg:Weight_Init_2} of Algorithm \ref{algo:Find_perm_2_111}, and of weight updates in Line \ref{Alg:Weight_Update_2}, follow from the form of the pessimistic estimator objective function in (\ref{Eq:Pessimistic_Est_Obj-2_2}).

The value of the pessimistic estimator $\Phi$ is strictly smaller than $1$ at the beginning and in each step, it is not increased by properties of the pessimistic estimator (Lemma \ref{lem:potential_correct_2_111}). Moreover, at the last step all values of all $\ell$ permutations will be fixed, that is, there will be no randomness in the computation of $\Phi$. Observe that $\Phi$ is an upper bound on the expected number of the positive events from $\mathcal{P}$ that are not well-covered. So at the end of the derandomization process the number of such positive events will be $0$, implying that all these positive events from $\mathcal{P}$ will be well-covered, as desired.

Using Theorem \ref{theorem:semi-random-conditional_2_111} it is straightforward to count the number of operations in Algorithm \ref{algo:Find_perm_2_111} as follows
$$
  O\left(  
    \ell \cdot \left( n \cdot \left( q \cdot \left(n \cdot |Atomic(P)|^2 \cdot (n^2 + nk (k! + \log^2 n))
                                             \right) 
                                     + n
                              \right) 
                      + qn 
               \right)  
  \right) =
$$
$$
  = O\left( \ell n^2 q \cdot |Atomic(P)|^2 \cdot (n^2 + nk (k! + \log^2 n)) \right)
  = O\left( \ell n^3 q \cdot t^{2k} \cdot (k!)^2 \cdot \left(n + k k! + k \log^2 n\right) \right) \, ,
$$ where we used that $|Atomic(P)| \leq t^k \cdot k!$. This concludes the proof of Theorem \ref{Thm:Derandomization_2_111}.
\end{proofof}

\section{Dimension reduction}
\label{section:dim_reduction_2}

A set $\mathcal{G}$ of functions $g : [n] \rightarrow [\ell]$ is called a \emph{dimension-reduction set} with parameters $(n,\ell, d)$  if it satisfies the following two conditions:

\vspace*{2mm}
\noindent
(1) the number of functions that have the same value on any element of the domain is bounded:\\
$\forall_{i,j \in [n], i \neq j} : |\{g \in \mathcal{G} : g(i) = g(j) \}| \le d;$

\vspace*{2mm}
\noindent
(2) for each function, the elements of the domain are almost uniformly partitioned into the elements of the image:
$\forall_{i \in [\ell] ,g \in \mathcal{G}} : \frac{n}{\ell} \le |g^{-1}(i)| \le \frac{n}{\ell} + o(\ell).$

\vspace*{2mm}
The dimension-reduction set of functions is key in our approach to find low-entropy probability distribution that guarantees a high probability of positive events occurrence. When applied once, it reduces the size of permutations needed to be considered for optimal success probability from $n$-element to $\ell$-element.
The above conditions (1) and (2) are 
to ensure that the found set of $\ell$-element permutations can be reversed into $n$-element permutations without much loss of the occurrence probability of the atomic events (the probabilities of positive events are later reconstructed from a sum of the probabilities of atomic events). Kesselheim, Kleinberg and Niazadeh~\cite{KesselheimKN15} were first to use this type of reduction in context of secretary problems. Our refinement is adding the new condition -- see condition $(2)$ above -- which significantly strengthens the reduction for threshold algorithms. More specifically it allows to preserve the bucket structure of atomic events and has huge consequences for our constructions of low entropy distributions. In particular condition $(2)$ is crucial in proving bounds on the competitive ratio. Before constructing such families, let us describe how a dimension-reduction functions can be used to lift probabilities of atomic events from distribution of lower-dimensional permutations to higher-dimensional permutations. 

\ignore{
\noindent \textbf{The setting.} Let us fix two parameters $\ell < n$ such that $n$ is divisible by $\ell$. We define two bucketings: a bucketing $\mathcal{B}_{1} = \{1, \ldots r_{1} \}, \{r_{1} + 1, \ldots, r_{2}\}, \ldots, \{r_{t - 1} + 1, \ldots, r_{t} = \ell\}$ of set $\{1, \ldots, \ell\}$\footnote{Note that while defining.. } and an associated bucketing $\mathcal{B}_{2} = \{1, \ldots, r_{1} \cdot \frac{n}{\ell}\}, \{r_{1} \cdot \frac{n}{\ell} + 1, \ldots, r_{2} \cdot \frac{n}{\ell} \}, \ldots, \{r_{t - 1} \cdot \frac{n}{\ell} + 1, \ldots, r_{t} \cdot \frac{n}{\ell} = n\}$ of set $\{1, \ldots, n\}$. Observe that both bucketings contains the same number of buckets and that buckets of $\mathcal{B}_{2}$ are $\frac{n}{\ell}$ times larger. Consider another parameter $k < \ell$. Assume that we are given a set of $\ell$-element permutations $\Omega_{\ell}$, and a set of positive events $\mathcal{P}_{1}$, constructed from an atomic family $\mathcal{A}_{k, \mathcal{B}_{1}}$

\begin{lemma}
Let $\Pi_{1}$ be a distribution over $\ell$-element permutations and let $\mathcal{A}_{k, \mathcal{B}_{1}}$ be a family of atomic events, where $k$ and $\mathcal{B}$ \pk{$\mathcal{B}$ or $\mathcal{B}_1$ ??} is a bucketing of $\{1, \ldots, \ell \}$. Let $\mathcal{B}_{2}$ be associated bucketing of $\{1, \ldots, n\}$ such that the number of buckets is $|\mathcal{B}|$ and the buckets are at most $\frac{n}{\ell}$ times bigger than buckets in $\mathcal{B}_{1}$. Then there exists a distribution $\Pi_{2}$ over $n$-element permutations such that \pk{Below is incomplete??}
$$\forall \Prob_{\Pi_{2}}(\mathcal{A}_{k}),$$
and the entropy of $\Pi_{2}$ is larger by at most $O(\log{\ell})$ \pk{Define precisely what this means??}.
\end{lemma}
\begin{proof}
For a given function $g \in \mathcal{G}$ and a permutation $\pi \in \mathcal{L}$ we denote by $\pi \circ g : [n] \rightarrow [n]$ any permutation $\sigma$ over set $[n]$ satisfying the following:
$\forall_{i,j \in [n], i \neq j}$ if $\pi^{-1}(g(i)) < \pi^{-1}(g(j))$ then $\sigma^{-1}(i) < \sigma^{-1}(j)$.
The aforementioned formal definition has the following natural explanation. The function $g \in \mathcal{G}, g : [n] \rightarrow [\ell]$ may be interpreted as an assignment of each element from set $[n]$ to one of $\ell$ blocks. Next, permutation $\pi \in \mathcal{L}$ determines the order of those blocks. So, the final permutation is obtained by listing the elements of the blocks in the order given by $\pi$. The order of elements inside the blocks is irrelevant.

The set $\mathcal{L}'$ of permutations over $[n]$ is defined as $\mathcal{L}' = \{ \pi \circ g : \pi \in \mathcal{L}, g \in \mathcal{G}\}$, and its size is $|\mathcal{L}| \cdot |\mathcal{G}|$. It is easy to observe that $\mathcal{L}'$ can be computed in $O(|\mathcal{G}| \cdot |\mathcal{L}'|)$ time.

Consider now a $k$-tuple $\hat{S} = (i_{1}, \ldots, i_{k})$ and a function $f : [k] \rightarrow [t]$ defining the mapping of these indices to the buckets of family $\mathcal{B}_{2}$. Our goal now is to calculate $\Prob_{\pi \circ g \sim \mathcal{L'}}( \pi \circ g \in A_{K, f} ).$


Observe, that the random experiment of choosing $\pi \circ g \in \mathcal{L}'$ can be seen as choosing random $f \in \mathcal{G}$ and random $\pi \in \mathcal{L}$ independently. 

Denote $g(\hat{S}) = (g(i_{1}), \ldots, g(i_{k}) )$ a random variable being an image of the $k$-tuple $\hat{S}$ under a random function $g \in \mathcal{G}$. 
Assumed that $\mathcal{G}$ is a dimension-reduction set of functions with parameters $(n,\ell, d)$ we have that for any two indices $j,j' \in [k], j \neq j'$ the probability that $g(i_j) = g(i_{j'})$ is at most $\frac{d}{\ell}$. 
By the union bound argument we conclude that the probability that for all $j,j' \in [k], j \neq j'$ it holds $g(i_j) \neq f(i_{j'})$ is at least $1 - \frac{k^2 d}{\ell}$.

Assume now, that the $k$-tuple $g(\hat{S})$ consists of pair-wise different elements of the set $[\ell]$.  Quite naturally, we will show, that if $\pi \in \mathcal{L}$ is contained in the event $A_{g(\hat{S}), f} \subseteq \Pi_{\ell}$, then the permutation $\pi \circ g$ is contained in the event $A_{\hat{S}, f} \subseteq \Pi_{n}$ which will prove the claimed result.

To this end, assume that the permutation $\pi \in \mathcal{L}$ belong to $A_{g(\hat{S}), f}$ which translates to the following conditions: 
$$\forall_{j \in [k]} \pi^{-1}(g(i_{j})) \in B_{f(j)},$$
and
$$\pi^{-1}(g(i_{1})) < \pi^{-1}(g(i_{2})) < \ldots < \pi^{-1}(g(i_{k})).$$ 

The permutation $\pi \circ g$ is any permutation $\sigma$ of $n$-element such that if we have that $\pi^{-1} (g(i_{j})) < \pi^{-1} (g(i_{j}))$, then $\sigma^{-1} (g(i_{j})) < \sigma^{-1} (g(i_{j}))$, which assures that $k$-tuple $(i_1, \ldots, i_{k})$ appears in the proper order in $\sigma$. To check that elements of the $k$-tuple appear in proper buckets, consider $i_{j} \in \hat{S}$, for $1 \le j \le k.$ It holds that $\pi(g(i_{j})) \in B_{f(j)}$. Because the bucketing $\mathcal{B}_{2}$ is $\frac{n}{ell}$ times larger than the bucketing (i.e. size of each bucket scales up by the factor of $\frac{n}{\ell}$)

\pk{This proof is not complete ??}

\end{proof}
}

\subsection{A polynomial time construction of a dimension-reduction set}

We show a general pattern for constructing a set of functions that reduce the dimension of permutations from $n$ to $q < n$ for which we use refined Reed-Solomon codes. 

\ignore{OLD LEMMA 3:
\begin{lemma} \label{lem:Reed_Solomon_Construction}
There exists a set $\mathcal{G}$ of functions $g : [n] \longrightarrow [q]$, for some prime integer $q \geq 2$, such that for any two distinct indices $i, j \in [n]$, $i \not = j$, we have $$|\{g \in \mathcal{G} : g(i) = g(j)\}| \leq d \,\,\,\,\, \mbox{and} \,\,\,\,\, \forall {q' \in [q]} : |g^{-1}(q')| \in \left\{\myfloor{n/q}, \myfloor{n/q} + 1\right\},$$ where $1 \le d < q$ is an integer such that $n \le q^{d + 1}$. Moreover, $|\mathcal{G}| = q$ and set $\mathcal{G}$ can be constructed in deterministic polynomial time in $n, q, d$.
\end{lemma}
}

\begin{lemma} \label{lem:Reed_Solomon_Construction}
Let $n$ and $d$ be positive integers and $q \geq 2$ be any prime integer such that $1 \leq d < q$ and $n \leq q^{d+1}$. There exists a set $\mathcal{G}$ of functions $g : [n] \longrightarrow [q]$, such that for any two distinct indices $i, j \in [n]$, $i \not = j$, we have $$|\{g \in \mathcal{G} : g(i) = g(j)\}| \leq d \,\,\,\,\, \mbox{and} \,\,\,\,\, \forall {q' \in [q]} : |g^{-1}(q')| \in \left\{\myfloor{n/q}, \myfloor{n/q} + 1\right\} \, .$$ Moreover, $|\mathcal{G}| = q$ and set $\mathcal{G}$ can be constructed in deterministic polynomial time in $n, q, d$.
\end{lemma}
\begin{proof} Let us take any finite field $\F$ of size $q \geq 2$. It is known that $q$ must be of the following form: $q = p^r$, where $p$ is any prime number and $r \geq 1$ is any integer; this has been proved by  Galois, see \cite[Chapter 19]{Stewart_Book}. 
 We will do our construction assuming that $\F = \F_q$ is the Galois field, where $q$ is a prime number. 
 
 Let us take the prime $q$ and the integer $d$ such that $1 \leq d < q$ and $q^{d+1} \geq n$. We want to take here the smallest such prime number and an appropriate smallest $d$ such that $q^{d+1} \geq n$. 
 
 Let us now consider the ring $\F[x]$ of univariate polynomials over the field $\F$ of degree $d$. The number of such polynomials is exactly $|\F[x]| = q^{d+1}$. By the field $\F_q$ we chose, we have that $\F_q = \{0,1,\ldots,q-1\}$. We will now define the following $q^{d+1} \times q$ matrix $M = (M_{i,q'})_{i \in [q^{d+1}], q' \in \{0,1\ldots,q-1\}}$ whose rows correspond to polynomials from $\F[x]$ and columns -- to elements of the field $\F_q$.
 
 Let now $\G \subset \F[x]$ be the set of all polynomials from $\F[x]$ with the free term equal to $0$, that is, all polynomials of the form $\sum_{i=1}^d a_i x^i \in \F[x]$, where all coefficients $a_i \in \F_q$, listed in {\em any} fixed order: $\G =\{g_1(x), g_2(x),\ldots,g_{q^d}(x)\}$.
 To define matrix $M$ we will list all polynomials from $\F[x]$ in the following order $\F[x] = \{f_1(x), f_2(x),\ldots,f_{q^{d+1}}(x)\}$, defined as follows. The first $q$ polynomials $f_1(x), f_2(x),\ldots,f_q(x)$ are $f_i(x) = g_i(x) + i-1$ for $i \in \{1,\ldots,q\}$; note that here $i-1 \in \F_q$. The next $q$ polynomials $f_{q+1}(x), f_{q+2}(x),\ldots,f_{2q}(x)$ are $f_{q+i}(x) = g_{q+i}(x) + i-1$ for $i \in \{1,\ldots,q\}$, and so on. In general, to define polynomials $f_{qj+1}(x), f_{qj+2}(x),\ldots,f_{qj+q}(x)$, we have $f_{qj+i}(x) = g_{qj+i}(x) + i-1$ for $i \in \{1,\ldots,q\}$, for any $j \in \{0,1,\ldots,q^d - 1\}$.
 
 We are now ready to define matrix $M$: $M_{i,q'} = f_i(q')$ for any $i \in [q^{d+1}], q' \in \{0,1\ldots,q-1\}$. From matrix $M$ we define the set of functions $\mathcal{G}$ by taking precisely $n$ first rows of matrix $M$ (recall that $q^{d+1} \geq n$) and letting the columns of this truncated matrix define functions in the set $\mathcal{G}$. More formally, $\mathcal{G} = \{h_{q'} : q' \in \{0,1\ldots,q-1\}\}$, where each function 
 $h_{q'} : [n] \longrightarrow [q]$ for each $q' \in \{0,1\ldots,q-1\}$ is defined as $h_{q'}(i) = f_i(q')$ for $i \in \{1,2,\ldots,n\}$.
 
 We will now prove that $|h_{q'}^{-1}(q'')| \in \left\{\myfloor{\frac{n}{q}}, \myfloor{\frac{n}{q}} + 1\right\}$ for each function $h_{q'} \in \mathcal{G}$ and for each $q'' \in \{0,1\ldots,q-1\}\}$. Let us focus on column $q'$ of matrix $M$. Intuitively the property that we want to prove follows from the fact that when this column is partitioned into $q^{d+1} / q$ ``blocks" of $q$ consecutive elements, each such block is a permutation of the set $\{0,1\ldots,q-1\}$ of elements from the field $\F_q$. More formally, the $j$th such ``block" for $j \in \{0,1,\ldots,q^d - 1\}$ contains the elements $f_{qj+i}(q')$ for all $i \in \{1,\ldots,q\}$. But by our construction we have that $f_{qj+i}(q') = g_{qj+i}(q') + i-1$ for $i \in \{1,\ldots,q\}$. Here, $g_{qj+i}(q') \in \F_q$ is a fixed element from the Galois field $\F_q$ and elements $f_{qj+i}(q')$ for $i \in \{1,\ldots,q\}$ of the ``block" are obtained by adding all other elements $i-1$ from the field $\F_q$ to $g_{qj+i}(q') \in \F_q$. This, by properties of the field $\F_q$ imply that $f_{qj+i}(q')$ for $i \in \{1,\ldots,q\}$ are a permutation of the set $\{0,1,\ldots,q - 1\}$.\\
 
 \noindent
 {\em Claim.} For any given $j \in \F_q = \{0,1,\ldots,q-1\}$ the values $j+i$, for $i \in \{0,1,\ldots,q-1\}$, where the addition is in the field $\F_q$ modulo $q$, are a permutation of the set $\{0,1,\ldots,q-1\}$, that is, $\{j+i : i \in \{0,1,\ldots,q-1\}\} = \{0,1,\ldots,q-1\}$. \\
 
 \noindent
\begin{proof}
In this proof we assume that addition and substraction are in the field $\F_{q}$. 
The multiset $\{ j + i : i \in \{0,1,\ldots,q-1\} \} \subseteq \F_{q}$ consists of $q$ values, thus it suffices to show that all values from the multiset are distinct. Assume contrary the there exists two different elements $i$, $i' \in \F_{q}$ such that $j + i = j + i'$. It follows that $i' - i = 0$. This cannot be true since $|i'|, |i| < q$ and $i'$ and $i$ are different.    
\end{proof}
 The property that $|h_{q'}^{-1}(q'')| \in \left\{\myfloor{\frac{n}{q}}, \myfloor{\frac{n}{q}} + 1\right\}$ now follows from the fact that in the definition of the function $h_{q'}$ all the initial ``blocks" $\{ f_{qj+i}(q') : i \in \{1,\ldots,q\}\}$ for $j \in \{0,1,\ldots,\myfloor{\frac{n}{q}} - 1\}$ are fully used, and the last ``block" $\{ f_{qj+i}(q') : i \in \{1,\ldots,q\}\}$ for $j = \myfloor{\frac{n}{q}}\}$ is only partially used. 
 
 Finally, we will prove now that $|\{g \in \mathcal{G} : g(i) = g(j)\}| \leq d$. This simply follows form the fact that for any two polynomials $g, h \in \F[x]$, they can assume the same values on at most $d$, their degree, number of elements from the field $\F_q = \{0,1,\ldots,q - 1\}$. This last property is true because the polynomial $g(x)-h(x)$ has degree $d$ and therefore it has at most $d$ zeros in the field $\F[x]$.
 
 Let us finally observe that the total number of polynomials, $q^{d+1}$, in the field $\F[x]$ can be exponential in $n$. However, this construction can easily be implemented in polynomial time in $n,q,d$, because we only need the initial $n$ of these polynomials. Thus we can simply disregard the remaining $q^{d+1} - n$ polynomials. This completes the proof of the lemma.
\end{proof}

We can obtain the corollary below by choosing the prime $q$ in Lemma~\ref{lem:Reed_Solomon_Construction} as the largest prime such that $q < 2 \log(n)$. Then, by the properties of primes, we have that $\log(n) \leq q < 2 \log(n)$, and this implies that $q^{d+1} > n$ for $d = \Theta(q)$. This gives that $q$ is at least $\Omega(\log(n))$ as required.

\begin{corollary} \label{cor:dim-red-1}
Observe that setting $q \in \Omega(\log{n})$, $d \in \Theta(q)$ in Lemma~\ref{lem:Reed_Solomon_Construction} results in a dimension-reduction set of functions $\mathcal{G}$ with parameters $(n,q, \sqrt{q})$. Moreover, set $\mathcal{G}$ has size $q$ and as long as $q \in O(n)$, it can be computed in polynomial time in $n$.
\end{corollary}

\section{Applications of abstract derandomization and dimension reduction }\label{sec:Applications}
In this section, we show how to apply our generic framework of computing in polynomial-time distributions that serve well threshold algorithms. We then show two applications of this construction and obtain (free-order) algorithms for the multiple-choice secretary problem and the classic secretary problem, with almost-tight competitive ratios.
The construction starts from choosing a parameter $\ell$, usually exponentially smaller than the number of elements $n$, and defining a family of atomic events with respect to a specific bucketing on the set of $\ell$-element permutations. The goal of the bucketing is to capture the algorithm's behavior with respect to different checkpoints and its choice depends on the choice of algorithm. Bucketing can also help defining the algorithm's thresholds. Then we show, that algorithm's success can be described solely by events from the atomic family. Despite the small value of $\ell$, the family of atomic events is usually too rich to support the construction of a low-entropy distribution that preserves the probabilities of atomic events. Thus our idea is to group atomic events into \textit{positive} events, that have larger probabilities. Then, we use the derandomization framework introduced in Section~\ref{sec:abstract_derand} to deterministically construct such low-entropy distribution that supports probabilities of positive events. Doing so on $\ell$-element permutations, makes it computable in polynomial time in $n$, if $\ell$ is sufficiently small. To lift the constructed distribution to full-size $n$ permutations, we use as a black box the dimension-reduction technique introduced in Section~\ref{section:dim_reduction_2}. This all together makes a concise construction of the proper $n$-element permutations distribution, which the algorithm uses to decide in which order to process the elements.

\subsection{Multiple-choice free order secretary problem}\label{sec:application_k_secretary}

As described above, we will start with describing the algorithm and the construction of a smaller $\ell$-element permutations space, for $k < \ell < n$. Then we will apply the lifting technique that uses a dimension-reduction set to obtain a distribution on $n$-element permutations. We use here a modified version of an algorithm with multiple 
thresholds from the survey by Gupta and Singla \cite{GuptaSingla}:

\begin{algorithm}[ht!]
\SetAlgoVlined

\DontPrintSemicolon
 \KwIn{Integers $\ell \geq 2$, $k \leq \ell$, sequence of $\ell$ items each with an adversarial value, and $\pi$ s.t.~$\pi \sim \Pi_{\ell}$.}
 \KwOut{Selected $k$ items from the input sequence.}
 
   Set $\delta := \sqrt{\frac{\log k }{k}}$.\;
   
   Consider the 
   $\ell$ items in the order given by the random permutation $\pi$.\;
   
   Denote $\ell_j := 2^j \delta \ell$ and ignore the first $\ell_0 = \delta \ell$ items.\;
      
   \For{$j \in [0, \log 1/\delta)$, phase $j$ runs on arrivals in window $W_j := (\ell_j,\ell_{j+1}]$}{
      Let $k_j := (k/\ell)\ell_j$ and let $\varepsilon_j := \sqrt{3 \delta/2^j}$.\; 
      Set threshold $\tau_j$ to be the $(1-\varepsilon_j)k_j$th-largest value among the 
      first $\ell_j$ items.\;
      Choose any item in window $W_j$ with value above $\tau_j$ (until budget $k$ is exhausted).\;
   }
 \caption{Multiple-choice secretary algorithm with adaptive 
thresholds and permutation distribution $\mathcal{D} \in \{\Pi_{\ell}, \mathcal{L}_{\ell}\}$.}
 \label{algo:k_secr_algo_1}
\end{algorithm}

\subsubsection{Defining positive events and probabilistic analysis} Here 
we show a lower bound on the measure of each positive event in the space $\Omega_{\ell}$. Let $\hat{S} = \{j_1,\ldots, j_{k}\}$, called a $k$-tuple, be an ordered subset $\{j_1,\ldots, j_{k}\} \subseteq [\ell]$ of $k$ indices. $\hat{S}$ models the positions in the adversarial permutation of the $k$ largest adversarial values $v(1),v(2), \ldots,v(k)$. Let $\mathcal{K}$ be the set of all such $k$-tuples.

Suppose that a random permutation $\pi$ is chosen in the probabilistic space $\Omega$.

We first define some auxiliary events. Let $H_j$ ($L_j$, resp.) be the event that $\tau_j$ is not too low, i.e., is high enough, (not too high, resp.), for $j \in [0, \log 1/\delta)$. More precisely, we define
$$
  H_j = \{\tau_j \geq \min \{v(i) : i=1,...,k\} = v(k)\}, 
$$ and 
$\neg H_j = \{\mbox{less than } (1-\varepsilon_j)k_j \mbox{ items from } \hat{S} \mbox{  fall in the first } \ell_j \mbox{ items in } \pi\}$. We also define 
$$
  L_j = \{\tau_j \leq v((1-2 \varepsilon_j)k)\},
$$ and 
$$
\neg L_j = \{\mbox{more than } (1-\varepsilon_j)k_j \mbox{ items with values } v(1),v(2),\ldots, v((1-2 \varepsilon_j)k)\mbox{ fall in first } \ell_j \mbox{ items in } \pi\}.
$$ Event $C_i$ means that item $j_i$ with value $v(i)$, for $i \in \{1,2,\ldots, (1-2\varepsilon_0)k\}$, will be chosen by the above algorithm. Similarly, for any $j \in \{0,1,\ldots, \log(1/\delta) - 1\}$ and any $i \in \{(1-2\varepsilon_j)k,\ldots, (1-2\varepsilon_{j+1})k\}$, event $C_i$ means that item $j_i$ with value $v(i)$ is chosen by the algorithm. We note that
$$
  C_i = \{\mbox{item } j_i \mbox{ with value } v(i) \mbox{ arrives after position } \ell_0 \mbox{ in } \pi\}, \,\, i \in \{1,2,\ldots, (1-2\varepsilon_0)k\}, 
$$
$$
  C_i = \{\mbox{item } j_i \mbox{ with value } v(i) \mbox{ arrives after position } \ell_{j+1} \mbox{ in } \pi\}, \,\, i \in \{(1-2\varepsilon_j)k,\ldots, (1-2\varepsilon_{j+1})k\} \, ,
$$ and for any $j \in [0,\log 1/\delta)$.

We can now define a {\em positive event} corresponding to the $k$-tuple $\hat{S}$ as the event
$$
  P_{\hat{S},i} = \left(\bigcap_{j \in [0,\log 1/\delta]} \left(L_j \cap H_j\right)\right) \cap C_i \, ,
$$ for any $i \in \{1,2,\ldots, (1-2\varepsilon_0)k\}$, and for any $j \in [0,\log 1/\delta)$ and any 
$i \in \{(1-2\varepsilon_j)k,\ldots, (1-2\varepsilon_{j+1})k\}$. Note that here $i \in \{1,2,\ldots,(1-\delta)k\}$.

We will now show a lower bound on the probability of a positive event in the probabilistic space $\Omega$. Towards this aim we will follow the analysis from the survey by Gupta and Singla \cite{GuptaSingla}. In this analysis they apply Chernoff bounds to a family of partly correlated random variables. Chernoff bounds are in principle applicable to a family of mutually independent random variables, but they show some alternative ways how one might avoid this issue, leaving the details of the argument to the reader. We suggest another way based on {\em negative association} of these random variables, and present a self-contained proof.

We will need the following technical lemma, where we exploit the fact that the indicator random variables which indicate if indices fall in an interval in a random permutation are {\em negatively associated}, see, e.g., \cite{D_Wajc_2017}.

\begin{lemma}\label{lemma:Chernoff-per}
Let $X$ denote a random variable that counts the number of values from the set $\{1,2, \ldots, a\}$, for an integer number $ 1 \le a << n$, that are on positions smaller than the checkpoint $m$ in a random permutation $\sigma \sim \Pi_n$. Define $\mu = \E(X) = a\frac{m}{n}$. 
Then, we obtain that
$$
  \Prob(X \geq (1+\eta)\mu) \leq \exp(-\eta^2 \mu/3), \,\, \mbox{for any} \,\, \eta > 0\, ,
$$ and
$$
  \Prob(X \leq (1-\eta)\mu) \leq \exp(-\eta^2 \mu/2), \,\, \mbox{for any} \,\, \eta \in (0,1) \, .
$$
\end{lemma}

\begin{proof} 
For a number $i$ in the set $\{1,2, \ldots, a\}$ consider an indicator random variable $X_{i}$ equal to $1$ if the position of the number $i$ is in the first $m$ positions of a random permutation $\sigma$, and equal to $0$ otherwise. We have that $X = \sum_{i=1}^{a} X_{i}$. Using standard techniques, for instance Lemma $8$ and Lemma $9ii)$ from \cite{D_Wajc_2017}, we obtain that random variables $X_{1}, \ldots, X_{n}$ are negatively associated (NA) and we can apply the Chernoff concentration bound to their mean. Observe here, that $\E(X) = \E(\sum_{i=1}^{a} X_{i}) = \mu$. Therefore, by Theorem~5 in \cite{D_Wajc_2017}, we have that
$$
  \Prob(X \geq (1+\eta)\mu) \leq \left(\frac{\exp(\eta)}{(1+\eta)^{(1+\eta)}}\right)^{\mu} \,\, \mbox{for any} \,\, \eta > 0\, ,
$$ and
$$
  \Prob(X \leq (1-\eta)\mu) \leq \left(\frac{\exp(-\eta)}{(1-\eta)^{(1-\eta)}}\right)^{\mu} \,\, \mbox{for any} \,\, \eta \in (0,1) \, .
$$
By a well known bound, shown for instance in \cite[page 5]{M_Goemans_2015}, we have that $\left(\frac{\exp(\eta)}{(1+\eta)^{(1+\eta)}}\right)^{\mu} \leq \exp(\frac{-\eta^2 \mu}{2+\eta}) < \exp(-\eta^2 \mu/3)$, where the last inequality follows by $\eta < 1$. Similarly, it is known that $\left(\frac{\exp(-\eta)}{(1-\eta)^{(1-\eta)}}\right)^{\mu} \leq \exp(-\eta^2 \mu/2)$ \cite{HagerupR90}, which together with the above finishes the proof.
\end{proof}

\ignore{
\begin{lemma}\label{lemma:Chernoff-per}
Let $X$ denote a random variable that counts the number of values from the set $\{1,2, \ldots, a\}$, for an integer number $ 1 \le a << n$, that are on positions smaller than the threshold $m$ in a random permutation $\sigma \sim \Pi_n$. Define $\mu = \E(X) = a\frac{m}{n}$. 
Then for any $\delta \in (0,1)$, we obtain that
\[
\Prob(|X - \mu| \ge \delta \mu) \le 2 \exp(-\delta^{2}\mu/3)
\ .
\]
\end{lemma}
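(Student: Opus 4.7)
}

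The plan is to write $X$ as a sum of Bernoulli indicators and then apply a Chernoff-type bound. First, for each $i \in \{1,2,\ldots,a\}$, I would define the indicator random variable
\[
X_i \;=\; \mathds{1}\bigl[\sigma^{-1}(i) \le m\bigr],
\]
so that $X = \sum_{i=1}^{a} X_i$. Since $\sigma$ is a uniformly random permutation of $[n]$, each index $i$ lands in one of the first $m$ positions with probability $m/n$, giving $\E[X_i] = m/n$ and hence $\mu = \E[X] = a \cdot m/n$, matching the statement.

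The key ingredient, as flagged in the paragraph preceding the lemma, is that the indicators $X_1, \ldots, X_a$ arising from positions in a uniform random permutation form a \emph{negatively associated} family of random variables. I would cite the standard result (e.g.\ the survey~\cite{D_Wajc_2017}, or Joag-Dev and Proschan) asserting this negative association for permutation statistics; intuitively, conditioning on some of the $\{1,\ldots,a\}$ being placed before position $m$ makes it less likely that the remaining indices are also placed before $m$.

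Given negative association, the standard Chernoff--Hoeffding upper-tail and lower-tail bounds for sums of independent Bernoulli's carry over verbatim to $X = \sum_i X_i$. In particular, for $\delta \in (0,1)$,
\[
\Pr\bigl[X \ge (1+\delta)\mu\bigr] \;\le\; \exp\!\bigl(-\delta^2 \mu / 3\bigr),
\qquad
\Pr\bigl[X \le (1-\delta)\mu\bigr] \;\le\; \exp\!\bigl(-\delta^2 \mu / 2\bigr).
\]
Combining the two via a union bound (and using $1/2 \ge 1/3$ for the lower tail) yields
\[
\Pr\bigl[|X - \mu| \ge \delta\mu\bigr] \;\le\; 2 \exp\!\bigl(-\delta^2 \mu / 3\bigr),
\]
which is exactly the claim.

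The only genuinely non-routine step is justifying that Chernoff applies under negative association; this is a well-documented fact but worth citing cleanly, since the indicators $X_i$ are manifestly not independent (they sum to exactly $m$ when $a=n$, for instance). Once negative association is invoked, the rest is a textbook application of the multiplicative Chernoff bound.
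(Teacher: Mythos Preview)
Your proposal is correct and follows essentially the same approach as the paper: decompose $X$ into indicator variables $X_i$, invoke negative association of permutation indicators via~\cite{D_Wajc_2017}, apply the standard Chernoff upper and lower tail bounds (which hold under NA), and combine by union bound. The paper's proof differs only in that it writes out the intermediate Chernoff form $\bigl(e^{\pm\delta}/(1\pm\delta)^{(1\pm\delta)}\bigr)^\mu$ before simplifying to $\exp(-\delta^2\mu/3)$ and $\exp(-\delta^2\mu/2)$.
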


\begin{proof}
For a number $i$ in the set $\{1,2, \ldots, a\}$ consider an indicator random variable $X_{i}$ equal to $1$ if the position of the number $i$ is in the first $m$ positions of a random permutation $\sigma$, and equal to $0$ otherwise. We have that $X = \sum_{i=1}^{a} X_{i}$. Using standard techniques, for instance Lemma $8$ and Lemma $9ii)$ from \cite{D_Wajc_2017}, we obtain that random variables $X_{1}, \ldots, X_{n}$ are negatively associated (NA) and we can apply the Chernoff concentration bound to their mean. Observe here, that $\E(X) = \E(\sum_{i=1}^{a} X_{i}) = \mu$. Therefore, by Theorem~5 in \cite{D_Wajc_2017}, we have that
$$
  \Prob(X \geq (1+\delta)\mu) \leq \left(\frac{\exp(\delta)}{(1+\delta)^{(1+\delta)}}\right)^{\mu} \,\, \mbox{for any} \,\, \delta > 0\, ,
$$ and
$$
  \Prob(X \leq (1-\delta)\mu) \leq \left(\frac{\exp(-\delta)}{(1-\delta)^{(1-\delta)}}\right)^{\mu} \,\, \mbox{for any} \,\, \delta \in (0,1) \, .
$$
By a well known bound, shown for instance in \cite[page 5]{M_Goemans_2015}, we have that $\left(\frac{\exp(\delta)}{(1+\delta)^{(1+\delta)}}\right)^{\mu} \leq \exp(\frac{-\delta^2 \mu}{2+\delta}) < \exp(-\delta^2 \mu/3)$, where the last inequality follows by $\delta < 1$. Similarly, it is known that $\left(\frac{\exp(-\delta)}{(1-\delta)^{(1-\delta)}}\right)^{\mu} \leq \exp(-\delta^2 \mu/2)$ \cite{HagerupR90}, which together with the above implies that 
$$
  \Prob(|X - \mu| \geq \delta \mu) \leq 2 \cdot \exp(-\delta^2 \mu/3) \, \mbox{ for any } \, \delta \in (0,1) \, ,
$$ see \cite[Corollary 5]{M_Goemans_2015}.
\end{proof}
} 

\begin{theorem}\label{thm:gupta-singla-analysis}
 The order-adaptive algorithm in Algorithm \ref{algo:k_secr_algo_1} for the multiple-choice secretary problem has an expected value of at least $\left(1-\sqrt{\frac{\log k}{k}}\right) \cdot v^*$, where $v^*= v(1) + v(2) + \ldots + v(k)$.
\end{theorem}

\begin{proof} 
 We will follow
the proof given in \cite{GuptaSingla} but use our Lemma \ref{lemma:Chernoff-per} to justify the application of the Chernoff bound there.
 
    We first prove that the thresholds 
$\tau_j$ are not too low, so that we will never run out of budget $k$ in the algorithm. Formally, we will prove that 
 $$
    \Prob[H_j] \geq 1 - 1/poly(k) \, .
 $$ We will show it by proving
 $$
    \Prob[\neg H_j] \leq 1/poly(k) \, ,
 $$ by applying Lemma \ref{lemma:Chernoff-per}. Recalling that
 $$
 \neg H_j = \{\mbox{less than } (1-\varepsilon_j)k_j \mbox{ items from } \hat{S} \mbox{  fall in the first } \ell_j \mbox{ items in } \pi\} \, ,
 $$ we see that we can apply this lemma with $a=k$, $X_i = 1$ if number $i \in \{1,2,\ldots,k\}$ falls in the first $m=\ell_j$ positions of the random permutation $\pi$; note that set $\{1,2,\ldots,k\}$ models the indices $\hat{S} = \{j_1,\ldots, j_{k}\} \subseteq [n]$ of the $k$ largest adversarial values $\{v(1),v(2),\ldots,v(k)\}$ in the permutation $\pi$. Then $X = \sum_{i=1}^a X_i$ and $\Exp[X] = km/\ell = k_j$, $\eta = \varepsilon_j$ and Lemma \ref{lemma:Chernoff-per} implies
 $$
   \Prob[\neg H_j] \leq \exp(-\varepsilon_j^2 k_j/2) \leq (1/k)^{3/2} \, .
 $$ We will prove now that thresholds $\tau_j$ are not too high. More precisely we will prove that 
 $$
    \Prob[L_j] \geq 1 - 1/poly(k) \, .
 $$ We will show it by proving
 $$
    \Prob[\neg L_j] \leq 1/poly(k) \, ,
 $$ by again applying Lemma \ref{lemma:Chernoff-per}. Recalling that
$$
\neg L_j = \{\mbox{more than } (1-\varepsilon_j)k_j \mbox{ elements with values } v(1),v(2),\ldots, v((1-2 \varepsilon_j)k)
$$
$$
\mbox{ fall in the first } \ell_j \mbox{ items in } \pi\} \, ,
$$ we see that we can apply Lemma \ref{lemma:Chernoff-per} with $a=(1-2 \varepsilon_j)k$, $X_i = 1$ if number $i \in \{1,2,\ldots,(1-2 \varepsilon_j)k\}$ falls in the first $m=\ell_j$ positions of the random permutation $\pi$; note that set $\{1,2,\ldots,(1-2 \varepsilon_j)k\}$ models the indices $\{j_1,\ldots, j_{(1-2 \varepsilon_j)k}\} \subseteq \hat{S}$ of the $(1-2 \varepsilon_j)k$ largest adversarial values $\{v(1),v(2),\ldots,v((1-2 \varepsilon_j)k)\}$ in the permutation $\pi$. Then $X = \sum_{i=1}^a X_i$ and $\Exp[X] = (1-2 \varepsilon_j)k \cdot (\ell_j/\ell) = (1-2 \varepsilon_j)k_j$, $\eta = \varepsilon_j/(1 - 2\varepsilon_j)$, and Lemma \ref{lemma:Chernoff-per} implies
 $$
   \Prob[\neg L_j] \leq \exp(-\eta^2 k_j/3) \leq \exp(-\varepsilon_j^2 k_j/3) \leq 1/k \, .
 $$ From the above reasoning we can take $poly(k) = \min\{k, k^{3/2}\} = k$. We now take the union bound, which shows that
 $$
   \Prob\left[\bigcup_{j \in [0,\log(1/\delta)]} \left((\neg L_j) \cup (\neg H_j)\right)\right] \leq 2\log(1/\delta)/poly(k) \, ,
 $$ so we have that 
 $$
   \Prob[B] \geq 1 - 2\log(1/\delta)/poly(k) \geq 1 - 1/poly'(k), \,\,\,\, \mbox{ where } \,\,\, B = \bigcap_{j \in [0,\log(1/\delta)]} \left(L_j \cap H_j\right) \, ,
 $$ where $poly'(k) = k/\log(k)$.
 
 We will next compute the probability of the events $C_i$, recalling that
 $$
  C_i = \{\mbox{item } j_i \mbox{ with value } v(i) \mbox{ arrives after position } \ell_0 \mbox{ in } \pi\}, \,\, i \in \{1,2,\ldots, (1-2\varepsilon_0)k\}, 
$$
$$
  C_i = \{\mbox{item } j_i \mbox{ with value } v(i) \mbox{ arrives after position } \ell_{j+1} \mbox{ in } \pi\}, \,\, i \in \{(1-2\varepsilon_j)k,\ldots, (1-2\varepsilon_{j+1})k\} \, ,
$$ and for any $j \in [0,\log 1/\delta)$.

Let first $i \in \{1,2,\ldots, (1-2\varepsilon_0)k\}$ and then, conditioning on the event $B$, we obtain that
\begin{eqnarray}\label{eqn:positive_prob_1}
  \Prob\left[P_{\hat{S},i}\right] = \Prob[B \cap C_i] = \Prob[B] \cdot \Prob[C_i \,|\, B]
  \geq (1-\log(k)/k) \cdot (1-\delta) \, ,
\end{eqnarray} since by event $B$ no threshold is too high and we 
never run out of budget $k$.

Let now 
$i \in \{(1-2\varepsilon_j)k,\ldots, (1-2\varepsilon_{j+1})k\}$ for some $j \in [0,\log 1/\delta)$. Then by the same conditioning on the event $B$ we have that 
\begin{eqnarray}\label{eqn:positive_prob_2}
  \Prob\left[P_{\hat{S},i}\right] = \Prob[B \cap C_i] = \Prob[B] \cdot \Prob[C_i \,|\, B]
  \geq (1-\log(k)/k) \cdot (1-2^{j+1} \cdot \delta) \, .
\end{eqnarray} By (\ref{eqn:positive_prob_1}) and (\ref{eqn:positive_prob_2}) we now finally obtain that the expected value of the Algorithm \ref{algo:k_secr_algo_1} is at least
$$
  \sum_{i=1}^{(1-2\varepsilon_0)k} v(i)(1-\log(k)/k)(1-\delta)
  + \sum_{j=0}^{\log(1/\delta) - 1} \sum_{i=(1-2\varepsilon_j)k}^{(1-2\varepsilon_{j+1})k}
  v(i)(1-\log(k)/k)(1-2^{j+1}\delta) \, .
$$ This expression is at least 
$$
   (1-\log(k)/k) \cdot \left(v^*(1-\delta) - \frac{v^*}{k} \cdot \left(\sum_{j=0}^{\log(1/\delta) - 1} 2 k \varepsilon_{j+1} 2^{j+1} \delta \right)\right) \, ,
$$ because the negative terms are maximized when the top $k$ items are all equal to $\frac{v^*}{k}$. We can simplify this formula to finally obtain $v^* (1-\log(k)/k)(1-O(\delta))$.
\end{proof}

\subsubsection{Decomposing positive event into atomic events} Here we express each positive event as a union of disjoint atomic events, including an algorithm which provides such decomposition. Additionally, we also specify an algorithm to compute conditional probabilities.

Let us define the following bucketing $\mathcal{B}_1$: $B_1 = \{1,2,\ldots,\ell_0\}$,
$B_j = \{\ell_{j-2} + 1,\ell_{j-2} + 2,\ldots,\ell_{j-1}\}$, for $j=2,3,\ldots,\log(1/\delta) + 1$. We will now define all atomic events that define a given positive event $P_i$ for some $i$. The bucketing has $t = \log(1/\delta) + 1$ buckets, so we will use the mappings $f : [k] \longrightarrow [t]$.

We are given an adversarial sequence $\hat{S} = \{j_1,\ldots,j_k\}$, where $j_i \in [\ell]$ is the position of $i$th highest value $v(i)$ in the adversary order. To express event $\left(\bigcap_{j \in [0,\log 1/\delta]} \left(L_j \cap H_j\right)\right)$ by atomic events let us define a mapping $f : [k] \longrightarrow [t]$ such that 
$$
  (1): \,\, \forall j \in [0,\log 1/\delta] : f \mbox{ maps more than }
  (1-\varepsilon_j)k_j \mbox{ items from } \{j_1,\ldots,j_k\} \mbox{ into the first } j+1 \mbox{ buckets} 
$$
$$
  \mbox{AND } f \mbox{ maps less than }
  (1-\varepsilon_j)k_j \mbox{ items from } \{j_1,\ldots,j_{(1-2\varepsilon_j)k}\} \mbox{ into the first } j+1 \mbox{ buckets} \, .
$$ Now to model event $C_i$ for  
$i \in \{(1-2\varepsilon_j)k,\ldots, (1-2\varepsilon_{j+1})k\}$
and some $j \in \{-1\} \cup [0,\log 1/\delta]$, the mapping $f$ has to additionally fulfil that 
$$
(2): \,\, f(j_i) > j+2, \mbox{i.e., item } j_i \mbox{ arrives after time } \ell_{j+1} \, ,
$$ where we additionally define $\varepsilon_{-1} = (1-1/k)/2$, so that $(1-2\varepsilon_{-1})k = 1$. It is easy to see that such mapping $f$ exists. Namely, suppose that $f$ maps $s \leq (1-\varepsilon_j)k_j$ items from $\{j_1,\ldots,j_{(1-2\varepsilon_j)k}\}$ into the first $j+1$ buckets, that is, until position $\ell_j$ in $\pi$. This means that we have at least $(1-2\varepsilon_j)k-s$ remaining items from $\{j_1,\ldots,j_{(1-2\varepsilon_j)k}\}$ together with $k - (1-2\varepsilon_j)k$ items $\{j_{(1-2\varepsilon_j)k+1},\ldots,j_k\}$ that (with those $s$ items included) together should be at least $(1-\varepsilon_j)k_j = (1-\varepsilon_j)k \ell_j/\ell$ items to be mapped by $f$ in the first $\ell_j$ positions (i.e., first $j+1$ buckets). For this to be possible we must have:
$$
  (1-2\varepsilon_j)k-s
  + k - (1-2\varepsilon_j)k + s \geq (1-\varepsilon_j)k_j \, \, \Leftrightarrow \,\, k \geq (1-\varepsilon_j)k_j \, ,
$$ and the last inequality holds because the largest value of $(1-\varepsilon_j)k_j$ is achieved for $j = \log(1/\delta)$ and it is equal to $k - \sqrt{3 k \log k} < k$, which holds for $k$ being at least some constant. This shows property (1), and property (2) can hold because $k \geq (1-\varepsilon_j)k_j + 1$, when $k$ is at least large enough constant. 

Let $\mathcal{F}_{C_i}$, for $i \in \{1,2,\ldots,(1-\delta)k\}$, be the family of all mappings $f : [k] \longrightarrow [t]$ that fulfil conditions (1) and (2) defined above. Given any mapping $f \in \mathcal{F}_{C_i}$, we define a set of permutations $\Sigma_f$ of the sequence $\hat{S} = (j_1,\ldots,j_k)$ that are consistent with $f$, i.e.,   
$\sigma \in \Sigma_f$ if 
$\forall i, i' \in [k], i \not = i' : f(i) < f(i') \iff \sigma^{-1}(j_{i}) < \sigma^{-1}(j_{i'})$. Now we can express the positive event $P_{\hat{S}, i}$, for any $i \in \{1,2,\ldots,(1-\delta)k\}$, as
$$
  P_{\hat{S}, i} = \bigcup_{f \in \mathcal{F}_{C_i}} \bigcup_{\sigma \in \Sigma_f} A_{\sigma, f} \, .
$$

Finally, we define the family of positive events $\mathcal{P}_{\text{G-S}}$ containing all events that are of interest of the described above multiple-choice secretary algorithm in Algorithm \ref{algo:k_secr_algo_1}:
$$\mathcal{P}_{\text{G-S}} = \{ P_{\hat{S}, i} \}_{\hat{S} \subseteq [\ell], \hspace{1mm} 1 \le i \le (1-\delta)k} \, .$$

\begin{lemma}\label{lem:mult-computation-time}
For any $k$-tuple $\hat{S}$ and any positive event $P_{\hat{S},i}$, $i \in \{1,2,\ldots,(1-\delta)k\}$, for the Algorithm \ref{algo:k_secr_algo_1} for the multiple-choice secretary problem, we can compute the set $Atomic(P_{\hat{S},i})$ of all atomic events defining $P_{\hat{S},i}$ in time $O(t^k \cdot k! \cdot \ell)$, where $t = \log(1/\delta)$.
\end{lemma}

\begin{proof}
We have defined above how to decompose the event $P_{\hat{S},i}$ into a union of disjoint atomic events. It is easy to see that the total number of all atomic events is at most $t^k \cdot k!$. The algorithm simply enumerates all $t^k \cdot k!$ atomic events and for each event checks in time $O(\ell)$ if this event fulfills $P_{\hat{S},i}$.
\end{proof}

\subsubsection{Derandomization of positive events via concentration bounds} 

By employing the derandomization technique from Section~\ref{sec:abstract_derand}, we also claim the existence of a small multi-set of $\ell$-element permutations the uniform distribution over $\mathcal{L}_{\ell}$ preserves the probabilities of positive events $P_{\hat{S},i}$.

\begin{lemma}\label{lem:mult-construction}
There exists an $\ell$-element permutations multi-set $\mathcal{L}_{\ell}$ with entropy $O(\log{\ell})$ of the uniform distribution on $\mathcal{L}_{\ell}$ and such that
$$\Prob_{\pi \sim \mathcal{L}_{\ell}}\left[\pi \in P_{\hat{S},i}\right] \ge \left(1-\frac{1}{\sqrt{k}}\right) \Prob_{\pi \sim \Pi_{\ell}}\left[\pi \in P_{\hat{S},i}\right],
$$
for any positive event $P_{\hat{S}, i} \in \mathcal{P}_{\text{G-S}}$. The multi-set $\mathcal{L}_{\ell}$ can be computed in $O(\ell^{10}\cdot t^{2k} \cdot (k!)^{3})$ time.
\end{lemma}
\begin{proof}
By the analysis of Theorem~\ref{thm:gupta-singla-analysis}, specifically formulas (\ref{eqn:positive_prob_1}) and (\ref{eqn:positive_prob_2}),  we have that for any $P_{\hat{S},i} \in \mathcal{P}_{\text{G-S}}$ it holds
$$\Prob_{\pi \sim \Pi_{\ell}}[\pi \in P_{\hat{S},i}] \ge (1 - \log k/k) \cdot (1 - 2^{j + 1}\cdot \delta'),$$
for some parameters $j, \delta'$. However, when used we always have that $j < \log(1 / \delta') - 1$ which implies $1 - 2^{j + 1}\cdot \delta' > \frac{1}{2}$ and further that
the probability $\Prob_{\pi \sim \Pi_{\ell}}[\pi \in P_{\hat{S},i}]$ is bounded below for some constant $p > \frac{1}{10}$, if $k$ is larger than $5$.
Thus, we can apply Theorem~\ref{Thm:Derandomization_2_111} with $\delta := \frac{1}{\sqrt{k}}$ and obtain a multi-set set $\mathcal{L}_{\ell}$ of $\ell$-element permutations with the following properties:
\\ \noindent \textit{a)} Its size is at most:
$$2\frac{\log (|\mathcal{P}_{\text{G-S}}|)}{\delta^2 p} \le O\left( \log\bigg({\ell\choose k}k!\cdot k\bigg) \cdot k\right) \le O\left(k\log(\ell) \cdot k\right) \le O\left(k^{2}\log(\ell)\right).$$
Thus it follows that the entropy of the uniform distribution over $\mathcal{L}_{\ell}$ is at most $O(\log{k} + \log\log{\ell}) = O(\log{\ell})$ given that $k < \ell$.
\\ \noindent \textit{b)} The set $\mathcal{L}_{\ell}$ is computable in time:
$$O(k^{2}\log(\ell) \cdot \ell^3 \cdot k\log(\ell) \cdot t^{2k} \cdot (k!)^2 \cdot (\ell + k k! + k \log^2(\ell)))$$ 
$$= O(k^{3}\log^2(\ell) \cdot \ell^3 \cdot t^{2k} \cdot (k!)^2 \cdot \ell k k!) = O(\ell^{10}\cdot t^{2k} \cdot (k!)^{3}),$$
where the last equality follows from Lemma~\ref{lem:mult-computation-time}.
\\ \noindent \textit{c)} It holds that:
$$\Prob_{\pi \sim \mathcal{L}_{\ell}}\left[\pi \in P_{\hat{S},i}\right] \ge \left(1-\frac{1}{\sqrt{k}}\right) \Prob_{\pi \sim \Pi_{\ell}}\left[\pi \in P_{\hat{S},i}\right],
$$ thus the lemma follows.
\end{proof}

\subsubsection{Lifting lower-dimension permutations distribution satisfying positive events}

Next, we show how, by applying a dimension-reduction set of functions defined in Section~\ref{section:dim_reduction_2}, one can turn a set $\mathcal{L}_{\ell}$ of $\ell$-element permutations to a set $\mathcal{L}_{n}$ of $n$-element permutations such that the competitive ratio of Gupta-Singla algorithm in Algorithm \ref{algo:k_secr_algo_1} executed on the uniform distribution over $\mathcal{L}_{\ell}$ is carried to the competitive ratio of the Gupta-Singla algorithm executed on the uniform distribution over $\mathcal{L}_{n}$. 

\begin{lemma}\label{lem:mult-lifting}
Let $ALG_{\text{G-S}}(\pi)$ denote the output of Algorithm \ref{algo:k_secr_algo_1} on the permutation $\pi$. Assuming that $\ell^{2} < \frac{n}{\ell}$, one can compute a multi-set of $n$-element permutations $\mathcal{L}_{n}$ such that
$$\E_{\pi \sim \mathcal{L}_{n}}(ALG_{\text{G-S}}(\pi)) > \bigg(1 - \frac{k^2}{\sqrt{\ell}}\bigg)\bigg(1 - \frac{1}{\sqrt{k}}\bigg)\bigg(1 - \sqrt{\frac{\log{k}}{k}}\bigg) \cdot v^{\ast},$$
where $v^*= v(1) + v(2) + \ldots + v(k)$ is the sum of $k$ largest adversarial elements.
Given $\mathcal{L}_{\ell}$ we can construct $\mathcal{L}_{n}$ in $O(n \cdot k^2\log(\ell))$ time and the entropy of the uniform distribution over $\mathcal{L}_{n}$ is $O(\log{\ell} + \log{|\mathcal{L}_{\ell}|})$.
\end{lemma}
\begin{proof}
Consider a dimension-reduction set of functions $\mathcal{G}$ given by Corollary~\ref{cor:dim-red-1} with parameters $(n, \ell, \sqrt{\ell})$. Note, that the size of set $\mathcal{G}$ is $O(poly\text{ } \ell)$. Recall, that set $\mathcal{L}_{\ell}$ is given in Lemma~\ref{lem:mult-construction}.
For a given function $g \in \mathcal{G}$ and a permutation $\pi \in \mathcal{L}_{\ell}$ we denote by $\pi \circ g : [n] \rightarrow [n]$ any permutation $\sigma$ over set $[n]$ satisfying the following:
$\forall_{i,j \in [n], i \neq j}$ if $\pi^{-1}(g(i)) < \pi^{-1}(g(j))$ then $\sigma^{-1}(i) < \sigma^{-1}(j)$.
The aforementioned formal definition has the following natural explanation. The function $g \in \mathcal{G}, g : [n] \rightarrow [\ell]$ may be interpreted as an assignment of each element from set $[n]$ to one of $\ell$ blocks. Next, permutation $\pi \in \mathcal{L}_{\ell}$ determines the order of these blocks. The final permutation $\sigma$ is obtained by listing the elements from the blocks in the order given by $\pi$. The order of elements inside the blocks is irrelevant. 
The set $\mathcal{L}_{n}$ of $n$-element permutations is defined as $\mathcal{L}_{n} = \{ \pi \circ g : \pi \in \mathcal{L}_{\ell}, g \in \mathcal{G}\}$, and its size is $|\mathcal{L}_{\ell}| \cdot |\mathcal{G}| = O(poly \text{ }(\ell) \cdot |\mathcal{L}_{\ell}|)$. It is easy to observe that $\mathcal{L}_{n}$ can be computed in $O(poly\text{ }(\ell) \cdot |\mathcal{L}_{\ell}|)$ time and the entropy of the uniform distribution over this set is $O(\log{\ell} + \log{|\mathcal{L}_{\ell}|})$.

In the remaining part, we show that uniform distribution over the set $\mathcal{L}_{n}$ guarantees the proper competitive ratio. Consider any $k$-tuple $\hat{S} = (j_{1}, \ldots, j_k) \subseteq [n]$ denoting the positions of the $k$ largest adversarial elements in an $n$-element adversarial permutation. If the multiple-choice secretary algorithm is executed on the permutation $\sigma$, one can associate to this random experiment the following interpretation: first we draw u.a.r a function $g$ from $\mathcal{G}$ and then draw u.a.r a permutation $\pi$ from $\mathcal{L}_{\ell}$. 
Observe, that for a random function $g \in \mathcal{G}$ the probability that a pair of fixed indices $j_{i},j_{i'}$ is distributed to the same block is at most $\frac{d}{\ell} = \frac{\sqrt{\ell}}{\ell}$, by Property $(1)$ of the dimension-reduction set $\mathcal{G}$. Then, by the union bound argument, we conclude that the probability that all elements of the $k$-tuple $\hat{S}$ are assigned to different blocks is at least $1 - \frac{k^2}{\sqrt{\ell}}$. Conditioned on this event, the image of $\hat{S}$ under the function $g$ is a $k$-tuple of elements from $[\ell]$ denoted $\hat{S}' = (j'_{1}, \ldots, j'_{k})$. Therefore, by Lemma~\ref{lem:mult-construction}, the multiple-choice secretary algorithm when executed on a random permutation $\pi$ from $\mathcal{L}_{\ell}$ chooses the $i$-th largest adversarial element with probability at least 
$$\left(1-\frac{1}{\sqrt{k}}\right) \Prob_{\pi \sim \Pi_{\ell}}\left[\pi \in P_{\hat{S},i}\right] \, .
$$ To argue that this competitive ratio carries to the $n$-element permutation $\pi \circ g$, we observe first that the size of each block is $[\frac{n}{\ell}, \frac{n}{\ell} + o(\ell)]$, by Property $(2)$ of the dimension-reduction set $\mathcal{G}$. Also, we required that $\ell^{2} < \frac{n}{\ell}$. 
Therefore, any bucket $B_{j} = \{\ell_{j-2} + 1, \ldots, \ell_{j-1} -1 \}$\footnote{Note, that we removed here last element from the $j$-th bucket. This allows to carry these buckets to proper positions in an $n$-element permutation, and since we removed at most $\frac{t}{\ell} < \frac{1}{\sqrt{\ell}}$ fraction of elements (buckets) from the consideration it does not affect the final probability by more than $\frac{1}{\sqrt{\ell}}$ additive error.}, for $1 \le j \le t = \log(1/ \delta) + 1$ from the bucketing $\mathcal{B}_{1}$ translates to a continuous interval of positions 
$$\left\{(\ell_{j-2} + 1)\frac{n}{\ell}, \ldots, (\ell_{j-1} - 1)\frac{n}{\ell} \right\} = \left\{2^{j-2}\delta \cdot n + \frac{n}{\ell}, \ldots, 2^{j-1}\delta \cdot n - \frac{n}{\ell} + 2^{j-1}\delta\ell\cdot o(\ell) + o(\ell) \right\}$$ 
$$\subseteq  \{2^{j-2}\delta \cdot n + 1, \ldots, 2^{j-1}\delta \cdot n \} \subseteq \{n_{j-2} + 1, \ldots, n_{j-1}\}$$ of
the $n$-element permutation $\pi \circ g$, where $n_{j} := 2^{j}\delta n$. Here, we used the fact that $\ell^{2} < \frac{n}{\ell}$, which gives the first sets' inclusion. 
The last set is exactly the set of positions that fall between $j$-th and $(j+1)$-th checkpoints when the Gupta-Singla algorithm is executed on the full $n$-element permutation $\pi \circ g$. Since this observation holds for all adversarial elements from the $k$ largest elements $\hat{S}'$, it follows then that the positions of these elements with respect to checkpoints $(n_{j})_{j \in [\log(1/ \delta) + 1]}$ in the $n$-element permutation $\pi \circ g$ are the same as positions of them in the $\ell$-element permutation $\pi$ with respect to    
checkpoints
$(\ell_{j})_{j \in [\log(1/ \delta) + 1]}$. Consequently, whenever the multiple-choice secretary algorithm chooses $i$-th adversarial element to the returned sum when executed on $\pi$, the same algorithm also chooses the $i$-th adversarial element when executed on the permutation $\pi \circ g$, conditioned on the fact that $g$ is injective. This conditioning holds with probability $\big(1 - \frac{k^2}{\sqrt{\ell}}\big)$, thus the lemma follows from the above discussion and Lemma~\ref{lem:mult-construction} combined with Theorem~\ref{thm:gupta-singla-analysis}. 
\end{proof}

In the main theorem we use the above lemmas with the appropriate choice of parameters $k$, and $\ell$. 
\begin{theorem}\label{thm:k_secretary_main}
For any $k < \log{n}/\log \log n$, there exists a multi-set of $n$-element permutations $\mathcal{L}_{n}$ such that Algorithm \ref{algo:k_secr_algo_1} achieves
$$1 - 4\sqrt{\frac{\log{k}}{k}} $$
expected competitive ratio for the free order multiple-choice secretary problem, when the adversarial elements are presented in the order chosen uniformly from $\mathcal{L}_{n}$. The set is computable in time $O(\text{poly }(n))$ and the uniform distribution over $\mathcal{L}_{n}$ has the optimal $O(\log{k}) = O(\log\log{n})$ entropy.
\end{theorem}
\begin{proof}
Set $k$, $k < \log{n} / \log \log n$, $\ell := k^{8}$, then the theorem follows by applying directly Lemma~\ref{lem:mult-construction} and~\ref{lem:mult-lifting} to this choice of parameters. The total running time follows because if $k < \log n / \log \log n$, the running time of the construction by Lemma~\ref{lem:mult-construction} is $O(\ell^{10}\cdot t^{2k} \cdot (k!)^{3}) = poly(n)$, because the dominant time here is $k! \leq k^k = poly(n)$.
\end{proof}

\subsection{Classical free order secretary problem}\label{sec:application_1_secretary}


Consider the classic secretary algorithm with checkpoint on the position $\frac{n}{e}$\footnote{For simplicity we omit rounding notation $\lfloor \frac{n}{e} \rfloor$.} for the free order $1$-secretary problem. The algorithm looks at all elements arriving before the checkpoint, finds the maximum one and selects the first element arriving after the checkpoint that is larger than the found maximum. It is well known, that if the algorithm uses a uniform random order to process the elements, then the probability of picking the maximum element is at least $\frac{1}{e} + \Omega(\frac{1}{n})$ (see Section~\ref{section:lb_classic_secr}, Proposition~\ref{Thm:optimum_expansion}.). In this section we show how by applying the techniques crafted in previous sections together with a new analysis of the performance guarantee of the classic algorithm, one can obtain an optimal-entropy permutations distribution that achieves the competitive ratio $\frac{1}{e} - O(\frac{\log\log^2{n}}{\log^{1/2}{n}})$, which in this case is the success probability. \\

\subsubsection{Defining positive events and decomposing them into atomic events}

Following the approach introduced in the beginning of the section, we start by constructing a permutations distribution of much smaller dimension $\ell = O(\log{n})$ such that the classic secretary algorithm executed on this distribution achieves $\frac{1}{e} - O(\frac{\log\log^2{n}}{\log^{1/2}{n}})$ competitive ratio. 

Let us fix a parameter $\ell < n$ (the precise relation between $\ell$ and $n$ will be defined later), a bucketing $\mathcal{B}_{1} = \{1, \ldots, \frac{\ell}{e} - 1\}, \{\frac{\ell}{e} + 1, \ldots, n\}$ and some integer parameter $k < \ell$ to be defined later. Let $\mathcal{A}_{k, \mathcal{B}_{1}}$ be the set of atomic events defined on the uniform probabilistic space $\Omega_{\ell}$ of $\ell$-element permutations with respect to the parameter $k$ and bucketing $\mathcal{B}_{1}$. Next, we define the family of positive events, each consisting of a disjoint atomic events from the family $\mathcal{A}_{k, \mathcal{B}_{1}}$ that captures successful events of the classic secretary algorithm. 

Consider a $k$-tuple $\sigma = (\sigma(1), \ldots, \sigma(k)) \subseteq [\ell]$, interpreted as the positions of the $k$ largest elements in the adversarial order. Let $T_{i}$, for $2 \le i \le k$ be a set of these $k$-element permutations of the $k$-tuple $\sigma$ such that $\sigma(i)$ is presented on the first position and for any other $2 \le i' \le i-1$ it holds that $\sigma(i')$ appears after element $\sigma(i)$. Additionally, let $F_{i} \subseteq \{1, 2\}^{k}$, for $2 \le i \le k$, be a set of all these non-decreasing functions $f_{i}$ that satisfy $f_i(1) = 1$ and $f_i(i) = 2$.
Then we define a positive event 
$$P_{i} = \bigcup_{\pi \in T_{i}, f \in F_{i}} A_{\pi, f_{i}},
$$ where recall that $A_{\pi, f_{i}}$ is the atomic event in the space of all $n$-element permutations. We also define
$$P_{\sigma} = \bigcup_{2 \le i \le k} P_{i}.
$$ To explain the motivation behind the above construction, note that $P_{i}$ is the set of these events, or equivalently $\ell$-element permutations, such that the $i$-th greatest adversarial value appears in the first bucket, the $1$-st greatest value, a.k.a.~greatest, appears in the second bucket and all values in between these two appear after the $1$-st greatest value in the second bucket. Clearly, if this event happens then the classic secretary algorithm picks the largest adversarial value. Under this reasoning, the set of events $P_{\sigma}$ captures all possibilities of picking the largest adversarial value when algorithm is committed to tracking only $k$ largest adversarial values and the indices of the $k$ largest adversarial values are given by $\sigma$. Observe also, that any two atomic events included in $P_{\sigma}$ are disjoint because they describe different arrangements of the same set of elements $\sigma$. Note also, that for an atomic event $A_{\pi, f_{i}}$ determining whether it belongs to $P_{i}, 2 \le i \le k$ is computable in time depending on $\ell$, and $k$ only. 

\begin{lemma}\label{lem:one-positive-comp-time}
Given any positive event $P_i$, $i \in \{2,\ldots, k\}$ we can compute the set $Atomic(P_i)$ of all atomic events defining $P_i$ in time $O(2^k \cdot k! \cdot \ell)$.
\end{lemma}
\begin{proof}
The brute-force algorithm iterating over all possible permutations of $\sigma = (\sigma(1), \ldots, \sigma(k))$ and all functions $f_{i} : [k] \rightarrow \{1, 2\}$ and checking whether the atomic event associated with the choice of the permutation and the function $f_{i}$ satisfies the mentioned before condition and works in time $O(2^k \cdot k! \cdot \ell)$.
\end{proof}

\subsubsection{Probabilistic analysis}

Next step is to show that considering only $k$ largest adversarial values is in fact enough to obtain a good competitive ratio.

\begin{lemma}\label{lem:one-sec-uniform-lower-bound}
If $\pi$ is chosen uniformly at random from  the set $\Pi_{\ell}$ of all $\ell$-element permutation, then:
$$\Prob_{\pi\sim \Pi_{\ell}}\left[P_{\sigma}\right] \ge \frac{1}{e} + \frac{1}{\ell} - \frac{1}{e\cdot k} \cdot \left( 1 - \frac{1}{e} \right)^{k}.$$
\end{lemma}
\begin{proof}
Let $\hat{\sigma} := (\hat{\sigma}(1), \hat{\sigma}(2), \ldots, \hat{\sigma}(k), \hat{\sigma}(k+1), \ldots, \hat{\sigma}(\ell))$ be an arbitrary extension of the $k$-tuple $\sigma$ to a full $\ell$-element permutations. Consider an event $S_{\ell}$ corresponding to the success of the classic secretary algorithm executed on the uniform distribution over the set of $\ell$-element permutations if the adversarial order is $\hat{\sigma}$. By Proposition~\ref{Thm:optimum_expansion} (Part 1), we have that
$$\Prob_{\pi\sim \Pi_{\ell}}\left[S_{\ell}\right] \ge \frac{1}{e} + \frac{1}{\ell}.$$
On the other hand, we can decompose the event $S_{\ell}$ as follows $S_{\ell} = \cup_{2 \le i \le \ell} E_{i}$, where event $E_{i}$ corresponds to the fact that the algorithm selects the largest value given that the largest value observed in the first bucket $B_{1}$ is $i$-th largest in the whole sequence. Precisely,
$$E_{i} := \{ \pi \in \Pi_{\ell} : \pi^{-1}(\hat{\sigma}(i)) \in B_{1}, \pi^{-1}(\hat{\sigma}(1)) \in B_{2} \wedge \forall_{2 \le i' \le i-1} \pi^{-1}(\hat{\sigma}(i')) > \pi^{-1}(\hat{\sigma}(i)) \}.$$
From the Bayes' formula on conditional probability we obtain that
\begin{equation}\label{line:bayes}
\Prob_{\pi\sim \Pi_{\ell}}[E_i] = \frac{(1 / e)\ell}{\ell} \cdot \left( \prod_{j=1}^{i-1} \frac{(\ell - (1 / e)\ell) - (j - 1)}{(\ell - 1) - (j - 1)} \right) \cdot \frac{(i-2)!}{(i-1)!} \, .    
\end{equation}
In the above, the first factor corresponds to the probability that  $\pi^{-1}(\hat{\sigma}(i)) \in B_{1}$, the second factor corresponds to the probability that numbers $\hat{\sigma}(1), \ldots, \hat{\sigma}(j-1)$ are mapped to the second bucket $B_{2}$, while the last factor the probability that, conditioned on the previous events, $\pi^{-1}(\hat{\sigma}(1)) \le \pi^{-1}(\hat{\sigma}(i'))$, for $2 \le i' \le i - 1$.

On the one hand, we have that $\Prob_{\pi\sim \Pi_{\ell}}[P_{\sigma}] = \sum_{2 \le i \le k} \Prob_{\pi\sim \Pi_{\ell}}[E_{i}]$. On the other hand, by the above decomposition (\ref{line:bayes}), we get
\[
\Prob_{\pi\sim \Pi_{\ell}}[S_{\ell}]
= 
\sum\limits_{2 \le i \le \ell - (1/e)\ell} \Prob_{\pi\sim \Pi_{\ell}}[E_{i}] 
\le 
\sum\limits_{2 \le i \le k } \Prob_{\pi\sim \Pi_{\ell}}[E_{i}] + \frac{1}{e} \left( \frac{\ell - (1/e)\ell}{\ell - 1} \right)^{k} \cdot \frac{1}{k} \, ,
\mbox{ and then}
\]
\vspace*{-2ex}
\begin{eqnarray}
\Prob_{\pi\sim \Pi_{\ell}}[P_{\sigma}] = \sum_{i=2}^{k} \Prob_{\pi\sim \Pi_{\ell}}[E_i] \,\, \geq \,\, \Prob_{\pi\sim \Pi_{\ell}}[S_{\ell}] - \frac{1}{e\cdot k} \cdot \left( 1 - \frac{1}{e} \right)^{k}. \label{Eq:Rho_Lower_Bound}
\end{eqnarray}
This proves the lemma.
\end{proof}
Finally, we define the positive family that captures successful events for the classic secretary algorithm executed on $\ell$-element permutations regardless of the positions on which $k$ largest values appear in the adversarial order.  
$$\mathcal{P}_{\ell-\text{classic}} := \{ P_{\sigma} \}_{\sigma = (\sigma(1), \ldots, \sigma(k)) \hspace{1mm} \subseteq \hspace{1mm} [\ell]}.$$

\subsubsection{Derandomization of positive events via concentration bounds} 

Because probabilities $\Prob_{\pi\sim \Pi_{\ell}}[P_{\sigma}]$ are bounded from below by a constant, when drawing a permutation from the uniform distribution. $\pi\sim \Pi_{\ell}$, we can use our derandomization technique introduced in Section~\ref{sec:abstract_derand} and obtain the following.

\begin{lemma}\label{lem:one-construction-small}
There exists a multi-set $\mathcal{L}_{\ell}$ of  $\ell$-element permutations such that the uniform distribution on $\mathcal{L}_{\ell}$ has entropy $O(\log{\ell})$, and
$$\Prob_{\pi \sim \mathcal{L}_{\ell}}(\pi \in P_{\sigma}) \ge \frac{1}{e} - \frac{1}{\ell} - \frac{1}{e\cdot k} \cdot \left( 1 - \frac{1}{e} \right)^{k},$$
for any positive event $P_{\sigma} \in \mathcal{P}_{\ell-\text{classic}}$. The distribution can be computed in $O(\ell^{k+6} \cdot (k!)^5)$ time.
\end{lemma}
\begin{proof}
Consider Theorem~\ref{Thm:Derandomization_2_111} applied to the atomic events $\mathcal{A}_{k, \mathcal{B}_{1}}$, the family of positive events $\mathcal{P}_{\ell-\text{classic}}$ and parameters $\ell$ and $k < \ell$. Lemma~\ref{lem:one-sec-uniform-lower-bound} and Lemma~\ref{lem:one-positive-comp-time} ensure that necessary conditions for Theorem~\ref{Thm:Derandomization_2_111} hold and in consequence there exists a mulit-set $\mathcal{L}_{\ell}$ (identified in the proof with the uniform distribution over the set) such that for every $P_{\sigma} \in \mathcal{P}_{\ell-\text{classic}}$ it holds 
$$\Prob_{\pi \sim \mathcal{L}_{\ell}}(\pi \in P_{\sigma}) \ge (1-\delta) \bigg(\frac{1}{e} + \frac{1}{\ell} - \frac{1}{e\cdot k} \cdot \left( 1 - \frac{1}{e} \right)^{k} \bigg),$$
which for $\delta := \frac{1}{\ell}$ implies the claimed inequality in the lemma statement. The size of $\mathcal{L}_{\ell}$ can be upper bounded as follows:
$$|\mathcal{L}_{\ell}| \le \frac{2\log(|\mathcal{P}_{\ell-\text{classic}}|)}{\delta^{2} p} \le 2\ell p^{-1}\log\bigg( {\ell \choose k} \cdot k!\bigg)$$
$$\le 4\ell (k\cdot \log{\ell} + \log^{2}(k)),$$
where the last inequality holds because Lemma~\ref{lem:one-sec-uniform-lower-bound} ensures that $p > \frac{1}{2}$. It follows then that the entropy of $\mathcal{L}_{\ell}$ is $O(\log \ell)$, since $k < \ell$. Finally, using Theorem \ref{Thm:Derandomization_2_111}, we calculate that the running time for computing the multi-set $\mathcal{L}_{\ell}$ is the following
$$O\left(|\mathcal{L}_{\ell}| \cdot \ell^4 \cdot |\mathcal{P}_{\ell-\text{classic}}| \cdot 2^{2k} 
\cdot (k!)^3 \cdot k\right) = O(\ell^{k+6} \cdot (k!)^5) \, ,
$$
assuming that $k$ and $\ell$ are larger than an absolute large enough constant. This completes the proof of the lemma.
\end{proof}

\subsubsection{Lifting lower-dimension permutations distribution satisfying positive events}

Assume now, that we are given a set $\mathcal{L}_{\ell}$ of $\ell$-element permutations such that the classic secretary algorithm achieves $\frac{1}{e}-\epsilon$ competitive ratio, for an $0 < \epsilon < \frac{1}{e}$, when the adversarial elements are presented in the order given by an uniform permutation from $\mathcal{L}_{\ell}$. In this part, we apply a dimension-reduction set of functions defined in Section~\ref{section:dim_reduction_2}, and show how to lift the set $\mathcal{L}_{\ell}$ to a set of $n$-element permutations $\mathcal{L}_{n}$, for $n > l$. This implies that the classic secretary algorithm executed on a uniform permutation chosen from $\mathcal{L}_{n}$ achieves the competitive ratio of the smaller-dimension distribution. The proof is similar of the analog lemma for the multiple-choice secretary problem.


\begin{lemma}\label{lem:one-lifting}
Denote $ALG_{\text{classic}}(\pi')$ the event that the classic algorithm achieves success on the random permutation $\pi' \sim \mathcal{L}_{n}$. Assuming that $\ell^{2} < \frac{n}{\ell}$, there exists a multi-set of $n$-element permutations $\mathcal{L}_{n}$ such that
$$\Prob_{\pi' \sim \mathcal{L}_{n}}(ALG_{\text{classic}}(\pi')) > \bigg(1 - \frac{k^2}{\sqrt{\ell}}\bigg)\bigg(\frac{1}{e} - \frac{1}{\ell} - \frac{1}{e\cdot k} \cdot \left( 1 - \frac{1}{e} \right)^{k}\bigg).$$
Moreover, given the multi-set $\mathcal{L}_{\ell}$, the set $\mathcal{L}_{n}$ can be constructed in $O(n \cdot |\mathcal{L}_{\ell}|)$ time and the entropy of the uniform distribution on $\mathcal{L}_{n}$ is $O(\log\log{n} + \log{|\mathcal{L}_{\ell}|})$.
\end{lemma}
\begin{proof}
Consider a dimension-reduction set of functions $\mathcal{G}$ given by Corollary~\ref{cor:dim-red-1} with parameters $(n, \ell, \sqrt{\ell})$. Note, that the size of set $\mathcal{G}$ is $O(poly (\ell))$. Recall that set $\mathcal{L}_{\ell}$ is given in Lemma \ref{lem:one-construction-small}.  
For a given function $g \in \mathcal{G}$ and a permutation $\pi \in \mathcal{L}_{\ell}$ we denote by $\pi \circ g : [n] \rightarrow [n]$ any permutation $\sigma$ over set $[n]$ satisfying the following:
$\forall_{i,j \in [n], i \neq j}$ if $\pi^{-1}(g(i)) < \pi^{-1}(g(j))$ then $\sigma^{-1}(i) < \sigma^{-1}(j)$.
The aforementioned formal definition has the following natural explanation. The function $g \in \mathcal{G}, g : [n] \rightarrow [\ell]$ may be interpreted as an assignment of each element from set $[n]$ to one of $\ell$ blocks. Next, permutation $\pi \in \mathcal{L}_{\ell}$ determines the order of these blocks. The final permutation is obtained by listing the elements from the blocks in the order given by $\pi$. The order of elements inside the blocks is irrelevant. 
The set $\mathcal{L}_{n}$ of $n$-element permutations is defined as $\mathcal{L}_{n} = \{ \pi \circ g : \pi \in \mathcal{L}_{\ell}, g \in \mathcal{G}\}$, and its size is $|\mathcal{L}_{\ell}| \cdot |\mathcal{G}| = O(poly \log (\ell) \cdot |\mathcal{L}_{\ell}|)$. It is easy to observe that $\mathcal{L}_{n}$ can be computed in $O(poly (\ell) \cdot |\mathcal{L}_{\ell}|)$ time and the entropy of the uniform distribution over this set is $O(\log{n} + \log{|\mathcal{L}_{\ell}|})$.

In the remaining part, we show that uniform distribution over the set $\mathcal{L}_{n}$ guarantees the proper competitive ratio. Consider any $k$-tuple $\sigma = (\sigma(1), \ldots, \sigma(k)) \subseteq [n]$ denoting the position of $k$ largest adversarial elements in the $n$-element adversarial permutation. If the classic secretary algorithm is executed on  permutation $\sigma$, one can associate to this random experiment the following interpretation: first we draw u.a.r a function $g$ from $\mathcal{G}$ and then draw u.a.r a permutation $\pi$ from $\mathcal{L}_{\ell}$. 
Observe, that for a random function $g \in \mathcal{G}$ the probability that a pair of fixed indices $\sigma(i),\sigma(j)$ is distributed to the same block is at most $\frac{d}{\ell} = \frac{\sqrt{\ell}}{\ell}$, by Property $(1)$ of the dimension-reduction set $\mathcal{G}$. Then, by the union bound argument, we conclude that the probability that all elements of the $k$-tuple $\sigma$ are assigned to different blocks is at least $1 - \frac{k^2}{\sqrt{\ell}}$. Conditioned on this event, the image of $\sigma$ under the function $g$ is a $k$-tuple of elements from $[\ell]$ denoted $\sigma' = (\sigma'(1), \ldots, \sigma'(k))$. Therefore, by Lemma~\ref{lem:one-construction-small}, the classic algorithm (with checkpoint $\ell / e$) when executed on a random permutation $\pi$ from $\mathcal{L}_{\ell}$ picks the largest element with probability at least $$\frac{1}{e} - \frac{1}{\ell} - \frac{1}{e\cdot k} \cdot \left( 1 - \frac{1}{e} \right)^{k} \, .$$
To argue that this competitive ratio carries to $n$-element permutation $\pi \circ g$, we observe first that size of each block is $[\frac{n}{\ell}, \frac{n}{\ell} + o(\ell)]$, by Property $(2)$ of the dimension-reduction set $\mathcal{G}$. Also, we required that $\ell^{2} < \frac{n}{\ell}$. Therefore, the first bucket from the bucketing $\mathcal{B}_{1} = \{1, \ldots,  \frac{\ell}{e} - 1\}, \{\frac{\ell}{e} + 1, \ldots, \ell \}$ associated with the multi-set of permutation $\mathcal{L}_{n}$ translates to first $(\ell / e - 1) \cdot (\frac{n}{\ell} + o(\ell)) = \frac{n}{e} - \frac{n}{\ell} + \frac{\ell o(\ell)}{e} < \frac{n}{e}$ positions when the permutation $\pi \circ g$ is considered. It follows that the positions of elements from $k$-tuple $\sigma$ in the permutation $\pi \circ g$ with respect to the checkpoint $\frac{n}{e}$ are the same as positions of elements from $k$-tuple $\sigma'$ in the permutation $\pi$ with respect to the checkpoint $\frac{\ell}{e}$. Also, their relative order conveys from permutation $\pi$ to permutation $\pi \circ g$. Thus, it follows that whenever the classic algorithm is successful on the permutation $\pi$ it is also successful  on the permutation $\pi \circ g$ conditioned on the fact that $g$ is injective. This conditioning holds with probability $\big(1 - \frac{k^2}{\sqrt{\ell}}\big)$, thus the lemma follows by using Lemma \ref{lem:one-construction-small}.
\end{proof}

The main theorem employs the techniques introduced earlier in this section with the appropriate choice of parameters $k$, and $\ell$. 
\begin{theorem}\label{thm:1_secretary}
There exists a multi-set of $n$-element permutations $\mathcal{L}_{n}$ such that the the wait-and-pick algorithm with checkpoint $\lfloor n/e \rfloor$ achieves
$$\frac{1}{e} - \frac{3\log\log^2{n}}{e\log^{1/2}{n}} $$
success probability for the free order $1$-secretary problem, when the algorithm uses the order chosen uniformly from $\mathcal{L}_{n}$. The set $\mathcal{L}_{n}$ is computable in time $O(\text{poly } (n))$ and the uniform distribution on this set has $O(\log\log{n})$ entropy.
\end{theorem}
\begin{proof}
Set $\ell := \log{n}$, and $k := \log\log{n}$, then the theorem follows by applying directly Lemma~\ref{lem:one-construction-small} and~\ref{lem:one-lifting} to this choice of parameters. By Theorem~\ref{Thm:Derandomization_2_111} the running time is at most $O(\log \log n \cdot \ell^{5+k} \cdot (k!)^5)$ and it is $poly(n)$ because $\ell^{k} = O(poly(n))$ and $k! =O(poly(n))$. 
\end{proof}

\section{Lower bounds for the $k$-secretary problem}
\label{sec:lower-bounds}

\ignore{ 

\subsection{Optimality of $(1-1/\sqrt{k})$ competitive ratio}
\label{sec:optimality-k-secretary}

  We will show now that any, even randomized, algorithm for the $k$-secretary problem cannot have competitive 
ratio better than $(1-1/\sqrt{k})$, not only when it uses a uniform probability distribution $\mathcal{D}_{\Pi_n}$ on the set of all permutations $\Pi_n$ (this fact is well known \cite{kleinberg2005multiple,Gupta_Singla}), but also when it uses any distribution on $\Pi_n$. The main idea is to view a randomized algorithm $A$ (with some internal random bits) and the distribution $\mathcal{D}_{\Pi_n}$ together as a randomized algorithm $B = (A,\mathcal{D}_{\Pi_n})$ for the $k$-secretary problem. The randomness of the 
algorithm $B$ is the randomness of $A$ together with the randomness in $\mathcal{D}_{\Pi_n}$. Algorithm $B$ first samples $\pi \sim \mathcal{D}_{\Pi_n}$ and then runs $A$ on the items ordered according to permutation $\pi$.

\begin{proposition}\label{prop:optimal_comp_ratio}
  The best possible competitive ratio of any, even randomized, algorithm $A$ for the $k$-secretary problem is $(1-\Omega(1/\sqrt{k}))$ even if it 
uses a random order chosen from {\em any} distribution on $\Pi_n$. 
\end{proposition}

\begin{proof}
Let us fix any deterministic algorithm $A$ for the $k$-secretary problem and any fixed permutation $\pi \in \Pi_n$. This pair $B=(A,\pi)$ can be viewed as a deterministic algorithm for the $k$-secretary problem where $A$ is executed on the items in order given by $\pi$.
  
We will follow now an argument outlined in the survey by Gupta and Singla \cite{Gupta_Singla}. By Yao’s minimax principle \cite{Yao77}, it suffices to give a distribution over instances of adversarial assignments of values to items, that causes a large loss for any deterministic algorithm, in this case algorithm $B$. Suppose that each item has value $0$ with probability $1 - \frac{k}{n}$, and otherwise, it has value $1$ with probability $\frac{k}{2n}$, or value $2$ with the remaining probability $\frac{k}{2n}$. The number of non-zero items is therefore $k \pm O(\sqrt{k})$ with high probability by Chernoff bound, with about half $1$’s and half $2$’s. Therefore, the optimal value of this $k$-secretary instance is $V^* = 3k/2 \pm O(\sqrt{k})$ with high probability. 

Ideally, we want to pick all the $2$’s and then fill the remaining $k/2 \pm O(\sqrt{k})$ slots using the $1$’s. However, consider the state of the algorithm $B$ after $n/2$ arrivals. Since the algorithm does not know how many $2$’s will arrive in the second half, it does not know how many $1$’s to pick in the first half. Hence, it will either lose about $\Theta(\sqrt{k})$ $2$’s in the second half, or it will pick $\Theta(\sqrt{k})$ too few $1$’s from the first half. Either way, the algorithm will lose $\Omega(V^*/\sqrt{k})$ value.
  
\pk{Is the argument below now formal enough?}
  
Now by applying the Yao's principle, this loss applies to any deterministic worst-case adversarial assignment of values $\{0,1,2\}$ to the items in
$[n]$ and any randomized algorithm that is a probability distribution on any deterministic $k$-secretary algorithms. Therefore, this loss also applies to any randomized algorithm that is a probability distribution $\mathcal{D}_{\mbox{pairs}}$ on the pairs $(A,\pi)$ of deterministic algorithm $A$ and permutation $\pi \in \Pi_n$. 

Let us now fix any randomized algorithm $B$ for the $k$-secretary problem. This algorithm is a probability distribution $\mathcal{D}_B$ on some set of deterministic $k$-secretary algorithms. Let us also choose {\em any} probability distribution $\mathcal{D}_{\Pi_n}$ on the set of permutations $\Pi$. The product distribution $(\mathcal{D}_B, \mathcal{D}_{\Pi_n})$ is an example of distribution of type $\mathcal{D}_{\mbox{pairs}}$ above. 
Therefore, the above lower bound applies to the randomized algorithm $(\mathcal{D}_B, \mathcal{D}_{\Pi_n})$, which first samples $\pi \sim \mathcal{D}_{\Pi_n}$ and then executes the randomized algorithm $B = \mathcal{D}_B$ on the items in order given by $\pi$. This argument shows that no randomized algorithm $B$ that uses {\em any} probability distribution on random orders can have a competitive ratio better than $(1-\Omega(1/\sqrt{k}))$.
\end{proof}

} 

\subsection{Entropy lower bound for $k=O(\log^a n)$, for some constant $a\in (0,1)$}
\label{sec:lower-general}


Our proof of a lower bound on the entropy of any $k$-secretary algorithm achieving ratio $1-\epsilon$, for a given $\epsilon\in (0,1)$, stated in Theorem~\ref{thm:lower-general}, generalizes the proof for the (classic) secretary problem in \cite{KesselheimKN15}. 
This generalization is in two ways: first, we reduce the problem of selecting the largest value to the $k$-secretary problem of achieving ratio $1-\epsilon$, by considering a special class of hard assignments of values.
Second, when analyzing the former problem, we have to accommodate the fact that a our algorithm aiming at selecting the largest value can pick $k$ elements, while the classic adversarial algorithm can pick only one element. Below is an overview of the lower bound analysis.

We consider a subset of permutations, $\Pi\subseteq \Pi_n$, of size $\ell$ on which the distribution is concentrated enough (see Lemma~\ref{lem:entropy-support} proved in~\cite{KesselheimKN15}). Next, we fix
a semitone sequence $(x_1,\ldots,x_s)$ w.r.t. $\Pi$ of length $s=\frac{\log n}{\ell+1}$ and consider a specific class of hard assignments of values, defined later.
A semitone sequence with respect to $\pi$, introduced in~\cite{KesselheimKN15}, is defined recursively as follows: an empty sequence is semitone with respect to any permutation
$\pi$, and a sequence $(x_1, \ldots , x_s)$ is semitone w.r.t. $\pi$ if $\pi(x_s) \in\{ \min_{i\in [s]} \pi(x_i), \max_{i\in [s]} \pi(x_i)\}$ and
$(x_1, \ldots, x_{s-1})$ is semitone w.r.t. $\pi$. 
It has been showed that for any given set $\Pi$ of 
$\ell$ permutations of $[n]$, there
is always a sequence of length $s=\frac{\log n}{\ell+1}$ that is semitone with respect to all $\ell$ permutations.

Let 
$V^*=\{1,\frac{k}{1-\epsilon},(\frac{k}{1-\epsilon})^2,\ldots,(\frac{k}{1-\epsilon})^{n-1}\}$. 
An assignment is {\em hard} if the values of the semitone sequence form a permutation of some 
subset of $V^*$
while elements not belonging to the semitone sequence have value $\frac{1-\epsilon}{k}$.
Note that values allocated by hard assignment to elements not in the semitone system are negligible, in the sense that the sum of any $k$ of them is $1-\epsilon$ while the sum of $k$ largest values in the whole system is much bigger than $k$. 
Intuitively, every $k$-secretary algorithm achieving ratio $1-\epsilon$ must select largest value in hard assignments (which is in the semitone sequence) with probability at least $1-\epsilon$ -- this requires analysis of how efficient are deterministic online algorithms selecting $k$ out of $s$ values in finding the maximum value on certain random distribution of hard assignments (see Lemma~\ref{lem:lower-random-adv}) and applying Yao's principle to get an upper bound on the probability of success on any randomized algorithm against hard assignments (see Lemma \ref{lem:lower-deterministic-adv}).

For the purpose of this proof, let us fix $k\le \log^a n$ for some constant $a\in (0,1)$, and parameter $\epsilon \in (0,1)$ (which could be a function of $n,k$).

\begin{lemma}
\label{lem:lower-random-adv}
Consider a set of $\ell<\log n - 1$ permutations $\Pi\subseteq \Pi_n$ and a semitone sequence $(x_1,\ldots,x_s)$ w.r.t. set $\Pi$ of length $s=\frac{\log n}{\ell+1}<\log n$.
Consider any deterministic online algorithm that for any given $\pi\in \Pi$ aims at selecting the largest value, using at most $k$ picks, 
against the following distribution of hard assignments. 

Let $V=V^*$.
We proceed recursively: 
$v(x_s)$ is the middle element of $V$, and 
we apply the recursive procedure u.a.r.: 
(i) on sequence $(x_1,\ldots,x_{s-1})$ and new set $V$ containing $|V|/2$ {\em smallest} elements in $V$ with probability $\frac{1}{s}$ (i.e., $v(x_s)$ is larger than values of the remaining elements with probability $1/s$), and 
(ii) on sequence $(x_1,\ldots,x_{s-1})$ and new set $V$ containing $|V|/2$ {\em largest} elements in $V$ with probability $\frac{s-1}{s}$ (i.e., $v(x_s)$ is smaller than values of the remaining elements with probability $(s-1)/s$).

\ignore{
First, $z$ is selected from $V^*$ u.a.r.
Let $V=V_z$ be the pool of values to be assigned to the semitone sequence 1-1.
Then, we proceed recursively: $v(x_s)=\max V$ with probability $\frac{1}{s}$ and $v(x_s)=\min V$ with probability $\frac{s-1}{s}$, while the assignment of the remaining values from $V\setminus \{v(x_s)\}$ to $(x_1,\ldots,x_{s-1})$ is done recursively and independently.
}


Then, for any $\pi\in\Pi$, the algorithm selects the maximum value with probability at most~$\frac{k}{s}$.
\end{lemma}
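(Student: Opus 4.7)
My plan is to induct on the semitone length $s$, exploiting the recursive structure of the biased-binary-search distribution. The base case $s\le k$ is immediate, since the bound $k/s\ge 1$ is trivially satisfied, and similarly small base values can be handled directly. For the inductive step, the semitone property gives $\pi(x_s)\in\{\min_i\pi(x_i),\max_i\pi(x_i)\}$, so $x_s$ arrives either first (Case F) or last (Case L) among the semitone elements, and the prefix $(x_1,\dots,x_{s-1})$ is itself semitone with respect to $\pi$. Conditional on either event (i) or event (ii) at level $s$, the distribution of values on $(x_1,\dots,x_{s-1})$ is the same biased-binary-search process on length $s-1$, applied to the rescaled pool $V$; this is what lets me invoke the inductive hypothesis on the sub-problem.

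In Case F the algorithm must commit on $v(x_s)$ (the median of $V$, a fixed value) with no information about the level-$s$ coin flip. If it picks, it wins the event (i) branch outright (probability $1/s$), and in the event (ii) branch it continues with $k-1$ picks on the $(s-1)$-instance, which by induction succeeds with probability at most $(k-1)/(s-1)$; aggregating gives $\tfrac{1}{s}+\tfrac{s-1}{s}\cdot\tfrac{k-1}{s-1}=\tfrac{k}{s}$. If it skips, the event (i) branch is lost outright and event (ii) contributes at most $\tfrac{s-1}{s}\cdot\tfrac{k}{s-1}=\tfrac{k}{s}$. Both deterministic choices therefore hit the desired bound.

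Case L is the main technical obstacle, because the algorithm now observes the first-phase semitone elements $x_{t_1},\dots,x_{t_{s-1}}$ before reaching $v(x_s)$ and can in principle adapt to what it sees. To handle this I plan to introduce an auxiliary parameterization by the remaining pick budget $j\in\{0,1,\dots,k\}$ at the moment $v(x_s)$ is revealed: let $q_j$ denote the probability (over the random assignment, conditioned on each event) that the algorithm ends the first phase with exactly $j$ picks still available and without having picked the first-phase maximum. Applying the inductive hypothesis to the $(s-1)$-instance controls, for each $j$, the conditional success probability in event (ii); in event (i) the algorithm succeeds only when $j\ge 1$ so that it can still pick $v(x_s)$. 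Combining these two contributions with weights $(s-1)/s$ and $1/s$, and telescoping the inductive bound across the residual budget, should yield $\tfrac{k}{s}$.

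The hardest step, which I expect to require the bulk of the technical work, is showing that any gain the algorithm obtains in event (i) from conserving picks for $v(x_s)$ is exactly offset by a loss of probability mass in event (ii), so that the $1/s$ weighting makes the two contributions add up to $k/s$ regardless of how cleverly the algorithm adapts to the first-phase observations. This is the "more complex parametrization of events and inductive proof" referred to in the proof overview, and it is where the biased probabilities $1/t$ versus $(t-1)/t$ of the hard distribution are crucially exploited: the large weight on event (ii) at every recursive level forces the algorithm into the harder sub-problem most of the time, while the small weight $1/t$ of event (i) limits the benefit of hoarding picks for the deepest element. Once this coupled induction is set up carefully, the final bound follows by a short algebraic manipulation.
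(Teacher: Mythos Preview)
Your Case F is fine and matches the paper's Case~1. The gap is entirely in Case~L, and it is a real one: your single-parameter induction on $s$ is not strong enough to close it. The paper does \emph{not} induct on $s$; it inducts on the lexicographic pair $(t,i)$ and proves the strictly stronger statement $\Pr[B_t\mid A_t^i]\le i/t$, where $A_t^i$ is the event that the algorithm makes at most $i$ picks among $x_1,\ldots,x_t$ and $B_t$ is the event that the maximum among $v(x_1),\ldots,v(x_t)$ is picked. This two-parameter hypothesis is exactly what is needed in Case~L, and your ``auxiliary parameterization by $j$'' only gestures at it without carrying it out.

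Concretely, the number of picks the algorithm spends on $x_1,\ldots,x_{s-1}$ is itself a random variable that depends on the coin flips at levels $<s$; conditioning on ``exactly $k-j$ picks used'' biases that distribution, so the plain length-$(s-1)$ IH does not apply to the conditioned sub-problem. The paper circumvents this by never reducing to a ``fresh'' sub-instance: it stays with the same algorithm and the same probability space, splits $\Pr[B_t\mid A_t^i]$ along the four-way partition by $B_{t-1}$ and $C_t=\{\text{algorithm picks }v(x_t)\}$, and invokes the IH at $(t-1,i-1)$ (under $C_t$) and $(t-1,i)$ (under $\neg C_t$). Your ``exact offset'' heuristic---that saving picks for event~(i) must cost the algorithm in event~(ii)---is precisely the content of this $(t,i)$ induction, but it is not a consequence of the length-only IH you state. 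In particular, because the absolute values of $v(x_1),\ldots,v(x_{s-1})$ already lie in the top or bottom half of the current pool $V$, the algorithm can in principle tailor its first-phase behavior to the level-$s$ coin; the paper's argument never compares the algorithm's behavior across the two branches, it conditions on $B_{t-1}$ and $C_t$ instead. To make your plan go through you would have to upgrade your IH to the full $(t,i)$ statement and reproduce the paper's four-subcase computation.
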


\begin{proof}
We start from observing that the hard assignments produced in the formulation of the lemma are disjoint -- it follows directly by the fact that set $V$ of available values is an interval in $V^*$ and it shrinks by half each step; the number of steps $s<\log n$, so in each recursive step set $V$ is non-empty.

In the remainder we prove the sought probability.
Let $A_t^i$, for $1\le t\le s$ and $0\le i\le k$, be the event that the algorithm picks at most $i$ values from $v(x_1),\ldots,v(x_t)$.
Let $B_t$ be the probability that the algorithm picks the largest of values $v(x_1),\ldots,v(x_t)$, in one of its picks.
Let $C_t$ be the probability that the algorithm picks value $v(x_t)$.
We prove, by induction on lexicographic pair $(t,i)$, that $\Pr{B_t|A_t^i}\le \frac{i}{t}$.
Surely, the beginning of the inductive proof for any pair $(t,i=t)$ is correct: $\Pr{B_t|A_t^t}\le 1$.
%
%

Consider an inductive step for $i< t\le s$.
Since, by the definition of semitone sequence $(x_1,\ldots,x_s)$, element $x_t$ could be either before all elements $x_1,\ldots,x_{t-1}$ or after all elements $x_1,\ldots,x_{t-1}$ in permutation $\pi$, we need to analyze both of these cases:

\vspace*{1ex}
\noindent
{\bf Case 1: $\pi(x_t)<\pi(x_1),\ldots,\pi(x_{t-1})$.}
Consider the algorithm when it receives the value of $x_t$. It has not seen the values of elements $x_1,\ldots,x_{t-1}$ yet. Assume that the algorithm already picked $k-i$ values before processing element $x_t$.
Note that, due to the definition of the hard assignment in the formulation of the lemma, the knowledge of values occurring by element $x_t$ only informs the algorithm about set $V$ from which the adversary draws values for sequence $(x_1,\ldots,x_{t-1})$; thus this choice of values is independent, for any fixed prefix of values until the occurrence of element $x_t$. We use this property when deriving the probabilities in this considered case.

We consider two conditional sub-cases, depending on whether either $C_t$ or $\neg C_t$ holds, starting from the former:
\[
\Pr{B_t|A_t^i \& C_t}
=
\frac{1}{t} + \frac{t-1}{t} \cdot \Pr{B_{t-1}|A_{t-1}^{i-1} \& C_t}
=
\frac{1}{t} + \frac{t-1}{t} \cdot \Pr{B_{t-1}|A_{t-1}^{i-1}}
=
\frac{1}{t} + \frac{t-1}{t} \cdot \frac{i-1}{t-1}
=
\frac{i}{t}
\ ,
\]
where 
\begin{itemize}
    \item 
the first equation comes from the fact that $val(x_t)$ is the largest among $v(x_1),\ldots,v(x_t)$ with probability $\frac{1}{t}$ (and it contributes to the formula because of the assumption $C_t$ that algorithm picks $v(x_t)$) and $v(x_t)$ is not the largest among $v(x_1),\ldots,v(x_t)$ with probability $\frac{t-1}{t}$ (in which case the largest value must be picked within the first $v(x_1),\ldots,v(x_{t-1})$ using $i-1$ picks), and
\item
the second equation comes from the fact that $B_{t-1}$ and $C_t$ are independent, and
\item
the last equation holds by inductive assumption for $(t-1,i-1)$.
\end{itemize}
In the complementary condition $\neg C_t$ we have:
\[
\Pr{B_t|A_t^i \& \neg C_t}
=
\frac{t-1}{t} \cdot \Pr{B_{t-1}|A_{t-1}^{i} \& \neg C_t}
=
\frac{t-1}{t} \cdot \Pr{B_{t-1}|A_{t-1}^{i}}
=
\frac{t-1}{t} \cdot \frac{i}{t-1}
=
\frac{i}{t}
\ ,
\]
where 
\begin{itemize}
\item 
the first equation follows because if the algorithm does not pick $v(x_t)$ then the largest of values  $v(x_1),\ldots,v(x_t)$ must be within $v(x_1),\ldots,v(x_{t-1})$ and not $v(x_t)$ (the latter happens with probability $\frac{t-1}{t}$), and
\item
the second equation comes from the fact that $B_{t-1}$ and $\neg C_t$ are independent, and
\item
the last equation holds by inductive assumption for $(t-1,i)$.
\end{itemize}
Hence,
\[
\Pr{B_t|A_t^i}
=
\Pr{B_t|A_t^i \& C_t}\cdot \Pr{C_t}
+
\Prob[B_t|A_t^i \& \neg C_t]\cdot \Pr{\neg C_t}
= \frac{i}{t} \cdot \left( \Pr{C_t} + \Pr{\neg C_t}\right)
=
\frac{i}{t}
\ .
\]
It concludes the analysis of Case 1.

\vspace*{1ex}
\noindent
{\bf Case 2: $\pi(x_t)>\pi(x_1),\ldots,\pi(x_{t-1})$.}
Consider the algorithm when it receives the value of $x_t$. It has already seen the values of elements $x_1,\ldots,x_{t-1}$; therefore, we can only argue about conditional event on the success in picking the largest value among $v(x_1),\ldots,v(x_{t-1})$, i.e., event $B_{t-1}$.

Consider four conditional cases, depending on whether either of $C_t,\neg C_t$ holds and whether either of $B_{t-1},\neg B_{t-1}$ holds, starting from sub-case $B_{t-1}\& C_t$:
\[
\Pr{B_t|A_t^i \& B_{t-1} \& C_t}
=
\frac{\Pr{B_t\& B_{t-1}|A_t^i \& C_t}}{\Pr{B_{t-1}|A_t^i \& C_t}}
=
\frac{\Pr{B_t\& B_{t-1}|A_t^i \& C_t}}{\Pr{B_{t-1}|A_{t-1}^{i-1}}}
=
1
\ ,
\]
since the algorithm already selected the largest value among $v(x_1),\ldots,v(x_{t-1})$ (by $B_{t-1}$) and now it also selects $v(x_t)$ (by $C_t$).
We also used the observation $A_t^i\& C_t = A_{t-1}^{i-1}$.
Next sub-case, when the conditions $\neg B_{t-1}\& C_t$ hold, implies:
\[
\Pr{B_t|A_t^i \& \neg B_{t-1} \& C_t}
=
\frac{\Pr{B_t\& \neg B_{t-1}|A_t^i \& C_t}}{\Pr{\neg B_{t-1}|A_t^i \& C_t}}
=
\frac{\Pr{B_t\& \neg B_{t-1}|A_t^i \& C_t}}{1-\Pr{B_{t-1}|A_{t-1}^{i-1}}}
=
\frac{1}{t}
\ ,
\]
because when the maximum value among $v(x_1),\ldots,v(x_{t-1})$ was not selected (by $\neg B_{t-1}$) the possibility that the selected (by $C_t$) $v(x_t)$ is the largest among $v(x_1),\ldots,v(x_{t})$ is $\frac{1}{t}$, by definition of values $v(\cdot)$.
As in the previous sub-case, we used $A_t^i\& C_t = A_{t-1}^{i-1}$.
When we put the above two sub-cases together, for $B_{t-1}\& C_t$ and $\neg B_{t-1}\& C_t$, we get:
\[
\Pr{B_t\& B_{t-1}|A_t^i \& C_t} + \Pr{B_t\& \neg B_{t-1}|A_t^i \& C_t}
=
\Pr{B_{t-1}|A_{t-1}^{i-1}} \cdot 1 + \left(1-\Pr{B_{t-1}|A_{t-1}^{i-1}}\right) \cdot \frac{1}{t}
=
\]
\[
=
\frac{i-1}{t-1} + \left(1-\frac{i-1}{t-1}\right) \cdot \frac{1}{t}
=
\frac{(i-1)t+(t-i)}{(t-1)t}
=
\frac{(t-1)i}{(t-1)t}
=
\frac{i}{t}
\ ,
\]
where the first equation comes from the previous sub-cases, the second is by inductive assumption, and others are by simple arithmetic.

We now consider two remaining sub-cases, starting from $B_{t-1}\& \neg C_t$: 
\[
\Pr{B_t|A_t^i \& B_{t-1} \& \neg C_t}
=
\frac{\Pr{B_t\& B_{t-1}|A_t^i \& \neg C_t}}{\Pr{B_{t-1}|A_t^i \& \neg C_t}}
=
\frac{\Pr{B_t\& B_{t-1}|A_t^i \& \neg C_t}}{\Pr{B_{t-1}|A_{t-1}^i}}
=
\frac{t-1}{t}
\ ,
\]
since the algorithm already selected the largest value among $v(x_1),\ldots,v(x_{t-1})$ (by $B_{t-1}$) and now it also selects $v(x_t)$ (by $C_t$).
We also used the observation $A_t^i\& \neg C_t = A_{t-1}^{i}$.
Next sub-case, when the conditions $\neg B_{t-1}\& \neg C_t$ hold, implies:
\[
\Pr{B_t|A_t^i \& \neg B_{t-1} \& \neg C_t}
=
\frac{\Pr{B_t\& \neg B_{t-1}|A_t^i \& \neg C_t}}{\Pr{\neg B_{t-1}|A_t^i \& \neg C_t}}
=
\frac{\Pr{B_t\& \neg B_{t-1}|A_t^i \& \neg C_t}}{1-\Pr{B_{t-1}|A_{t-1}^i}}
=
0
\ ,
\]
because when the maximum value among $x_1,\ldots,x_{t-1}$ was not selected (by $\neg B_{t-1}$) the possibility that the selected (by $C_t$) $v(x_t)$ is the largest among $v(x_1),\ldots,v(x_{t})$ is $\frac{1}{t}$, by definition of values $v(\cdot)$.
We also used the observation $A_t^i\& \neg C_t = A_{t-1}^{i}$.
When we put the last two sub-cases together, for $B_{t-1}\& \neg C_t$ and $\neg B_{t-1}\& \neg C_t$, we get:
\[
\Pr{B_t\& B_{t-1}|A_t^i \& \neg C_t} + \Pr{B_t\& \neg B_{t-1}|A_t^i \& \neg C_t}
=
\Pr{B_{t-1}|A_{t-1}^i} + \left(1-\Pr{B_{t-1}|A_{t-1}^i}\right) \cdot \frac{1}{t}
=
\]
\[
=
\frac{i-1}{t-1} + \left(1-\frac{i-1}{t-1}\right) \cdot \frac{1}{t}
=
\frac{(i-1)t+(t-i)}{(t-1)t}
=
\frac{(t-1)i}{(t-1)t}
=
\frac{i}{t}
\ ,
\]
where the first equation comes from the previous sub-cases, the second is by inductive assumption, and others are by simple arithmetic.

Hence, similarly as in Case 1, we have
\[
\Pr{B_t|A_t^i}
=
\Pr{B_t|A_t^i \& C_t}\cdot \Pr{C_t}
+
\Pr{B_t|A_t^i \& \neg C_t}\cdot \Pr{\neg C_t}
= \frac{i}{t} \cdot \left( \Pr{C_t} + \Pr{\neg C_t}\right)
=
\frac{i}{t}
\ .
\]
It concludes the analysis of Case 2, and also the inductive proof.

It follows that 
$\Pr{B_s|A_t^k} = \frac{k}{s}$, and since $\Pr{A_s^k} = 1$ (as the algorithm does $k$ picks in the whole semitone sequence), we get $\Pr{B_s}=\frac{k}{s}$.
\end{proof}

Applying Yao's principle~\cite{Yao77} to Lemma~\ref{lem:lower-random-adv}, we get:

\begin{lemma}
\label{lem:lower-deterministic-adv}
Fix any $\epsilon \in (0,1)$.
For any set $\Pi\subseteq \Pi_n$ of an $\ell<\log n -1$ permutations and any probabilistic distribution on it, and for any online algorithm using $k$ picks to select the maximum value, 
there is an adversarial (worst-case) hard assignment of values to elements in $[n]$ such that: 

(i) the maximum assigned value is unique and bigger by factor at least $\frac{k}{1-\epsilon}$ from other used values,

(ii) highest $s=\frac{\log n}{\ell+1}$ values are at least $1$, while the remaining ones are $\frac{1-\epsilon}{k}$,

(iii) the algorithm selects the maximum allocated value with probability at most $\frac{k}{s}$.
%
\end{lemma}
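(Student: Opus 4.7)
The plan is to deduce this from Lemma \ref{lem:lower-random-adv} by Yao's minimax principle. First I would fix a semitone sequence $(x_1,\ldots,x_s)$ of length $s=\frac{\log n}{\ell+1}$ with respect to $\Pi$, whose existence for every set of $\ell$ permutations (under the hypothesis $\ell<\log n-1$) is the semitone-sequence fact recalled in the excerpt from \cite{KesselheimKN15}. With this sequence fixed, let $\mathcal{D}_{\mathrm{hard}}$ denote the distribution over value assignments specified in Lemma \ref{lem:lower-random-adv}, using the value set $V^*$ and the parameter $\epsilon$.

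A randomized online algorithm together with the given distribution on $\Pi$ can be interpreted as a probability distribution over pairs $(A,\pi)$, where $A$ is a deterministic online algorithm using at most $k$ picks and $\pi\in\Pi$. Lemma \ref{lem:lower-random-adv} bounds, for every such fixed pair $(A,\pi)$, the probability that $A$ selects the maximum value under a random assignment drawn from $\mathcal{D}_{\mathrm{hard}}$ by $k/s$. Averaging this bound over the joint distribution on $(A,\pi)$ yields that the joint success probability of the randomized algorithm (over its own coins, the random order from $\Pi$, and the random hard assignment) is at most $k/s$. Swapping the order of averaging, there must exist a single deterministic assignment in the support of $\mathcal{D}_{\mathrm{hard}}$ against which the randomized algorithm succeeds with probability at most $k/s$; this establishes property (iii).

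It remains to verify properties (i) and (ii) by inspecting the support of $\mathcal{D}_{\mathrm{hard}}$. By construction, the $s$ semitone positions receive distinct values drawn from the geometric progression $V^*=\{(k/(1-\epsilon))^i : 0\le i\le n-1\}$, while all remaining $n-s$ elements receive the value $(1-\epsilon)/k$. Since every value in $V^*$ is at least $1$ and every off-sequence value is $(1-\epsilon)/k<1$, the largest $s$ values in the assignment coincide with the values placed on the semitone sequence and are all at least $1$, giving (ii). Because these $s$ values are distinct powers of $k/(1-\epsilon)$, the maximum is unique and exceeds every other value from $V^*$ by a factor of at least $k/(1-\epsilon)$, and it exceeds every off-sequence value by an even larger factor, which yields (i).

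The main subtlety I anticipate is articulating the Yao averaging step carefully: the randomized algorithm jointly randomizes over its own coins and its choice of permutation from $\Pi$, and Lemma \ref{lem:lower-random-adv} must apply uniformly across every resulting deterministic algorithm--permutation pair. Since that lemma is stated for any such pair and the assignments in the support of $\mathcal{D}_{\mathrm{hard}}$ are pairwise disjoint (ensuring the value set $V$ never depletes in the recursive construction), this uniformity is automatic, and the remainder of the argument is a routine inspection of the geometric structure of $V^*$.
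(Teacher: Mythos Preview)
Your proposal is correct and follows essentially the same approach as the paper: fix a semitone sequence of length $s=\frac{\log n}{\ell+1}$, apply Yao's minimax principle to Lemma~\ref{lem:lower-random-adv} to extract a worst-case hard assignment achieving (iii), and read off (i) and (ii) from the structure of $V^*$. Your exposition of the Yao averaging step (viewing the randomized algorithm plus the distribution on $\Pi$ as a distribution over deterministic pairs $(A,\pi)$) and your explicit verification of (i) and (ii) are in fact more detailed than the paper's proof, which simply invokes Yao and says the hard assignment satisfies the first two conditions ``by definition.''
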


\begin{proof}
Consider a semitone sequence w.r.t. the set of permutations $\Pi$, of length $s=\frac{\log n}{\ell+1}$ (it exists as shown in \cite{KesselheimKN15}), and restrict for now to this sub-sequence of the whole $n$-value sequence.
Consider any online algorithm that ignores elements that are not in this sub-sequence.
We apply Yao's principle~\cite{Yao77} to Lemma~\ref{lem:lower-random-adv}: the latter computes a lower bound on the cost (probability of selecting largest value) of a deterministic $k$-secretary algorithm, for inputs being hard assignments selected from distribution in Lemma~\ref{lem:lower-random-adv}.
By Yao's principle, there is a deterministic (worst-case) adversarial hard assignment 
values from set $V^*\cup \{\frac{1-\epsilon}{k}\}$ 
such that for any (even randomized) algorithm and probabilistic distribution on $\Pi$, 
the probability of the algorithm to select the largest of the assigned values with at most $k$ picks is at most $\frac{k}{s}$.
The hard assignment satisfies, by definition, also the first two conditions in the lemma statement.
\ignore{
Now, we consider the whole sequence of $n$ values.
The adversary assigns value $\frac{1-\epsilon}{k}$ to all elements not from the semitone sequence, and uses the assignment to the semitone sequence described in the previous paragraph. Clearly, this assignment satisfies conditions (i) and (ii) of the lemma. If this algorithm selected the largest value (in the whole sequence) with probability larger than $\frac{k}{s}$, and thus violate condition (iii), then, since non-semitone elements are clearly recognized by their values, we could restrict the algorithm to the semitone sequence and obtain contradiction with

Since $1/k$ is smaller by factor at least $k$ from any element in $V$, the algorithm can clearly reject all elements not from the semitone sequence (if not, it gets less than $k$ choices left for elements in the semitone sequence, while gaining not more than a value of a single element in the semitone sequence, more precisely, $1$).
}
\end{proof}




We can extend Lemma~\ref{lem:lower-deterministic-adv} to any distribution on a set $\Pi$ of permutations of $[n]$ with an entropy $H$, by using the following lemma from~\cite{KesselheimKN15}, in order to obtain the final proof of Theorem~\ref{thm:lower-general} (re-stated below).

\begin{lemma}[\cite{KesselheimKN15}]
\label{lem:entropy-support}
Let $\pi$ be drawn from a finite set $\Pi_n$ by a distribution of entropy $H$. Then, for any $\ell \ge 4$,
there is a set $\Pi \subseteq \Pi_n$, $|\Pi| \le \ell$, such that $\Pr{\pi\in\Pi} \ge 1 - \frac{8H}{\log(\ell - 3)}$.
\end{lemma}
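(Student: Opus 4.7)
\textbf{Plan for proving Lemma~\ref{lem:entropy-support}.} The plan is to take $\Pi$ to be the set of $\ell$ most likely permutations under the given distribution on $\Pi_n$ and to bound the resulting tail probability directly from the entropy. First I would order the elements of $\Pi_n$ so that their probabilities $p_1\ge p_2\ge\cdots$ are non-increasing, define $\Pi=\{\pi_1,\dots,\pi_\ell\}$ (so automatically $|\Pi|\le\ell$), and write $q=\sum_{i>\ell}p_i=\Pr{\pi\notin\Pi}$. The goal is then to show that $q$ is bounded by the stated multiple of $H/\log(\ell-3)$.

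The first step is a short pigeonhole observation. Because the $p_i$ are non-increasing and $\sum_{i=1}^{\ell+1} p_i\le 1$, we get $p_{\ell+1}\le 1/(\ell+1)$, and hence $p_i\le 1/(\ell+1)$ for every $i>\ell$. Consequently $-\log p_i\ge \log(\ell+1)$ for all indices in the tail.

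The second step combines the nonnegativity of $-x\log x$ on $[0,1]$ with the tail bound above to estimate the entropy:
$$H \;=\; -\sum_{i\ge 1}p_i\log p_i \;\ge\; -\sum_{i>\ell}p_i\log p_i \;\ge\; \log(\ell+1)\cdot\sum_{i>\ell}p_i \;=\; q\log(\ell+1).$$
Rearranging gives $q\le H/\log(\ell+1)$. The trivial estimate $\log(\ell+1)\ge \tfrac{1}{8}\log(\ell-3)$, which holds for every $\ell\ge 4$, then converts this into $q\le 8H/\log(\ell-3)$, so that $\Pr{\pi\in\Pi}\ge 1 - 8H/\log(\ell-3)$, as required.

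I do not expect any real obstacle here: the argument is essentially a one-line pigeonhole-plus-entropy calculation, and the factor $8$ in the statement is loose precisely to absorb the gap between $\log(\ell+1)$ and $\tfrac{1}{8}\log(\ell-3)$. Since the result is cited verbatim from \cite{KesselheimKN15}, I expect the authors' proof to follow this same outline, with the weaker constants kept for notational convenience when the lemma is later applied with $H=\Theta(\log\log n)$ and $\ell$ polylogarithmic in $n$ to complete the proof of Theorem~\ref{thm:lower-general}.
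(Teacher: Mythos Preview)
Your argument is correct. Note, however, that the paper does not supply its own proof of this lemma: it is quoted verbatim from \cite{KesselheimKN15} and simply invoked in the proof of Theorem~\ref{thm:lower-general}. Your pigeonhole-plus-entropy calculation (take the $\ell$ most likely outcomes, observe that each remaining outcome has probability at most $1/(\ell+1)$, and lower-bound $H$ by the tail contribution) is exactly the standard way to establish such a statement, and in fact yields the sharper bound $q\le H/\log(\ell+1)$ before you relax the denominator; the constants in the lemma as stated are deliberately loose.
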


\noindent
{\bf Theorem~\ref{thm:lower-general} }
{\em Assume $k\le \log^a n$ for some constant $a\in (0,1)$.
Then, any algorithm (even fully randomized) solving $k$-secretary problem while drawing permutations from some distribution on $\Pi_n$ with an entropy $H\le \frac{1-\epsilon}{9} \log\log n$, cannot achieve the expected ratio of at least $1-\epsilon$, for any $\epsilon\in (0,1)$ and sufficiently large $n$.}

\vspace{1ex}
\begin{proof}
Let us fix $\ell=\sqrt{\frac{\log n}{k}}-1$. 
By Lemma~\ref{lem:entropy-support}, there is a set $\Pi\subseteq \Pi_n$ of size at most $\ell$ such that $\Pr{\pi\in\Pi} \ge 1 - \frac{8H}{\log(\ell - 3)}$. Let $s=\frac{\log n}{\ell +1}$ be the length of a semitone sequence w.r.t. $\Pi$. 

By Lemma~\ref{lem:lower-deterministic-adv} applied to the conditional distribution on set $\Pi$, there is an adversarial hard assignment of values such that the probability of selecting the largest value is at most $\frac{k}{s}$.
Summing up the events and using Lemma~\ref{lem:entropy-support}, the probability of the algorithm selecting the largest value is at most 
\[
\frac{k}{s}\cdot 1 + \frac{8H}{\log(\ell - 3)}
=
\frac{k\cdot (\ell+1)}{\log n} + \frac{8H}{\log(\ell - 3)}
=
\sqrt{\frac{k}{\log n}} + \frac{8H}{\log(\sqrt{\frac{\log n}{k}} - 4)}
\ ,
\]
which is smaller than $1-\epsilon$, for any $\epsilon\in (0,1)$, for sufficiently large $n$, because $k\le \log^a n$, where $a\in (0,1)$ is a constant, and $H\le \frac{1-\epsilon}{9} \log\log n$.

To complete the proof, recall that, by the definition of hard assignments and statements (i) and (ii) in Lemma~\ref{lem:lower-deterministic-adv}, the maximum value selected from $V$ is unique, and is bigger from other values by factor at least $\frac{k}{1-\epsilon}$, therefore the event of selecting $k$ values with ratio $1-\epsilon$ is a sub-event of the considered event of selecting largest value. Thus, the probability of the former is upper bounded by the probability of the latter and so the former cannot be achieved as well.
\end{proof}

\subsection{Entropy lower bound for wait-and-pick algorithms} 
\label{sec:lower-wait-and-pick}

Assume that there is a deterministic wait-and-pick algorithm for the $k$-secretary problem with competitive ratio $1-\epsilon$. Let $m$ be the 
checkpoint position and $\tau$ be any statistic (i.e., the algorithm selects $\tau$-th largest element among the first $m$ elements as chosen element, and then from the elements after position $m$, it selects every element greater than or equal to the statistic). Our analysis works for any statistics. Let $\ell$ be the number of permutations, from which the order is chosen uniformly at random. 
We prove that no wait-and-pick algorithm achieves simultaneously a inverse-polynomially (in $k$) small error $\epsilon$ and entropy asymptotically smaller than $\log k$. More precisely, we re-state and prove Theorem~\ref{thm:lower}.

\vspace*{3ex}
\noindent
{\bf Theorem~\ref{thm:lower} }
{\em Any wait-and-pick algorithm solving $k$-secretarial problem with competitive ratio of at least $(1-\epsilon)$ requires entropy $\Omega(\min\{\log 1/\epsilon,\log \frac{n}{2k}\})$.}

\vspace*{3ex}
\begin{proof}
W.l.o.g.~and for simplicity, assume $1/\epsilon$ is an integer.
Let $G=(V,W,E)$ be a bipartite graph, where $V$ is the set of $n$ elements, $W$ corresponds to the set of $\ell$ permutations, and a neighborhood of $i\in W$ is defined as the set of elements (in $V$) which are on the left hand side of checkpoint $m$ in the $i$-th permutation; clearly $|E|=\ell\cdot m$. Let $d$ denote an average degree of a node in $V$, i.e., $d=\frac{|E|}{n}=\frac{\ell \cdot m}{n}$. 

Consider first the case when $m\ge k$.
We prove that $\ell \ge 1/\epsilon$. 
Consider a different strategy of the adversary: it processes elements $i\in W$ one by one, and selects $\epsilon \cdot k$ neighbors of element $i$ that has not been selected before to set $K$. This is continued until set $K$ has $k$ elements or all elements in $W$ has been considered.
Note that if during the above construction the current set $K$ has at most $k(1-\epsilon)$ elements, the adversary can find $\epsilon \cdot k$ neighbors of the currently considered $i\in W$ that are different from elements in $K$ and thus can be added to $K$, by assumption $m\ge k$.
If the construction stops because $K$ has $k$ elements, it means that $\ell\ge 1/\epsilon$, because $1/\epsilon$ elements in $W$ have had to be processed in the construction.
If the construction stops because all the elements in $W$ have been processed but $K$ is of size smaller than $k$, it means that $|W|=\ell<1/\epsilon$; however, if we top up the set $K$ by arbitrary elements in $V$ so that the resulting $K$ is of size $k$, no matter what permutation is selected the algorithm misses at least $\epsilon \cdot k$ elements in $K$, and thus its value is smaller by factor less than $(1-\epsilon)$ from the optimum. and we get a contradiction.
Thus, we proved $\ell \ge 1/\epsilon$, and thus the entropy needed is at least $\log 1/\epsilon$, which for optimal algorithms with $\epsilon=\Theta(k^{-1/2})$ gives entropy $\Theta(\log k)$.

Consider now the complementary case when $m<k$. The following has to hold: $\ell\cdot (m+k)\ge n$. This is because 
in the opposite case the adversary could allocate value $1$ to an element which does not occur in the first $m+k$ positions of any of the $\ell$ permutations, and value $\frac{1-\epsilon}{k}$ to all other elements -- in such scenario, the algorithm would pick the first $k$ elements after the checkpoint position $m$ (as it sees, and thus chooses, the same value all this time -- it follows straight from the definition of wait-and-pick checkpoint),
for any of the $\ell$ permutations, obtaining the total value of $1-\epsilon$, while the optimum is clearly $1+(k-1)\cdot\frac{1-\epsilon}{k}>1$ contradicting the competitive ratio $1-\epsilon$ of the algorithm.
It follows from the equation $\ell\cdot (m+k)\ge n$ that $\ell \ge \frac{n}{m+k} > \frac{n}{2k}$, and thus the entropy is $\Omega(\log\frac{n}{2k})$.

To summarize both cases, the entropy is $\Omega(\min\{\log 1/\epsilon,\log \frac{n}{2k}\})$.
\ignore{
\dk{Ponizsze na razie sie nie stosuje}
Assume, for the sake of contradiction, that for any constant $\eta>0$, $\ell=o(k^\eta)$ and $\epsilon=\Theta(k^{-\eta})$.
By a pigeonhole principle, there is a set of elements $K\subseteq V$ such that $|K|=k$ and an average degree of an element in $K$ is at least $d$.
Then, by the uniformity of distribution assumption, we have that ??? and consequently, each permutation must have at most $\epsilon \cdot k$ elements in from $K$ on their left hand side (considering adversary who allocates the same value to all elements in $K$ and arbitrary small value to others), thus in total the number of occurrences of elements from $K$ on left hand sides on the considered $\ell$ permutations is at most $\epsilon \cdot k \cdot \ell$.
On the other hand, the sum of degrees of elements in $k$ is at least $k\cdot d \ge k\cdot \frac{\ell \cdot m}{n}$.
They together imply $\epsilon \ge \frac{m}{n}$, and consequently, $m\le n\epsilon$.

We would like to show that for $k$ sufficiently large but still sub-logarithmic, the entropy is $\Omega(\log\log n)$.
}
\end{proof}

In particular, it follows from Theorem~\ref{thm:lower} 
that for $k$ such that $k$ is super-polylogarithmic and sub-$\frac{n}{\polylog n}$, the entropy of competitive ratio-optimal algorithms is $\omega(\log\log n)$. Moreover, if $k$ is within range of some polynomials of $n$ of degrees smaller than $1$, the entropy is $\Omega(\log n)$. 

\subsection{$\Omega(\log\log n + (\log k)^2)$ entropy of previous solutions}
\label{sec:previous-suboptimality}

All previous solutions but~\cite{KesselheimKN15} used uniform distributions on the set of all permutations of $[n]$, which requires large entropy $\Theta(n\log n)$.\footnote{%
Some of them also used additional randomness, but with negligible entropy $o(n\log n)$.}
In~\cite{KesselheimKN15}, the $k$-secretary algorithm uses $\Theta(\log\log n)$ entropy to choose a permutation u.a.r. from a given set, however, it also uses recursively additional entropy to choose the number of blocks $q'$.
It starts with $q'$ being polynomial in $k$, and in a recursive call it selects a new $q'$ from the binomial distribution $Binom(q',1/2)$. It continues until $q'$ becomes $1$.
Below we estimate from below the total entropy needed for this random process.

Let $X_i$, for $i=1,\ldots,\tau$, denote the values of $q'$ selected in subsequent recursive calls, where $\tau$ is the first such that $X_\tau=1$. We have $X_1=Binom(q',1/2)$ and recursively, $X_{i+1}=Binom(X_i,1/2)$.
We need to estimate the joint entropy $\cH(X_1,\ldots,X_\tau)$ from below.
Joint entropy can be expressed using conditional entropy as follows:
\begin{equation}
\label{eq:conditional-entropy}
\cH(X_1,\ldots,X_\tau) =
\cH(X_1) + \sum_{i=2}^\tau \cH(X_i|X_{i-1},\ldots,X_1) 
\ .
\end{equation}
By the property of $Binom(q',1/2)$ and the fact that $q'$ is a polynomial on $k$, its entropy $\cH(X_1)=\Theta(\log q')=\Theta(\log k)$.
We have:
\[
\cH(X_i|X_{i-1},.\ldots,X_1) 
=
\sum_{q_i\ge \ldots \ge q_{i-1}} \Pr{X_1=q_1,\ldots,X_{i-1}=q_{i-1}} \cdot \cH(X_i|X_1=q_1,\ldots,X_{i-1}=q_{i-1})
\]
\[
=
\sum_{q_i\ge \ldots \ge q_{i-1}} \Pr{X_1=q_1,\ldots,X_{i-1}=q_{i-1}} \cdot \cH(X_i|X_{i-1}=q_{i-1})
\]
\[
=
\Theta\left(\Pr{X_1\in (\frac{1}{3}q',\frac{2}{3}q'), X_2\in (\frac{1}{3}X_{1},\frac{2}{3}X_{1})\ldots,X_{i-1}\in (\frac{1}{3}X_{i-2},\frac{2}{3}X_{i-2})}\right) \cdot
\cH\left(X_i\Big| X_{i-1}\in (\frac{1}{3^{i-1}}q',\frac{2^{i-1}}{3^{i-1}}q')\right)
\ ,
\]
where the first equation is the definition of conditional entropy, second follows from the fact that once $q_{i-1}$ is fixed, the variable $X_1$ does not depend on the preceding $q_{i-2},\ldots,q_1$, and the final asymptotics follows from applying Chernoff bound to each $X_1,\ldots,X_{i-1}$ and taking the union bound.
Therefore, for $i\le \frac{1}{2}\log_3 q'$, we have 
\[
\cH(X_i|X_{i-1},.\ldots,X_1) 
=
(1-o(1)) \cdot \cH(X_i|X_{i-1}\in\Theta(\text{poly}(k)))
=
\Theta(\log k)
\ .
\]
Consequently, putting all the above into Equation~(\ref{eq:conditional-entropy}), we get
\[
\cH(X_1,\ldots,X_\tau) 
=
\Theta(\log_3 k \cdot \log k) 
= 
\Theta(\log^2 k)
\ .
\]
The above proof leads to the following.

\begin{proposition}\label{prop:Kessel_Large_Entropy}
 The randomized $k$-secretary algorithm of Kesselheim, Kleinberg and Niazadeh \cite{KesselheimKN15} uses randomization that has a total entropy $\Omega(\log\log n + (\log k)^2)$, where entropy $\log\log n$ corresponds to the distribution from which it samples a random order, and entropy $(\log k)^2$ corresponds to the internal random bits of the algorithm.
\end{proposition}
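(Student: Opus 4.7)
}
The plan is to decompose the algorithm's randomness into two independent sources and lower-bound each separately. The first source is the uniform draw of a permutation from the support produced by the Reed--Solomon-based construction of \cite{KesselheimKN15}; since that support has size polylogarithmic in $n$ and is actually of size $\Theta(\log\log n)$ at the relevant scale, the entropy contribution $\Omega(\log\log n)$ is immediate. The nontrivial part is the second source: the internal recursion that, starting from $q' = \mathrm{poly}(k)$, iteratively draws $X_1 \sim \mathrm{Binom}(q', 1/2)$ and $X_{i+1} \sim \mathrm{Binom}(X_i, 1/2)$ until $X_\tau = 1$. I will show this internal randomness contributes $\Omega((\log k)^2)$ entropy.

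The main tool will be the chain rule for Shannon entropy,
\begin{equation*}
\cH(X_1, \ldots, X_\tau) \,=\, \cH(X_1) \,+\, \sum_{i=2}^{\tau} \cH(X_i \mid X_1, \ldots, X_{i-1}),
\end{equation*}
combined with the Markov property of the recursion, which gives $\cH(X_i \mid X_1, \ldots, X_{i-1}) = \cH(X_i \mid X_{i-1})$ after conditioning on the realized value of $X_{i-1}$. The base term is handled by the well-known fact that a $\mathrm{Binom}(N, 1/2)$ distribution has entropy $\tfrac{1}{2}\log N + \Theta(1)$, so $\cH(X_1) = \Theta(\log k)$ since $q' = \mathrm{poly}(k)$.

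The heart of the argument is showing that for $i$ up to roughly $\tfrac{1}{2}\log_{3} q'$, the conditional entropy $\cH(X_i \mid X_{i-1})$ is still $\Theta(\log k)$. The plan is to apply a Chernoff bound at each level of the recursion: with high probability $X_j \in (\tfrac{1}{3} X_{j-1}, \tfrac{2}{3} X_{j-1})$, and a union bound over $j < i$ shows that with probability $1 - o(1)$ we have $X_{i-1} \in (3^{-(i-1)} q', (2/3)^{i-1} q')$, which is still $\mathrm{poly}(k)$ for $i \le \tfrac{1}{2}\log_3 q'$. Restricting the expectation in the definition of conditional entropy to this high-probability event and using $\cH(\mathrm{Binom}(N, 1/2)) = \Theta(\log N)$ gives $\cH(X_i \mid X_{i-1}) = \Theta(\log k)$. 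Summing $\Theta(\log k)$ contributions across $\Theta(\log_3 k)$ levels then yields $\cH(X_1, \ldots, X_\tau) = \Theta((\log k)^2)$, and adding the $\Omega(\log\log n)$ from the permutation draw finishes the proof.

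The main obstacle I anticipate is making the Chernoff-plus-union-bound step fully rigorous: I need to verify that the low-probability event where $X_{i-1}$ has fallen far below $\mathrm{poly}(k)$ contributes only $o(\log k)$ to $\cH(X_i \mid X_{i-1})$, which requires bounding the tail contribution $\Pr[\text{bad}] \cdot \log (\text{support size})$; since the support size of any $X_i$ is at most $q' = \mathrm{poly}(k)$ and the bad probability is exponentially small in~$k$, this tail is indeed negligible. A secondary care point is the independence (or at least additivity of entropy) between the permutation sampling and the internal binomial recursion, which holds because the two randomness sources are used independently by the algorithm of \cite{KesselheimKN15}.
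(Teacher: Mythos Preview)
Your proposal is correct and follows essentially the same route as the paper: chain rule for joint entropy, $\cH(\mathrm{Binom}(N,1/2))=\Theta(\log N)$ for the base term, Chernoff plus union bound to keep $X_{i-1}$ at $\mathrm{poly}(k)$ for the first $\Theta(\log_3 q')$ levels, and then summing $\Theta(\log k)$ contributions over $\Theta(\log k)$ levels. Your treatment of the tail event is in fact slightly more careful than the paper's sketch; one small wording slip is that the permutation support has size $\mathrm{polylog}(n)$ (not $\Theta(\log\log n)$), but you correctly conclude the entropy is $\Theta(\log\log n)$.
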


Our algorithm shaves off the additive $\Theta(\log^2 k)$ from the formula for all $k$ up to nearly $\log n$. 

\section{Lower bounds and characterization for the classical secretary problem}\label{section:lb_classic_secr}

Given a wait-and-pick algorithm for the classical secretary ($1$-secretary) problem, we will denote its checkpoint by $m_0$ (we will reserve $m$ to be used as a variable  checkpoint in the analysis). 

We will first understand the optimal success probability of the best secretary algorithms. Let $f(k,m)=\frac{m}{k} (H_{k-1}-H_{m-1})$, where $H_k$ is the $k$-th harmonic number, $H_k = 1 + \frac{1}{2} + \frac{1}{3} + \ldots + \frac{1}{k}$. It is easy to prove that $f(n,m_0)$ is the exact success probability of the wait-and-pick algorithm with checkpoint $m_0$ when random order is given by choosing u.a.r.~a permutation $\pi \in \Pi_n$, see \cite{GuptaSingla}.

\begin{lemma}\label{lemma:f_expansion_1}
The following asymptotic behavior holds, if $k \rightarrow \infty$ and 
$j\le \sqrt{k}$ is such that $m=k/e+j$ is an integer in $[k]$:
\[
f\left(k, \frac{k}{e} + j\right) = \frac{1}{e} - \left(\frac{1}{2e} - \frac{1}{2} + \frac{e j^2}{2k}  \right) \frac{1}{k} + \Theta\left( \left( \frac{1}{k} \right)^{3/2} \right)
\ .
\]
\end{lemma}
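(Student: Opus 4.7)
The plan is to substitute the standard asymptotic expansion of the harmonic numbers and then carefully track terms up to order $1/k$, treating everything of order $O(k^{-3/2})$ or smaller as error. First I would invoke $H_n = \ln n + \gamma + \tfrac{1}{2n} + O(n^{-2})$ to write
\[
H_{k-1}-H_{m-1} = \ln\frac{k-1}{m-1} + \frac{1}{2(k-1)} - \frac{1}{2(m-1)} + O(k^{-2}).
\]
Splitting the logarithm as $\ln\frac{k-1}{m-1} = \ln\frac{k}{m} + \ln\frac{1-1/k}{1-1/m}$ is convenient because the main log sits at the value $1$ modulo a small perturbation.

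Next I would expand $\ln(k/m)$ around $m=k/e$. Writing $m = k/e + j$ gives $k/m = e/(1+ej/k)$, hence $\ln(k/m) = 1 - \ln(1 + ej/k)$. Since $|j|\le \sqrt{k}$ makes $ej/k = O(k^{-1/2})$, the Taylor expansion yields
\[
\ln\frac{k}{m} = 1 - \frac{ej}{k} + \frac{e^2 j^2}{2k^2} + O(k^{-3/2}),
\]
where the $O(k^{-3/2})$ absorbs the $(ej/k)^3$ term (which is at worst $k^{-3/2}$ for $|j|\le\sqrt k$). The secondary log expands routinely as $\ln\frac{1-1/k}{1-1/m} = -\tfrac{1}{k}+\tfrac{1}{m}+O(k^{-2}) = \tfrac{e-1}{k} + O(k^{-3/2})$, using $1/m = e/k + O(k^{-3/2})$. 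The half-shift corrections contribute $\frac{1}{2(k-1)}-\frac{1}{2(m-1)} = \frac{1-e}{2k} + O(k^{-3/2})$. Summing everything,
\[
H_{k-1}-H_{m-1} \;=\; 1 + \frac{e-1}{2k} - \frac{ej}{k} + \frac{e^2 j^2}{2k^2} + O(k^{-3/2}).
\]

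Finally, I would multiply by $m/k = 1/e + j/k$ and collect. The products that survive at order $1/k$ are $\tfrac{1}{e}\cdot 1 = \tfrac{1}{e}$; the constant part $\tfrac{1}{e}\cdot\tfrac{e-1}{2k}=\tfrac{e-1}{2ek}$; the two linear-in-$j$ terms $\tfrac{1}{e}\cdot(-\tfrac{ej}{k}) = -\tfrac{j}{k}$ and $\tfrac{j}{k}\cdot 1=\tfrac{j}{k}$, which cancel; and the two $j^2$ terms $\tfrac{1}{e}\cdot\tfrac{e^2 j^2}{2k^2}=\tfrac{ej^2}{2k^2}$ and $\tfrac{j}{k}\cdot(-\tfrac{ej}{k})=-\tfrac{ej^2}{k^2}$, which combine to $-\tfrac{ej^2}{2k^2}$. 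All remaining cross products involve $j/k$ times an $O(k^{-1})$ correction or worse, hence are $O(k^{-3/2})$ since $|j|\le\sqrt{k}$. Rewriting $\tfrac{e-1}{2e} = -\bigl(\tfrac{1}{2e}-\tfrac{1}{2}\bigr)$ packages the result exactly as
\[
f\!\left(k,\tfrac{k}{e}+j\right) \;=\; \frac{1}{e} - \left(\frac{1}{2e} - \frac{1}{2} + \frac{e j^{2}}{2k}\right)\frac{1}{k} + O(k^{-3/2}),
\]
matching the claim.

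The calculations are not conceptually subtle, but the main obstacle is the bookkeeping: one must keep track of \emph{both} the $\ln(k/m)$ expansion to quadratic order in $j/k$ \emph{and} the $O(1/k)$ corrections coming from $1/(2n)$ in the harmonic expansion, because both feed terms of size $1/k$ into the final answer, and several terms at that order cancel only after the multiplication by $m/k$. The other delicate point is verifying that the constraint $|j|\le\sqrt{k}$ is exactly what makes the discarded pieces $(ej/k)^3$, $j^2/k^3$, and $j/k^2$ all of size $O(k^{-3/2})$, so that the error bound is uniform in $j$ over the stated range.
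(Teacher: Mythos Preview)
Your proof is correct and follows essentially the same approach as the paper: expand the harmonic-number difference to the needed precision and then multiply by $m/k$, tracking terms to order $1/k$ with the constraint $|j|\le\sqrt{k}$ controlling the discarded pieces. The only cosmetic difference is that the paper routes the expansion through the digamma function (writing $H_{k-1}-H_{m-1}=\psi(k)-\psi(m)$ and invoking Gordon's two-sided bounds on $\psi$), whereas you use the standard asymptotic $H_n=\ln n+\gamma+\tfrac{1}{2n}+O(n^{-2})$ directly; the resulting algebra is the same.
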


\begin{proof}
 To prove this expansion we extend the harmonic function $H_n$ to real numbers. Namely, for any real number $x \in \reals$ we use the well known definition:
\[
  H_x = \psi(x+1) + \gamma \ ,
\] 
where $\psi$ is the digamma function and  $\gamma$ is the Euler-Mascheroni
constant. Digamma function is just the derivative of the logarithm of the 
gamma function $\Gamma(x)$, pioneered by Euler, Gauss and Weierstrass. Both functions are important and widely studies in real and complex analysis.

For our purpose, it suffices to use the following inequalities that hold for any real $x > 0 $ (see Theorem~5 in \cite{Gordon1994}):
$$
  \ln(x) - \frac{1}{2x} - \frac{1}{12x^2} + \frac{1}{120(x + 1/8)^4} \,\, < \,\,  \psi(x) \,\, < \,\,  \ln(x) - \frac{1}{2x} - \frac{1}{12x^2} + \frac{1}{120x^4} \, .
$$ Now we use these estimates for $f(k,m)$ for $\psi(k)$ and $\psi(m)$ with $m = k/e + j$:
\begin{eqnarray*}
  & & f(k,m) = \frac{m}{k} (\psi(k) - \psi(m)) = \\
  & &        = \frac{m}{k} \left(\ln(k) - \frac{1}{2k} - \frac{1}{12k^2} + \frac{1}{120(k + \theta(k))^4}  
                        - \ln(m) + \frac{1}{2m} + \frac{1}{12m^2} - \frac{1}{120(m + \theta(m))^4} \right) \\
  & &        = \frac{m}{k} \left(1 + \ln\left(\frac{k}{e m}\right) - \frac{1}{2k} + \frac{1}{2m} - \frac{1}{12k^2} + \frac{1}{12m^2} + \frac{1}{120(k + \theta(k))^4} - \frac{1}{120(m + \theta(m))^4} \right) \, ,
\end{eqnarray*} where $\theta(x) \in (0,1/8)$. Now, taking into account that 
$m\in [k]$,
we can suppress the low order terms under $\Theta\left( \frac{1}{k^2} \right)$ to obtain
\begin{eqnarray*}
  f(k,m) &=& \frac{m}{k} \left(1 + \ln\left(\frac{k}{e m}\right) - \frac{1}{2k} + \frac{1}{2m}\right) +  \Theta\left( \frac{1}{m^2} \right) \\
         &=& \frac{m}{k} - \frac{m}{k} \ln\left(\frac{e m}{k}\right) - \frac{m}{2k^2} + \frac{1}{2k} +  \Theta\left( \frac{1}{m^2} \right) \\
         &=& \frac{1}{e} + \frac{j}{k} - \left(\frac{1}{e} + \frac{j}{k}\right) \ln\left(\frac{k + j e}{k}\right) - \frac{k/e + j}{2k^2} +    
             \frac{1}{2k} +  \Theta\left( \frac{1}{k^2} \right) \\
         &=& \frac{1}{e} + \frac{j}{k} - \left(\frac{1}{e} + \frac{j}{k}\right) \ln\left(\frac{k + j e}{k}\right) - \frac{1}{2ek} + \frac{1}{2k} +  \Theta\left( \frac{1}{k^{3/2}} \right)  \, .
\end{eqnarray*} We will now use the following well known Taylor expansion
\begin{eqnarray*}
 \ln\left(\frac{k + j e}{k}\right) &=& \left(\frac{k + j e}{k} - 1 \right) - \frac{1}{2} \left(\frac{k + j e}{k} - 1 \right)^2 + \frac{1}{3}    
                                       \left(\frac{k + j e}{k} - 1 \right)^3 - \ldots \\
                                &=& \frac{j e}{k} - \frac{1}{2} \left(\frac{j e}{k}\right)^2 + \frac{1}{3} \left(\frac{j e}{k}\right)^3 - \ldots 
  = \frac{j e}{k} - \frac{1}{2} \left(\frac{j e}{k}\right)^2 + \Theta\left( \frac{1}{k^{3/2}} \right)\, .
\end{eqnarray*} Using this expansion, we can continue from above as follows
\begin{eqnarray*}
   f(k,m) &=& \frac{1}{e} + \frac{j}{k} - \left(\frac{1}{e} + \frac{j}{k}\right) \ln\left(\frac{k + j e}{k}\right) - \frac{1}{2ek} + \frac{1}{2k} +  \Theta\left( \frac{1}{k^{3/2}} \right) \\
   &=& \frac{1}{e} + \frac{j}{k} - \left(\frac{1}{e} + \frac{j}{k}\right) \left(\frac{j e}{k} - \frac{1}{2} \left(\frac{j e}{k}\right)^2\right) - \frac{1}{2ek} + \frac{1}{2k} +  \Theta\left( \frac{1}{k^{3/2}} \right) \\
   &=& \frac{1}{e} + \frac{1}{2e} \left(\frac{j e}{k}\right)^2 - \frac{j^2 e}{k^2} + \frac{j^3 e^2}{k^3} - \frac{1}{2ek} + \frac{1}{2k} +  \Theta\left( \frac{1}{k^{3/2}} \right) \\
   &=& \frac{1}{e} + \frac{1}{2e} \left(\frac{j e}{k}\right)^2 - \frac{j^2 e}{k^2} - \frac{1}{2ek} + \frac{1}{2k} +  \Theta\left( \frac{1}{k^{3/2}} \right) \\
   &=& \frac{1}{e} - \frac{j^2 e}{2 k^2} - \frac{1}{2ek} + \frac{1}{2k} +  \Theta\left( \frac{1}{k^{3/2}} \right) 
   =
   \frac{1}{e} - \left(\frac{1}{2e} + \frac{e j^2}{2k} - \frac{1}{2} \right) \frac{1}{k} + \Theta\left( \frac{1}{k^{3/2}} \right) \ .
\end{eqnarray*}
\end{proof}

We will now precisely characterize the maximum of function $f$. Recall that,
$f(k,m)=\frac{m}{k} (H_{k-1}-H_{m-1})$, and note that $1\le m\le k$. We have the discrete derivative~of~$f$:
$
h(m) = f(k,m+1)-f(k,m) =
\frac{1}{k} (H_{k-1}-H_m-1)
\ ,
$
which is positive for $m\le m_0$
and negative otherwise, for some 
$m_0=\max\{m>0: H_{k-1}-H_m-1>0\}$.

\begin{lemma}\label{l:deriv_bounds_1}
There exists an absolute constant $c > 1$ such that for any integer $k \geq c$, we have that $h\left(\lfloor \frac{k}{e} \rfloor - 1\right) > 0$ and
$h\left(\lfloor \frac{k}{e} \rfloor + 1 \right) < 0$. Moreover, function $f(k,\cdot)$ achieves its maximum for $m \in \{\lfloor \frac{k}{e} \rfloor, \lfloor \frac{k}{e} \rfloor + 1\}$, and is monotonically increasing for smaller values of $m$ and monotonically decreasing for larger values of $m$.
\end{lemma}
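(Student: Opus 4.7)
The plan is as follows. First I observe that $h(m)=\frac{1}{k}(H_{k-1}-H_m-1)$ is a strictly decreasing function of $m$, because $H_m$ is strictly increasing in $m$. Consequently, the sign of $h$ flips at most once as $m$ ranges over $[k]$; once the two sign conditions in the statement are verified, the claim about where the maximum of $f(k,\cdot)$ is attained follows, since $h(m)>0$ means $f(k,m+1)>f(k,m)$ and $h(m)<0$ means $f(k,m+1)<f(k,m)$.

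Next I would establish the two sign inequalities via standard integral comparisons for the harmonic sums, namely
\[
\ln\!\left(\frac{k}{m+1}\right) \;\le\; H_{k-1}-H_m \;=\; \sum_{i=m+1}^{k-1}\frac{1}{i} \;\le\; \ln\!\left(\frac{k-1}{m}\right),
\]
valid for $1\le m\le k-1$. Set $m^{*}=\lfloor k/e\rfloor$. Because $e$ is irrational, $m^{*}<k/e$ strictly, so $k/m^{*}>e$ and, applied with $m=m^{*}-1$, the lower bound gives
$H_{k-1}-H_{m^{*}-1}\ge \ln(k/m^{*})>\ln e=1$, hence $h(m^{*}-1)>0$. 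Similarly, $m^{*}+1>k/e$, so applying the upper bound with $m=m^{*}+1$ yields
$H_{k-1}-H_{m^{*}+1}\le \ln\!\big((k-1)/(m^{*}+1)\big)<\ln\!\big(e(1-1/k)\big)=1+\ln(1-1/k)<1$ for $k\ge 2$, hence $h(m^{*}+1)<0$. This establishes both strict inequalities for any $k$ at least some absolute constant $c>1$.

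Finally, combining the two inequalities with the monotonicity of $h$ from the first paragraph, either $h(m^{*})\ge 0$ or $h(m^{*})<0$. In the former case the discrete derivative shows $f(k,m^{*}+1)\ge f(k,m^{*})$ and the maximum is at $m^{*}+1$; in the latter case the maximum is at $m^{*}$. In either case the maximum lies in $\{\lfloor k/e\rfloor,\lfloor k/e\rfloor+1\}$, with $f(k,\cdot)$ strictly increasing on $\{1,\ldots,m^{*}\}$ and strictly decreasing on $\{m^{*}+1,\ldots,k\}$. The main (mild) obstacle is to track the strict inequalities carefully, in particular exploiting irrationality of $e$ to turn the integral bound at $m=m^{*}-1$ into a strict inequality, and to verify that $\ln(1-1/k)$ is strictly negative for all $k\ge c$. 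Beyond that, the argument is essentially a calculus-style analysis of the discrete derivative of $f(k,\cdot)$.
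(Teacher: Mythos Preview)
Your proof is correct and follows the same overall structure as the paper's: first observe that $h(m)=\frac{1}{k}(H_{k-1}-H_m-1)$ is strictly decreasing in $m$, so the sign flips at most once, and then verify the two sign conditions at $m^*-1$ and $m^*+1$ to pin down the maximum.

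The difference lies in how the sign conditions are established. The paper invokes the refined bounds $\frac{1}{2(x+1)} < H_x - \ln x - \gamma < \frac{1}{2x}$ and then pushes through somewhat heavier calculations, including the auxiliary inequality $\ln y \ge 1 - 1/y$, to obtain the two signs. Your argument instead uses the elementary integral comparison
\[
\ln\!\left(\frac{k}{m+1}\right)\le H_{k-1}-H_m\le \ln\!\left(\frac{k-1}{m}\right),
\]
which dispatches both cases in one line each: the lower bound at $m=m^*-1$ combined with the irrationality of $e$ gives $\ln(k/m^*)>1$, and the upper bound at $m=m^*+1$ gives $\ln\!\big(e(1-1/k)\big)<1$. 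This is genuinely shorter and avoids the Euler--Mascheroni constant entirely. The only thing your route gives up is an explicit small value of the constant $c$ (one needs $m^*\ge 2$ and $m^*+1\le k-1$, so roughly $k\ge 6$), but the lemma only asks for existence of some absolute $c$, so nothing is lost.
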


\begin{proof}
We first argue that function $f(n,\cdot)$ has exactly one local maximum, which is also global maximum.
To see it, observe that function $h(m)$ is positive until $H_m+1$ gets bigger than $H_{k-1}$, which occurs for a single value $m$ (as we consider function $h$ for discrete arguments) and remains negative afterwards.
Thus, function $f(n,\cdot)$ is monotonically increasing until that point, and decreasing afterwards.
Hence, it has only one local maximum, which is also global maximum. 

It remains to argue that the abovementioned argument $m$ in which function $f(n,\cdot)$ achieves maximum is in $\{\lfloor \frac{k}{e} \rfloor, \lfloor \frac{k}{e} \rfloor + 1\}$.
We will make use of the following known inequalities.

\begin{lemma}\label{l:harmonic_bounds} 
The following bounds hold for the harmonic and logarithmic functions:
\begin{enumerate}
\item[(1)] $\frac{1}{2(x+1)} < H_x -\ln x -\gamma < \frac{1}{2x}$ ,
\item[(2)] $\frac{1}{24(x+1)^2} < H_x -\ln (x+1/2) -\gamma < \frac{1}{24x^2}$ ,
\item[(3)] $\frac{x}{1+x} \leq \ln(1+x) \leq x$, which holds for $x > -1$.
\end{enumerate}
\end{lemma}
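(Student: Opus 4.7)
The plan is to treat the three items separately, since they have quite different characters: item (3) is an elementary calculus exercise, while items (1) and (2) are refined versions of the asymptotic expansion $H_n \sim \ln n + \gamma + \frac{1}{2n} - \frac{1}{12n^2} + \cdots$ and require a more careful argument.

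For item (3), I would prove the two inequalities by considering the auxiliary functions $g_1(x) = x - \ln(1+x)$ and $g_2(x) = \ln(1+x) - \frac{x}{1+x}$ on $(-1, \infty)$. Both satisfy $g_i(0) = 0$, and a short computation gives $g_1'(x) = \frac{x}{1+x}$ and $g_2'(x) = \frac{x}{(1+x)^2}$. In each case the derivative has the same sign as $x$, so $g_1, g_2 \geq 0$ throughout $(-1,\infty)$, which is exactly the two inequalities of (3).

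For items (1) and (2), the central identity I would use is the telescoping representation
\[
H_n - \ln n - \gamma \;=\; \sum_{k=n}^{\infty} \Big( \ln\!\big(1 + \tfrac{1}{k}\big) - \tfrac{1}{k+1} \Big),
\]
which follows by writing $\gamma = \lim_{N\to\infty}(H_N - \ln N)$ and telescoping. Expanding each summand by the Taylor series of $\ln(1+1/k)$ and of $1/(k+1) = \frac{1}{k}\cdot\frac{1}{1+1/k}$ gives the termwise identity $\ln(1+1/k) - 1/(k+1) = \frac{1}{2k^2} - \frac{2}{3k^3} + \frac{3}{4k^4} - \cdots$, which in particular lies between $\frac{1}{2k(k+1)}$ and $\frac{1}{2k^2}$ by an alternating-series argument. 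Summing these termwise bounds and telescoping (using $\sum_{k=n}^\infty \frac{1}{k(k+1)} = \frac{1}{n}$ and similar identities) yields the two-sided bound in (1). Item (2) is handled by the same method but expanding around the midpoint $k+1/2$ instead, which kills the $O(1/k^2)$ correction and produces a $O(1/k^4)$ remainder whose sum fits between $\frac{1}{24(n+1)^2}$ and $\frac{1}{24 n^2}$. For non-integer $x$ the bounds extend via the standard identity $H_x = \psi(x+1) + \gamma$ together with the digamma bounds already cited from Gordon~\cite{Gordon1994} in Section~\ref{section:lb_classic_secr}.

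The main obstacle will be keeping the constants sharp: the naive termwise estimate $a_k := \ln(1+1/k) - 1/(k+1) < \frac{1}{2k^2}$ is not by itself enough to yield $\sum_{k \geq n} a_k < \frac{1}{2n}$, since $\sum_{k\geq n} \frac{1}{2k^2}$ exceeds $\frac{1}{2n}$. The right fix is to strengthen the termwise bound to $a_k < \frac{1}{2k(k+1)}$ (using the next negative term $-2/(3k^3)$ in the alternating expansion), after which the sum telescopes exactly to $\frac{1}{2n}$. A symmetric refinement gives the lower bound $a_k > \frac{1}{2(k+1)(k+2)}$, whose telescoped sum is $\frac{1}{2(n+1)}$, completing (1). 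Item (2) is similar but works at second order and is strictly easier once (1) is in place.
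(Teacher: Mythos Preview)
The paper does not actually prove Lemma~\ref{l:harmonic_bounds}. It is introduced inside the proof of Lemma~\ref{l:deriv_bounds_1} with the sentence ``We will make use of the following known inequalities'' and is then used directly, with no argument supplied. So there is no ``paper's own proof'' to compare against; the paper treats all three items as standard facts from the literature.

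Your plan is a reasonable self-contained proof. Item~(3) is fully correct as you wrote it. For item~(1), your telescoping identity $H_n-\ln n-\gamma=\sum_{k\ge n}\big(\ln(1+1/k)-1/(k+1)\big)$ and the expansion of each summand are correct, and the refined termwise bounds $\frac{1}{2(k+1)(k+2)}<a_k<\frac{1}{2k(k+1)}$ do indeed telescope to exactly the claimed endpoints; you would still need to verify these two inequalities for \emph{every} $k\ge 1$, not just asymptotically (both can be checked by showing the difference is monotone and vanishes at infinity, or by a direct series estimate). Item~(2) by the midpoint expansion is also workable but, as you anticipate, needs the same care with constants. The one loose end is the extension to non-integer $x$: your appeal to $H_x=\psi(x+1)+\gamma$ together with the Gordon bounds is the right move, but note that Gordon's inequalities (already quoted in the paper in the proof of Lemma~\ref{lemma:f_expansion_1}) give the digamma function to within $O(1/x^2)$, which immediately yields~(1) and makes the telescoping argument for integers unnecessary in that case. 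In short: your route is sound and more detailed than anything in the paper, while the paper itself simply quotes the lemma as known.
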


\noindent
Using the first bound $(1)$ from Lemma \ref{l:harmonic_bounds}, we obtain the following:
\begin{eqnarray*}
k \cdot h\left(\left\lfloor \frac{k}{e} \right\rfloor - 1\right)
&=& 
H_{k-1} - H_{\left\lfloor \frac{k}{e} \right\rfloor - 1} - 1
\\
&>& 
\ln(k-1) + \frac{1}{2k} - \ln\left(\left\lfloor \frac{k}{e} \right\rfloor - 1\right) - \frac{1}{2(\left\lfloor \frac{k}{e} \right\rfloor - 1)} - 1
\\
&=&
\ln\left(\frac{e(k-1)}{e(\lfloor \frac{k}{e} \rfloor - 1 )}\right) + \frac{1}{2k} - \frac{1}{2(\left\lfloor \frac{k}{e} \right\rfloor - 1)} - 1
\\
&=&
\ln\left(\frac{k-1}{e(\lfloor \frac{k}{e} \rfloor - 1 )}\right) + \frac{1}{2k} - \frac{1}{2(\left\lfloor \frac{k}{e} \right\rfloor - 1)}
\\
& & \mbox{Rewriting inequality (3) in Lemma \ref{l:harmonic_bounds} as } \ln(y) \geq 1-1/y, \mbox{ we obtain:}
\\
&\geq&
1 - \frac{e(\lfloor \frac{k}{e} \rfloor - 1 )}{k-1} + \frac{1}{2k} - \frac{1}{2(\left\lfloor \frac{k}{e} \right\rfloor - 1)}
\\
&>&
1 - \frac{e(\lfloor \frac{k}{e} \rfloor - 1 )}{k-1} - \frac{1}{2(\left\lfloor \frac{k}{e} \right\rfloor - 1)}
\,\,\,\, > \,\,\, 0
\ ,
\end{eqnarray*} where the last inequality holds because it is equivalent to
$$
  2(k-1)(\lfloor k/e \rfloor - 1 ) > 2e(\lfloor k/e \rfloor - 1 )^2 + k-1 \Leftrightarrow 2k \lfloor k/e \rfloor  + (4e-2) \lfloor k/e \rfloor > 2e(\lfloor k/e \rfloor)^2 +3k +2e-3
$$
$$
  \Leftarrow 2k \lfloor k/e \rfloor > 2e(\lfloor k/e \rfloor)^2 \, \mbox{ and } \, 
  (4e-2) \lfloor k/e \rfloor \geq 3k +2e-3,
$$
$$
  \Leftarrow k/e > \lfloor k/e \rfloor \, \mbox{ and } \, 
  (4e-2)(k/e - 1) \geq 3k +2e-3,
$$ where the first inequality is obvious and the second holds for $k = \Omega(1)$. 

For the second part we again use the first bound $(1)$ from Lemma \ref{l:harmonic_bounds}, to obtain:
 
\begin{eqnarray*}
k \cdot h\left(\left\lfloor \frac{k}{e} \right\rfloor + 1\right)
&=& 
H_{k-1} - H_{\left\lfloor \frac{k}{e} \right\rfloor + 1} - 1
\\
&<& 
\ln(k-1) + \frac{1}{2(k-1)} - \ln\left(\left\lfloor \frac{k}{e} \right\rfloor + 1\right) - \frac{1}{2(\left\lfloor \frac{k}{e} \right\rfloor + 2)} - 1
\\
&=&
\ln\left(\frac{e(k-1)}{e(\lfloor \frac{k}{e} \rfloor + 1 )}\right) + \frac{1}{2(k-1)} - \frac{1}{2(\left\lfloor \frac{k}{e} \right\rfloor + 2)} - 1
\\
&=&
\ln\left(\frac{k-1}{e(\lfloor \frac{k}{e} \rfloor + 1 )}\right) + \frac{1}{2(k-1)} - \frac{1}{2(\left\lfloor \frac{k}{e} \right\rfloor + 2)}
\\
&<&
0
\ ,
\end{eqnarray*} where the last inequality holds because it follows from
$$
  \ln\left(\frac{k-1}{e(\lfloor \frac{k}{e} \rfloor + 1 )}\right) < 0 \, \mbox{ and } \, 
  \frac{1}{2(k-1)} - \frac{1}{2(\left\lfloor \frac{k}{e} \right\rfloor + 2)}  < 0
$$
$$
  \Leftrightarrow k/e < \lfloor k/e \rfloor + 1 + 1/e \, \mbox{ and } \, 
  \lfloor k/e \rfloor < k - 3,
$$
$$
  \Leftarrow  k/e < \lfloor k/e \rfloor + 1 + 1/e \, \mbox{ and } \, 
  \lfloor k/e \rfloor \leq k/e \leq k - 3,
$$ where the first inequality is obvious and the second holds for $k \geq \frac{e-1}{3e}$.

The second part of the lemma that function $f$ achieves its maximum for $m \in \{\lfloor \frac{k}{e} \rfloor, \lfloor \frac{k}{e} \rfloor + 1\}$ follows directly from the first part of that lemma and from the definition of the discrete derivative $h(\cdot)$.
\end{proof}

\noindent Proposition \ref{Thm:optimum_expansion} below  shows a characterization of the optimal success probability $OPT_n$ of secretary algorithms.

\begin{proposition}\label{Thm:optimum_expansion}
\

\begin{enumerate}
\item\label{Thm:optimum_expansion_1} The optimal success probability of the best secretarial algorithm for the problem with $n$ items which uses a uniform random order from $\Pi_n$ is $OPT_n = 1/e + c_0/n + \Theta((1/n)^{3/2})$, where $c_0 = 1/2 - 1/(2e)$.
 
 \item\label{Thm:optimum_expansion_2} The success probability of any secretarial algorithm for the problem with $n$ items which uses any probabilistic distribution on $\Pi_n$ is at most $OPT_n = 1/e + c_0/n + \Theta((1/n)^{3/2})$.
 
 \item\label{Thm:optimum_expansion_3} There exists an infinite sequence of integers $n_1 < n_2 < n_3 < \ldots$, such that the success probability of any deterministic secretarial algorithm for the problem with $n \in \{n_1,n_2, n_3,\ldots\}$ items which uses any uniform probabilistic distribution on $\Pi_n$ with support $\ell < n$ is strictly smaller than $1/e$.
\end{enumerate}
\end{proposition}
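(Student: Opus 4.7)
My plan is to establish the three parts of Proposition~\ref{Thm:optimum_expansion} sequentially. For Part~\ref{Thm:optimum_expansion_1}, I would combine the classical Gilbert--Mosteller optimality of wait-and-pick under uniform random order on $\Pi_n$ \cite{GilbertM66} with Lemmas~\ref{lemma:f_expansion_1} and~\ref{l:deriv_bounds_1}. Gilbert--Mosteller gives $OPT_n = \max_{m \in [n-1]} f(n,m)$, and Lemma~\ref{l:deriv_bounds_1} identifies the maximizer as $m_0 \in \{\lfloor n/e \rfloor, \lfloor n/e \rfloor + 1\}$. Writing $m_0 = n/e + j$ with $|j| \le 1$ and substituting into Lemma~\ref{lemma:f_expansion_1} with $k=n$ yields
\[
OPT_n = f(n, m_0) = \frac{1}{e} + \frac{c_0}{n} + \Theta(n^{-3/2}),
\]
since the $ej^2/(2n)\cdot(1/n)=O(n^{-2})$ contribution is absorbed into the $\Theta(n^{-3/2})$ error.

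For Part~\ref{Thm:optimum_expansion_2}, I would use a probabilistic-adversary (Yao-style) argument. Fix any algorithm $A$ and distribution $\mathcal{D}$ on $\Pi_n$. Without loss of generality the adversary's value assignment is a bijection $v \in \Pi_n$ (only ranks matter to $A$). Consider the random adversary that draws $v \sim U(\Pi_n)$: the rank sequence observed by $A$ is $v \circ \pi$ for $\pi \sim \mathcal{D}$, and by left-invariance of the uniform measure on the symmetric group, $v \circ \pi$ is uniform on $\Pi_n$ for every fixed $\pi$, hence also marginally over $\pi \sim \mathcal{D}$. Therefore the expected success probability of $A$ under $v \sim U(\Pi_n)$ equals the success probability of $A$ on a uniform random permutation $\tau \sim U(\Pi_n)$, which by Part~\ref{Thm:optimum_expansion_1} is at most $OPT_n$. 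By the probabilistic method there is a specific deterministic assignment $v^\ast$ for which the success probability is at most $OPT_n$.

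Part~\ref{Thm:optimum_expansion_3} will be the main obstacle. Let $A$ be a deterministic algorithm using the uniform distribution on $\cL \subset \Pi_n$ with $|\cL| = \ell < n$, and let $S_A \subseteq \Pi_n$ denote the set of rank sequences on which $A$ is successful. The worst-case success probability equals $\min_v |v\cL \cap S_A|/\ell$, whereas averaging over $v \sim U(\Pi_n)$ shows that $\mathbb{E}_v[|v\cL \cap S_A|/\ell] = |S_A|/n! \le OPT_n$. The plan is twofold: (a) show that since $|\cL| \ll n!$, the map $v \mapsto |v\cL \cap S_A|$ is non-constant whenever $S_A$ is non-trivial (because $\cL$ is not a union of full $\Pi_n$-cosets), so some $v$ achieves strictly below the mean; and (b) construct an infinite sequence $n_1 < n_2 < \cdots$ along which integrality of $|v\cL \cap S_A|$, combined with the precise expansion $OPT_{n_k} = 1/e + c_0/n_k + \Theta(n_k^{-3/2})$ from Part~\ref{Thm:optimum_expansion_1}, forces the minimum strictly below $\ell/e$. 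The hard part will be making this drop quantitative: securing a uniform $\Omega(1/n_k)$ gap between mean and minimum over all deterministic $A$ and all supports $\cL$ with $|\cL|<n_k$, which I expect to require a careful combinatorial analysis of the symmetric-group action $v \mapsto v\cL$ together with a number-theoretic selection of $n_k$ (e.g., along a subsequence ensuring $\lfloor \ell |S_A|/n_k! \rfloor < \ell/e$ for every admissible pair $(A,\cL)$).
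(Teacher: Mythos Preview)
Your treatment of Parts~\ref{Thm:optimum_expansion_1} and~\ref{Thm:optimum_expansion_2} matches the paper's proof essentially verbatim: Part~1 combines Gilbert--Mosteller, Lemma~\ref{l:deriv_bounds_1}, and Lemma~\ref{lemma:f_expansion_1} at $k=n$ with $|j|\le 1$; Part~2 is the same averaging/swap-of-summation argument (the paper writes it as $\sum_\sigma q_\sigma \sum_\pi p_\pi \chi(\sigma,\pi)=\sum_\pi p_\pi \sum_\sigma q_\sigma \chi(\sigma,\pi)$ with $q_\sigma=1/n!$, which is exactly your left-invariance observation).

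For Part~\ref{Thm:optimum_expansion_3} your plan is misdirected, and the detour through non-constancy of $v\mapsto |v\cL\cap S_A|$ plus a ``uniform $\Omega(1/n_k)$ mean--minimum gap'' is both unnecessary and the wrong target. The paper's argument never compares the minimum to the mean. It uses only two facts: (i) by Part~\ref{Thm:optimum_expansion_2}, the worst-case success probability is at most $OPT_n$; and (ii) since the algorithm is deterministic and the distribution is uniform on $\ell$ permutations, that success probability is an element of $\{j/\ell : 0\le j\le \ell\}$. It then \emph{chooses} the sequence $(n_i,\ell_i)$ so that this finite set is disjoint from $[1/e,\,OPT_{n_i}]$. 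Concretely, set $\ell_i=10^i$, $n_i=10\ell_i$, and let $d_i$ be the integer formed by the first $i$ decimal digits of $1/e$; then $d_i/\ell_i<1/e<(d_i+1)/\ell_i$, and since $OPT_{n_i}-1/e=c_0/n_i+\Theta(n_i^{-3/2})<1/\ell_i$ for large $i$, no multiple of $1/\ell_i$ lands in $[1/e,OPT_{n_i}]$. Hence the success probability, being such a multiple and at most $OPT_{n_i}$, is strictly below $1/e$. No analysis of the orbit map $v\mapsto v\cL$ is needed.

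One caveat worth noting: the paper's proof as written handles the specific support size $\ell=\ell_i=n_i/10$, whereas the statement quantifies over all $\ell<n$. Your instinct to treat general $\ell$ is therefore more ambitious than what the paper actually establishes; but the mechanism you propose (mean--minimum gap via non-constancy) is not how one would close that gap either---the natural extension would still go through the rationality constraint $j/\ell$ and a more careful number-theoretic choice of $n_i$.
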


\begin{proof}
\noindent
Part \ref{Thm:optimum_expansion_1}. Gilbert and Mosteller~\cite{GilbertM66} proved
that under maximum entropy, the probability of success is maximized by wait-and-pick algorithm with some checkpoint. Another important property, used in many papers (c.f., Gupta and Singla \cite{GuptaSingla}), is that function $f(n,m)$ describes the probability of success of the wait-and-pick algorithm with  checkpoint $m$.

Consider wait-and-pick algorithms with checkpoint $m\in [n-1]$.
\ignore{
We first argue that for $j$ such that $|j|>\sqrt{n}$, function $f$ does not reach its global maximum. Indeed, for any $m\in [n-1]$ consider the difference
\[
f(n,m+1)-f(n,m) = \frac{1}{n} \cdot \left(H_{n-1}-H_m-1 \right)
\ ,
\]
which is positive until $H_m+1$ gets bigger than $H_{k-1}$, and remains negative afterwards.
Thus, there is only one local maximum, which is also global maximum, of function $f$ wrt variable $m$.
Lemma~\ref{lemma:f_expansion_1} for $k=n$ implies that 
$f\left(n, \lceil\frac{n}{e}-\sqrt{n}\right)\rceil \le f\left(n, \lceil\frac{n}{e}\rceil\right)$ and $f\left(n, \lceil\frac{n}{e}\rceil\right) \ge f\left(n, \lfloor\frac{n}{e}+\sqrt{n}\rfloor\right)$, therefore this unique local maximum is reached for some $|j|\le \sqrt{n}$.

Because $\frac{n}{e} - 1 \leq \left\lfloor \frac{n}{e} \right\rfloor$ and $\left\lfloor \frac{n}{e} \right\rfloor + 1 \leq \frac{n}{e} + 1$, by Lemma \ref{l:deriv_bounds_1}, this unique local, and also global, maximum of function $f(n,\cdot)$ is achieved for some $j$ such that $|j| \leq 1$. 
}
By Lemma \ref{l:deriv_bounds_1}, function $f(n,\cdot)$ achieves is maximum for checkpoint
$m\in \left\{\left\lfloor \frac{n}{e} \right\rfloor,\left\lfloor \frac{n}{e} \right\rfloor + 1\right\}$, and
by Lemma \ref{lemma:f_expansion_1}, taken for $k=n$, it could be seen that for any admissible value of $j$ (i.e., such that $n/e+j$ is an integer and $|j|\le 1$, thus also for $j\in \left\{\left\lfloor \frac{n}{e} \right\rfloor -n/e,\left\lfloor \frac{n}{e} \right\rfloor + 1-n/e\right\}$ for which $f(n,m)$ achieves its maximum), and for $c_0 = 1/2 - 1/(2e)$:
$
f\left(n, \frac{n}{e}+j\right)  
= \frac{1}{e} + \frac{c_0}{n}  + \Theta\left( \left( \frac{1}{n} \right)^{3/2} \right) \, .
$
\ignore{
By Lemma \ref{lemma:f_expansion_1}, taken for $k=n$, it could be seen that for any admissible value of $j$ (i.e., such that $n/e+j$ is an integer and $|j|\le \sqrt{n}$), 
\[
f\left(n, \frac{n}{e}+j\right)  
\le \frac{1}{e} + \frac{c_0}{n}  + \Theta\left( \left( \frac{1}{n} \right)^{3/2} \right) \, ,
\] 
where $c_0 = 1/2 - 1/(2e)$. 
This bound is met, up to $\Theta(n^{-3/2})$, for constant values of $j$, thus the above upper bound on $f$ is actually the $OPT_n$.}%
%

\noindent
Part \ref{Thm:optimum_expansion_2}. Consider a probabilistic distribution on set $\Pi_n$, which for every permutation $\pi\in\Pi_n$ assigns probability $p_\pi$ of being selected.
Suppose that the permutation selected by the adversary is $\sigma \in \Pi_n$. Given a permutation $\pi\in\Pi_n$ selected by the algorithm, let $\chi(\pi,\sigma) = 1$ if the algorithm is successful on the adversarial permutation $\sigma$ and its selected permutation $\pi$, and $\chi(\pi,\sigma) = 0$ otherwise.

Given a specific adversarial choice $\sigma \in \Pi_n$, the total weight of permutations resulting in success of the secretarial algorithm is
$
\sum_{\pi\in\Pi_n} p_{\pi} \cdot \chi(\sigma,\pi) \, .
$

Suppose now that the adversary selects its permutation $\sigma$ uniformly at random from $\Pi_n$. The expected total weight of permutations resulting in success of the secretarial algorithm is
$
\sum_{\sigma \in \Pi_n} q_{\sigma} \cdot \left(\sum_{\pi\in\Pi_n} p_{\pi} \cdot \chi(\sigma,\pi)\right)
$, where $q_{\sigma} = 1/n!$ for each $\sigma \in \Pi_n$.
The above sum can be rewritten as follows:
$ 
\sum_{\sigma \in \Pi_n} q_{\sigma} \cdot \left(\sum_{\pi\in\Pi_n} p_{\pi} \cdot \chi(\sigma,\pi)\right) =
 \sum_{\pi\in\Pi_n} p_{\pi} \cdot \left(\sum_{\sigma \in \Pi_n} q_{\sigma} \cdot \chi(\sigma,\pi)\right) \, ,
$ 
and now we can treat permutation $\pi$ as fixed and adversarial, and permutation $\sigma$ as chosen by the algorithm uniformly at random from $\Pi_n$, we have by Part \ref{Thm:optimum_expansion_1} that
$
\sum_{\sigma \in \Pi_n} q_{\sigma} \cdot \chi(\sigma,\pi) =
OPT_n
$. This implies that the expected total weight of permutations resulting in success of the secretarial algorithm is at most
$ 
\sum_{\pi\in\Pi_n} p_{\pi} \cdot \chi(\sigma,\pi) \leq \sum_{\pi\in\Pi_n} p_{\pi} \cdot OPT_n = OPT_n \, .
$ 
Therefore, there exists a permutation $\sigma\in\Pi_n$ realizing this adversarial goal. Thus it is impossible that there is a secretarial algorithm that for any adversarial permutation $\sigma\in\Pi_n$ has success probability $> OPT_n$.

\noindent
Part \ref{Thm:optimum_expansion_3}. Let $\ell_i = 10^i$ and $n_i = 10 \ell_i$ for $i \in \nats_{\geq 1}$. Let us take the infinite decimal expansion of $1/e = 0.367879441171442 ...$ and define as $d_i > 1$ the integer that is build from the first $i$ digits in this decimal expansion after the decimal point, that is, $d_1 = 3$, $d_2 = 36$, $d_3 = 367$, and so on. The sequence $d_i/\ell_i$ has the following properties: $\lim_{i \rightarrow +\infty} d_i/\ell_i = 1/e$, for each $i = 1,2,...$ we have that $d_i/\ell_i < 1/e < (d_i + 1)/\ell_i$ and, moreover, 
$j/\ell_i \not \in [1/e, 1/e + 1/n_i]$ for all $j \in \{0,1,2,\ldots, \ell_i\}$.

Let us now take any $n = n_i$ for some (large enough) $i \in \nats_{\geq 1}$ and consider the secretary problem with $n = n_i$ items. Consider also any deterministic secretarial algorithm for this problem that uses any uniform probability distribution on the set $\Pi_{n_i}$ with support $\ell_i$. By Part \ref{Thm:optimum_expansion_2} the success probability of this algorithm using this probability distribution is at most $OPT_{n_i} = 1/e + c_0/n_i + \Theta((1/n_i)^{3/2})$. Because the algorithm is deterministic, all possible probabilities in this probability distribution belong to the set $\{j/\ell_i : j \in \{0,1,2,\ldots, \ell_i\}\}$. We observe now that $j/\ell_i \not \in [1/e, 1/e + c_0/n_i + \Theta((1/n_i)^{3/2})]$ for $j \in \{0,1,2,\ldots, \ell_i\}$. This 
fact holds by the construction and by the fact that constant $c_0 \in (0,1)$, and we may also need to assume that $i \in \nats_{\geq 1}$ is taken to be large enough to deal with the term $\Theta((1/n_i)^{3/2})$. Thus the success probability of this algorithm is strictly below $1/e$.
\end{proof}

\section{Discussion and conclusions}\label{Sec:Concl}

Our Theorem~\ref{thm:lower-general} is a non-trivial extension using the concept of semitone sequences, proposed in \cite{KesselheimKN15}. One of the technical challenges we had to overcome is that $k$ picks of values by the algorithm, instead of one, create additional
dependencies in probabilistic part of the proof.

By Theorem~\ref{thm:lower-general}, entropy $\Omega(\log\log n)$ is necessary for any algorithm to achieve even a constant competitive ratio $1-\epsilon$, for $k=O(\log^a n)$, $a<1$, for $k$-secretary problem, implying that our upper bound in Theorem~\ref{thm:k_secretary_main_result} is tight.
Theorem~\ref{thm:lower} 
implies that entropy $\Omega(\log\log n)$ is necessary for any wait-and-pick algorithm to get a close-to-optimal competitive ratio $1-\Omega(\frac{1}{k^a})$, for {\em any} $k<\frac{n}{2}$, where $a\le \frac{1}{2}$.  
Moreover, the entropy $\Omega(\log k)$ is necessary, which could be $\Omega(\log n)$~for~$k = poly(n)$.

Theorem \ref{thm:k_secretary_main_result} implies an (almost)\footnote{The competitive ratio is optimal up to a factor of $\sqrt{\log k}$, see \cite{kleinberg2005multiple,GuptaSingla,AgrawalWY14} for a matching lower bound.} optimal competitive ratio $1-O(\sqrt{\log k/k})$, with minimal entropy $O(\log \log n)$, when $k < \log n/\log \log n$, 
for the $k$-secretary problem in the non-uniform arrival model. 
This improves over~\cite{KesselheimKN15}, who achieved competitive ratio $(1-O(1/k^{1/3})-o(1))$ using entropy $O(\log \log n)$ and constructing the distribution for $k=O((\log\log\log n)^{\epsilon})$ only. Our construction improves the range of working values $k$ exponentially.
Optimality of the entropy follows by our new lower bounds in Theorem \ref{thm:lower-general} and \ref{thm:lower}. Such lower bounds were not known before for the $k$-secretary problem, for $k>1$.
For the $1$-secretary problem, Theorem~\ref{thm:1_secretary_results} improves, over doubly-exponentially, on the additive error to $\frac{1}{e}$ of 
$\omega(\frac{1}{(\log\log\log(n))^{c}})$ proposed in ~\cite{KesselheimKN15,KesselheimKN15-arxiv}, which holds for any positive constant $c < 1$, by achieving a polynomial-time construction with the additive error $\Theta(\frac{\log\log{n}}{\log^{1/2}{n}})$.

 We have already applied our new derandomization techniques to two different secretarial problems: $1$-secretary and multiple-choice secretary. Further promising candidates are the online bipartite matching and matroid secretary problems, 
 see \cite{GuptaSingla}.

\addcontentsline{toc}{section}{References}

\bibliographystyle{plainurl}

\bibliography{bibliography}

 


\end{document}